\theoremstyle{plain}
\newtheorem{theorem}{Theorem}[section]
\newtheorem{corollary}[theorem]{Corollary}
\newtheorem{lemma}[theorem]{Lemma}
\newtheorem{proposition}[theorem]{Proposition}
\theoremstyle{definition}
\newtheorem{definition}[theorem]{Definition}
\newtheorem{definitions}[definition]{Definitions}
\newtheorem{convention}[definition]{Convention}
\newtheorem{example}[definition]{Example}
\newtheorem{examples}[definition]{Examples}
\theoremstyle{remark}
\newtheorem{notation}[definition]{Notation}
\newtheorem{notations}[notation]{Notations}
\newtheorem{remark}[notation]{Remark}
\setlist[enumerate,1]{label={(\roman*)}}
\newcommand{\mathwrap}[1]{\ensuremath{#1}\xspace}
\newcommand{\RR}{\mathbb{R}}
\newcommand{\NN}{\mathbb{N}}
\newcommand{\im}{\mathrm{im}\,}
\renewcommand{\H}{\mathrm{H}}
\newcommand{\Fun}[1]{\mathwrap{\mathbf{Fun}(#1)}}
\newcommand{\Set}{\ensuremath{\mathbf{Set}}\xspace}
\newcommand{\Top}{\ensuremath{\mathbf{Top}}\xspace}
\newcommand{\Toplc}{\ensuremath{\mathbf{Top}^{\mathbf{lc}}}\xspace}
\newcommand{\PCSh}{\ensuremath{\mathbf{PCSh}}\xspace}
\newcommand{\CSh}{\ensuremath{\mathbf{CSh}}\xspace}
\newcommand{\El}{\ensuremath{\mathbf{El}}\xspace}
\renewcommand{\Vec}[1][]{\ensuremath{\mathbf{Vec^{#1}}}\xspace}
\newcommand{\N}{\mathcal{N}}
\newcommand{\U}{\mathcal{U}}
\newcommand{\CShsp}{\ensuremath{\mathbf{CSh}^\mathbf{sp}}\xspace}
\newcommand{\M}{\mathcal{M}}
\newcommand{\defn}[1]{\uline{#1}}
\renewcommand{\P}{\mathwrap{\mathbf{P}}}
\newcommand{\T}{\mathwrap{\mathbf{T}}}
\newcommand{\sT}{\mathscr{T}}
\newcommand{\cT}{\mathcal{T}}
\newcommand{\cat}[1]{\mathwrap{\mathbf{#1}}}
\newcommand{\Q}{\cat{Q}}
\newcommand{\meet}{\wedge}
\newcommand{\join}{\vee}
\newcommand{\pro}[1]{#1^\meet}
\newcommand{\ind}[1]{#1^\join}
\newcommand{\Meet}{\bigwedge}
\renewcommand{\Join}{\bigvee}
\newcommand{\Op}{\mathbf{Op}}
\newcommand{\pOp}{\mathbf{pOp}}
\newcommand{\Pow}{\mathscr{P}}
\newcommand{\defined}{\coloneqq}
\newcommand{\lscale}[2]{\xspace#1#2\xspace}
\newcommand{\mchoicelscale}[2]{\mathchoice%
    {\displaystyle\lscale{#1\!}{\displaystyle#2}}
    {\lscale{#1\!}{#2}}
    {\scriptstyle\lscale{#1}{\scriptstyle#2}}
    {\scriptscriptstyle\lscale{#1}{\scriptscriptstyle#2}}}
\newcommand{\up}[1]{\mchoicelscale{\uparrow}{#1}}
\newcommand{\down}[1]{\mchoicelscale{\downarrow}{#1}}
\newcommand{\dis}{\mathscr{D}}
\newcommand{\oup}[1]{\mchoicelscale{\upharpoonleft}{#1}}
\newcommand{\odown}[1]{\mchoicelscale{\downharpoonleft}{#1}}
\newcommand{\id}{\mathsf{1}}
\newcommand{\sU}{\mathscr{U}}
\newcommand{\cech}{\check{\mathcal{C}}}
\newcommand{\ball}{\mathrm{ball}}
\newcommand{\dist}{\mathrm{d}}
\newcommand{\e}{\varepsilon}
\renewcommand{\ll}{\left}
\newcommand{\rr}{\right}
\renewcommand{\b}[1]{\ll[#1\rr]}
\renewcommand{\L}{\mathcal{L}}
\newcommand{\op}[1]{#1^{\mathsf{op}}}
\tikzset{%
    symbol/.style={%
        draw=none,
        every to/.append style={%
            edge node={node [sloped, allow upside down, auto=false]{$#1$}}}
    }
}
\newcommand{\tikzwrap}[2]{\tikzset{ampersand replacement=\&}\begin{tikzcd}#2#1\end{tikzcd}}
\NewDocumentCommand{\adjoint}{ m m O{""} O{""} O{} }{
\tikzwrap{
    #1  \arrow[r, shift left=0.62ex, phantom, ""{name=x,above}]
        \arrow[r, shift left=0.85ex, #3]
  \& #2 \arrow[l, shift left=0.42ex, phantom, ""{name=y,below}]
        \arrow[l, shift left=0.85ex, #4]
        \arrow[from=x, to=y, symbol={\scriptstyle\dashv}]
}{#5}}
\newcommand{\supp}[1]{\underline{#1}}
\newcommand{\Hom}{\mathrm{Hom}}
\newcommand{\proj}{\varprojlim}
\renewcommand{\lim}{\varprojlim}
\newcommand{\colim}{\varinjlim}
\newcommand{\asim}{%
\mathchoice{\raisebox{-4pt}{$\displaystyle\sim$}}
           {\raisebox{-4pt}{$\sim$}}
           {\raisebox{-2pt}{$\scriptstyle\sim$}}
           {\raisebox{-2pt}{$\scriptscriptstyle\sim$}}}
\newcommand{\longisorightarrow}{\stackrel{\asim}{\longrightarrow}}
\newcommand{\F}{\mathcal{F}}
\newcommand{\h}{\mathrm{h}}
\newcommand{\pt}{\mathbf{pt}}
\newcommand{\Nb}{\mathbf{Op}}
\newcommand{\sE}{\mathscr{E}}
\newcommand{\yields}{\rightsquigarrow}
\newcommand{\sM}{\mathscr{M}}
\newcommand{\ob}{\mathrm{Ob}}
\newcommand{\mor}{\mathrm{Mor}}
\newcommand{\Ob}{\ob}
\newcommand{\Mor}{\mor}
\newcommand{\bdot}{\boldsymbol{\cdot}}
\newcommand{\X}{\mathcal{X}}
\newcommand{\tightoverset}[2]{\mathop{#2}\limits^{\vbox to -.5ex{\kern-0.75ex\hbox{$#1$}\vss}}}
\newcommand{\tightunderset}[2]{\mathop{#2}\limits_{\vbox to .5ex{\kern-1.25ex\hbox{$#1$}\vss}}}
\newcommand{\Longuto}{\tightunderset{\bdot}{\Longrightarrow}}
\newcommand{\longuto}{\tightunderset{\bdot}{\longrightarrow}}
\let\vvec\vec
\DeclareRobustCommand{\cev}[1]{%
  \mathpalette\do@cev{#1}%
}
\newcommand{\do@cev}[2]{%
  \fix@cev{#1}{+}%
  \reflectbox{$\m@th#1\vvec{\reflectbox{$\fix@cev{#1}{-}\m@th#1#2\fix@cev{#1}{+}$}}$}%
  \fix@cev{#1}{-}%
}
\newcommand{\fix@cev}[2]{%
  \ifx#1\displaystyle
    \mkern#23mu
  \else
    \ifx#1\textstyle
      \mkern#23mu
    \else
      \ifx#1\scriptstyle
        \mkern#22mu
      \else
        \mkern#22mu
      \fi
    \fi
  \fi
}
\title{Topological Hierarchical Decompositions}
\author[1]{Ian Stewart Joyce\thanks{\href{mailto:ian.joyce.1@us.af.mil}{\tt ian.joyce.1@us.af.mil}}}
\author[1]{Grant Erdmann\thanks{\href{mailto:grant.erdmann@us.af.mil}{\tt grant.erdmann@us.af.mil}}}
\author[1]{Kirk P. Gardner\thanks{\href{mailto:k.gardner48@gmail.com}{\tt k.gardner48@gmail.com}}}
\author[2]{Ryan Kramer\thanks{\href{mailto:ryan@minedxai.com}{\tt ryan@minedxai.com}}}
\author[2]{Kyle Siegrist\thanks{\href{mailto:kyle@minedxai.com}{\tt kyle@minedxai.com}}}
\affil[1]{AFRL}
\affil[2]{Mined XAI}
\date{}
\begin{document}

\maketitle

\begin{abstract}
  Topological data analysis is an emerging field that applies the study of topological invariants to data.
  Perhaps the simplest of these invariants is the number of connected components or \emph{clusters}.
  In this work, we explore a topological framework for cluster analysis and show how it can be used as a basis for explainability in unsupervised data analysis.
  Our main object of study will be hierarchical data structures referred to as \emph{Topological Hierarchical Decompositions (THDs)}~\cite{brown18heloc,holmes20topological,brown22topological}.
  We give a number of examples of how traditional clustering algorithms can be topologized,
  and provide preliminary results on the THDs associated with Reeb graphs~\cite{desilva16categorified} and the mapper algorithm~\cite{nicolau11topology}.
  In particular, we give a generalized construction of the mapper functor as a ``pixelization''~\cite{botnan20relative}
  of a cosheaf as alluded to in~\cite{brown21probabilistic} in order to generalize multiscale mapper~\cite{dey16multiscale}.
\end{abstract}

\tableofcontents

\section{Introduction}\label{sec:intro}

Clustering is an essential tool in modern data analysis.
Broadly speaking, the goal of clustering is to break data up into connected components.
In principle, this is done by imposing a suitable notion of connectivity, often defined by a metric, which reflects the local structure of the underlying space.
However, the reliability of the resulting model is dependent not only on a suitable measure of connectivity, but also on a correct choice of scale.
For example, in density-based clustering, data is modeled as a finite metric space in which the connectivity between two points is parameterized by not only the distance between them, but also the local density.
If the number of points sampled in a neighborhood does not reflect the local density of the underlying space
then important topological features may go unnoticed.

\begin{figure}[ht]
  \centering
  \includegraphics[width=\textwidth]{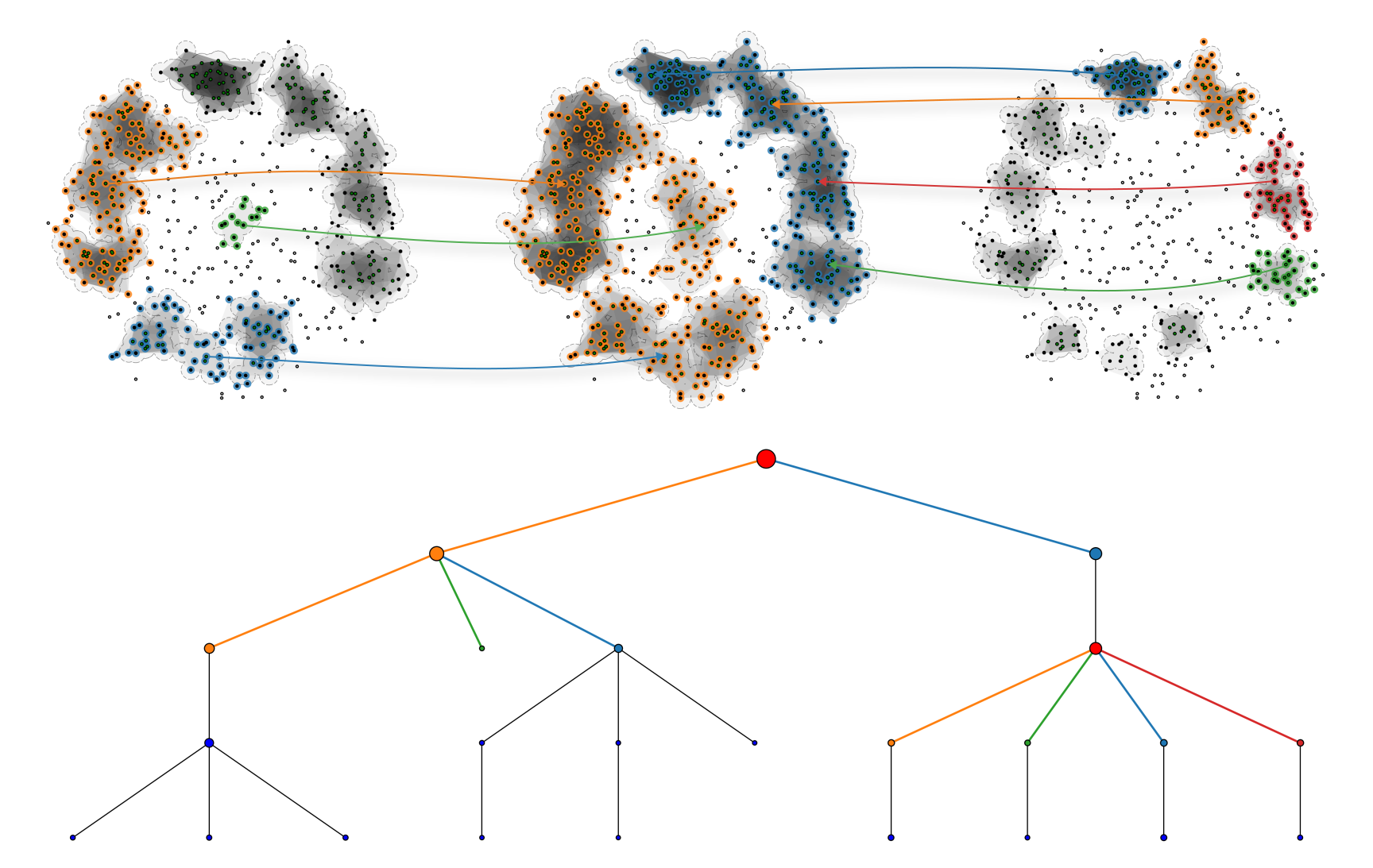}
  \caption{Hierarchical density-based clustering in $\RR^2$ with increasing radius (MNIST~\cite{grother95nist,cohen17emnist} 512 points, 2D MDS~\cite{carroll98multidimensional}).
          The left and right clusterings (top) correspond to the left and right sub-trees of the dendrogram (bottom),
          which serves as our prototypical Topological Hierarchical Decomposition (THD).
          The input is a filtration of spaces given by the union of metric balls centered at the sample points.
          Vertices of the THD correspond to clusters at each step of the filtration,
          with edges showing how these clusters are merged as the cover radius is increased.}\label{fig:dendrogram}
\end{figure}

Hierarchical clustering approaches a partial solution to this issue by parameterizing the choice of scale.
The result is a hierarchy of partitions with nodes corresponding to connected components and edges depicting how these components split (or merge) as the scale varies monotonically.
Paths from the root track data points with increasing (or decreasing) resolution, and the sequence of clusters containing a given sample point provides a novel signature for the data (Figure~\ref{fig:dendrogram}).
However, not all data admits a pointwise measure of similarity required for many standard approaches, and choosing an improper measure may lead to a model that does not accurately represent the underlying space.
That is, meaningful analysis requires not only considering scale, but also a qualitative measure of connectivity that reflects the expected local structure.

In this work, we explore a topological framework for clustering in which the connectivity between data points is qualified by open subsets of a larger topological space.
This perspective allows for powerful abstractions such as sheaves and cosheaves to be applied to clustered data, and the rich theory associated with these tools,
as well as recent progress on persistent homology~\cite{botnan20relative}, Reeb graphs~\cite{desilva16categorified}, and generalized merge trees~\cite{curry21decorated},
provide a rigorous framework for explainability.
Our results will focus on a topological summary referred to as a \emph{Topological Hierarchical Decomposition (THD)}~\cite{brown18heloc,holmes20topological,brown22topological}
generalizing dendrograms of traditional hierarchical clustering
whose accuracy with respect to the underlying ground truth can be theoretically verified.
Moreover, the topological perspective lends itself to a ``pointless'' model that focuses on the lattice of open sets instead of the individual data points, allowing models to be naturally scaled.

\subsection{Related Work}\label{sec:related}

Given a continuous function $f : Y\to X$ the pre-images of each open set $U$ of the codomain $X$ are open sets $f^{-1}(U)$ of the domain $Y$.
Using the topology of the space $Y$, these open sets can be individually clustered and arranged as a topological space over $X$ known as its \emph{Reeb graph} of $f$ (Figure~\ref{fig:reeb}, middle).
In effect, the Reeb graph provides a simplification or \emph{projection} of the space $Y$ through the lens of the continuous function $f$ which serves as a filter.

\begin{figure}[ht]
  \centering
  \includegraphics[width=\textwidth]{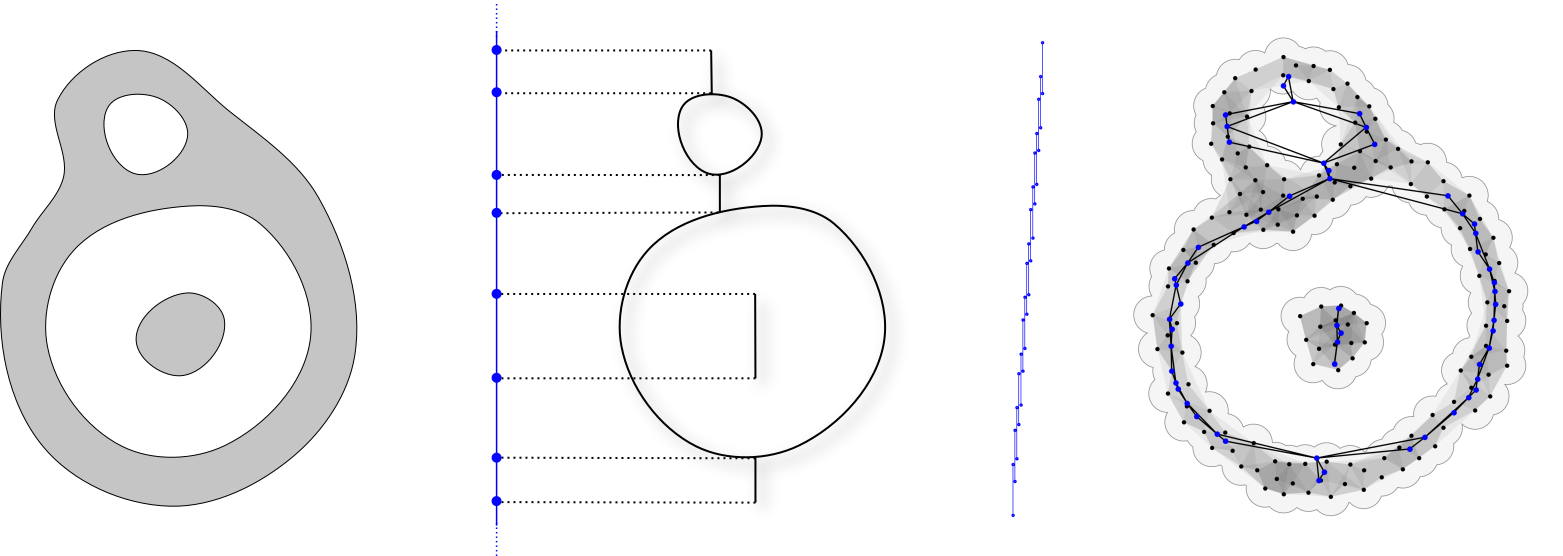}
  \caption{A subspace $X$ of $\RR^2$ (left),
          the Reeb graph of the projection $f : X\to \RR$ onto the $y$-coordinate of $X$ (mid),
          and a Mapper Reeb graph of a cover of $\im f\subset \RR$ (right).}\label{fig:reeb}
\end{figure}

In practice, this construction is often restricted to a finite cover of the space $X$.
The construction detailed above, along with this restriction, is known as the mapper algorithm~\cite{nicolau11topology},
and provides a \emph{pixelization}~\cite{botnan20relative} of the Reeb cosheaf
that can be efficiently represented as the nerve of the resulting cover (Figure~\ref{fig:reeb}, right).

Recent work on the categorification of Reeb graphs~\cite{desilva16categorified}
and persistence modules~\cite{bubenik14categorification,patel18generalized,botnan20relative}
has allowed for powerful theoretical tools from persistent homology such as the \emph{interleaving distance}
to be used in order to provide stability results for mapper~\cite{munch15convergence,carriere18structure,brown21probabilistic}.
That is, the traditional mapper algorithm computes an approximation of the Reeb graph~\cite{munch15convergence,botnan20relative,brown21probabilistic},
and the quality of the approximation is dependent on the choice of function as well as the choice of cover%
\footnote{
  Recent work by Carriere and Oudot~\cite{carriere18structure} analyze the stability of the mapper algorithm with respect to these parameters,
  but we will not address the stability of mapper in this work.}.
Multiscale mapper~\cite{dey16multiscale} proposes a partial solution to this problem by instead considering a filtration or \emph{tower} of covers.
That is, multiscale mapper looks at how the output of the mapper algorithm evolves as the cover is refined.
The result is a sequence of simplicial complexes connected by well-defined simplicial maps (Figure~\ref{fig:multiscale_mapper}).

Previous work on THDs analyzed Home Equity Line of Credit (HELOC) financial data~\cite{brown18heloc,brown22topological} and averaged sentence embeddings~\cite{holmes20topological}.
For the purposes of this work, it suffices to define a THD as the merge tree of a filtration of spaces, as generalized by Curry et. al~\cite{curry21decorated},
with the THDs in previous work corresponding to the (generalized) merge tree of the filtration of simplicial maps associated with multiscale mapper.
In future work, we will explore examples of THDs that do not necessarily arise from the connected components cosheaf~\cite{desilva16categorified}.
For example, we conjecture that the (co)homology functors have associated THDs
that are intimately related to the barcodes or diagrams of persistence modules~\cite{bauer20persistence,chubet22theory,gardner22verified}
and their decorated counterparts~\cite{curry21decorated},
with $\H_0 : \Top\to \Vec$ corresponding to the $\pi_0 :\Top\to \Set$.

\begin{figure}[ht]
  \includegraphics[width=\textwidth]{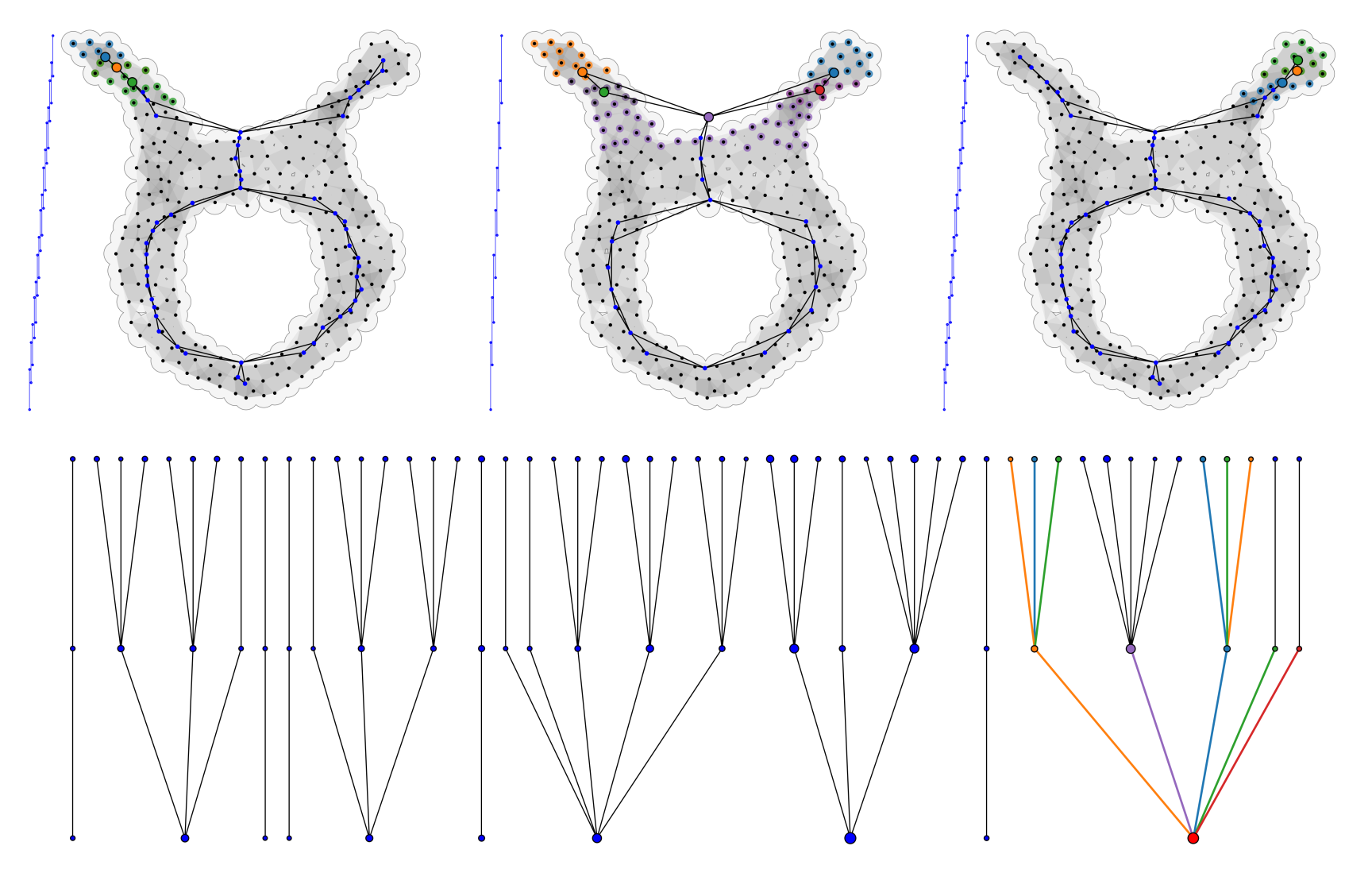}
  \caption{Multiscale mapper (top) and the corresponding THD (bottom).
          The filter function $f_P : P\to \RR$ takes each point of a finite sample in $\RR^2$ to its $y$-coordinate.
          The covers of $\im f\subset \RR$ (left of each, in blue) on the left and right (top) are refinements of the middle.}\label{fig:multiscale_mapper}
\end{figure}

\subsection{Contributions}\label{sec:contributions}

The theoretical contributions of this work are in the field of topological data analysis
and will focus on two recent results that allude to a more general construction of mapper.
\begin{enumerate}
  \item The theory of relative interleavings by Botnan, Curry, and Munch~\cite{botnan20relative}
    provide a novel interpretation of the interleaving distance for generalized persistence modules.
    In particular, the authors introduce the notion of a pixelization of a functor by the pushforward of the pullback of a monotone function,
    and provide a construction on the so-called mapper functor as a pixelization of the Reeb cosheaf.
  \item More recent results by Brown, Bobrowski, Munch, and Wang~\cite{brown21probabilistic}
    provides an explicit construction of the mapper functor for real-valued functions in order to define the
    \emph{enhanced mapper graph} as the display space of the resulting cosheaf.
    Importantly, the authors note that the mapper functor may be defined more generally as the pushforward of the pullback of a continuous function $\eta : X\to \N_\U$.
    That is, as a pixelization of the Reeb cosheaf as in the work of Botnan, Curry, and Munch~\cite{botnan20relative}.
\end{enumerate}

Our main contributions are as follows:
\begin{enumerate}
  \item In Section~\ref{sec:mapper}, we give a construction of the mapper functor $\M_\U : \PCSh(X)\to \CShsp(X)$ of~\cite{brown21probabilistic}
    as a pixelization of a precosheaf by a cover as alluded to in~\cite{botnan20relative} and~\cite{brown21probabilistic} (Theorem~\ref{thm:mapper}).
  \item In Section~\ref{sec:multi}, we extend this construction to multiscale mapper~\cite{dey16multiscale} and show that the resulting functor commutes with finite limits (Theorem~\ref{thm:multi_iso}).
  \item In Section~\ref{sec:multi_examples}, We show how the output of multiscale mapper can be expressed as the merge tree of a filtration of spaces (Proposition~\ref{prop:multi_thd}),
    providing an example of a \emph{Topological Hierarchical Decomposition (THD)} that corresponds to a divisive hierarchical clustering scheme.
\end{enumerate}
Although these results are technical, we provide a detailed discussion of clustering from the topological perspective that is supported by a faithful implementation of many of the ideas presented in this work.

Following de Silva et al.~\cite{desilva16categorified}, Munch and Wang~\cite{munch15convergence} and their more recent work with Brown~\cite{brown21probabilistic},
our main object of study will be \emph{cosheaves} which provide a way to associate data with the open neighborhoods of a topological space in a way that enforces \emph{consistency}.
Throughout, we pay specific attention to earlier work by Funk~\cite{funk95display} that established the theory of cosheaves and their display locales.
The resulting spatial cosheafification functor~\cite{woolf08fundamental} aligns this work with the dual theory of sheaves of topological spaces,
roughly corresponding to sheafification by factoring a presheaf through the corresponding \'{e}tale space (see Mac Lane and Moerdijk~\cite{maclane92sheaves} or Kashiwara and Schapira~\cite{kashiwara05categories}).

\subsection{Overview}\label{sec:overview}

We will begin with a review of topological spaces and cosheaves in Section~\ref{sec:background}%
\footnote{We refer the interested reader to Mac Lane~\cite{mac13categories,maclane92sheaves} or Kashiwara and Schapira~\cite{kashiwara05categories} for a full treatment.}.
For additional background on the relevant categories of arrows and functors, as well as adjoints and limits see Appendix~\ref{sec:cats2}.
In Section~\ref{sec:clustering_and_thds}, we discuss clustering from the topological perspective and introduce the (generalized) merge tree~\cite{curry21decorated} as our prototypical THD.
In Section~\ref{sec:mapper}, we review the adjoint Reeb and display space functors~\cite{funk95display}
and show that the mapper (pre-)cosheaf $\M_\U(F)$ of a precosheaf $F$ on $X$, as defined in~\cite{brown21probabilistic},
is equivalent to a pixelization~\cite{botnan20relative} $\eta^*\eta_*(F)$ of $F$ by a cover $\U$ (see also Appendix~\ref{sec:inverse}).
We then generalize the mapper construction to filtrations of covers and spaces and discuss their associated THDs in Section~\ref{sec:multi} (see also Appendix~\ref{sec:reeb_future}).

\subsection{Acknowledgments}

This research was supported in part by an appointment to the Department of Defense (DOD) Research Participation Program
administered by the Oak Ridge Institute for Science and Education (ORISE) through an interagency agreement between the U.S.
Department of Energy (DOE) and the DOD. ORISE is managed by ORAU under DOE contract number DE-SC0014664. All opinions
expressed in this paper are the author's and do not necessarily reflect the policies and views of DOD, DOE, or ORAU/ORISE.

We thank Jordan DeSha and Benjamin Filippenko for useful conversations and input on the manuscript.

\section{Preliminaries}\label{sec:background}

The category $\Set$ has sets as objects and functions as arrows.
A \defn{partially ordered set (poset)} $\P \defined (P,\leq)$ is a category whose objects form a set with unique arrows
$b\to c$ denoted $b\leq c$ in which $b\leq c$ and $b\geq c$ implies $b=c$.
A functor $\sigma : \P\to \Q$ between posets is a \defn{monotone function} in which $\sigma(b)\leq \sigma(c)$ in $\Q$ for all $b\leq c$ in $\P$.
The \defn{meet} (resp. \defn{join}) of a pair $b,c\in\P$ is the \emph{greatest lower bound} $b\meet c$ (resp. \emph{least upper bound} $b\join c$) of $b$ and $c$ in \P in which
\begin{align*}
                a   &\leq b\meet c  \leq b,c\ \text{ for all }\ a  \leq b,c\\
  \big(resp.\ \ b,c &\leq b\join c  \leq d\ \ \ \text{ for all }\ d  \geq b,c\big).
\end{align*}
A poset is a \defn{lattice} if it has all (binary) meets and joins, and \defn{totally ordered} if either $b\leq c$ or $b\geq c$ for all elements $b$ and $c$.

Given a subset $S\subseteq \P$, we write $a \leq S$ (resp. $d \geq S$) if $a\leq s$ (resp. $d\geq s$) for all $s\in S$.
The meet (resp. join) of a subset $S$ is an element $\Meet S$ (resp. $\Join S$) of \P in which
\begin{align*}
                a   &\leq \Meet S  \leq S\ \text{ for all }\ a  \leq S\\
  \big(resp.\ \ S &\leq \Join S  \leq d\ \text{ for all }\ d  \geq S\big).
\end{align*}
Given a functor $F :\cat{C}\to \P$ from a category $\cat{C}$ to a poset $\P$,
let $\Meet F = \Meet_{C\in\cat{C}} F(C)$ (resp. $\Join F = \Join_{C\in\cat{C}} F(C)$) denote the meet (resp. join) of $F$ in \P.

For any $A\in\Set$, the powerset $\Pow_A$ of $A$ is the lattice of subsets of $A$ ordered by inclusion.
For any function $f : A\to B$ in \Set let $f(-) : \Pow_A\to \Pow_B$ denote the \emph{image} of $f$ defined for $S\subseteq A$ as the monotone function $f(S) = \big\{f(a)\in B\mid a\in S\big\}$,
and let $f^{-1} : \Pow_B\to \Pow_A$ denote the \emph{pre-image} of $f$ defined for $T\subseteq B$ as the monotone function $f^{-1}(T) = \big\{ a\in A\mid f(a)\in T\big\}$.
Given a functor $F : \cat{C}\to \Set$ from a category \cat{C} to \Set, let $\Meet F \defined \bigcap_{C\in\cat{C}} F(C)$ (resp. $\Join F \defined \bigcup_{C\in\cat{C}} F(C)$)
denote the intersection (resp. union) of the $F(C)$ in \Set.

\subsection{Topological Spaces}\label{sec:top}

\begin{definition}[Topological Space]
  A \defn{topological space} is a pair $(X,\Op_X)$ where $X$ is a set and $\Op_X$ is a sublattice of $\Pow_X$ consisting of \emph{open subsets} $U\subseteq X$ such that
  \begin{enumerate}
    \item $\emptyset$ and $X$ are open,
    \item the intersection of finitely many open subsets is open, and
    \item the union of arbitrarily many open sets is open.
  \end{enumerate}
  We write $X$ to denote a topological space $(X,\Op_X)$ when no confusion may occur.
\end{definition}

A function $f : Y\to X$ is \defn{continuous} if the pre-image $f^{-1}(U)$ is open in $Y$ for each open set $U\in\Op_X$.
The category $\Top$ has topological spaces as objects and continuous functions as arrows.
A continuous function $f : Y\to X$ is
\defn{essential} if its pre-image $f^{-1} : \Op_X\to \Op_Y$ has a left adjoint $f_! : \Op_Y\to \Op_X$ (see Appendix~\ref{sec:limits}).

\begin{definition}[The Subspace Topology]
  A subspace of a topological space $X$ is a topological space given by a subset $S\subset X$ with the subspace topology $\Op_{S}\defined \{S\cap U\mid U\in\Op_X\}$.
  For any subspace $S\subset X$, let $\iota_S : S\hookrightarrow X$ denote the canonical embedding with $\iota_S^{-1}(U) = S\cap U$,
  and for any continuous function $f : X\to Y$ let $f |_S\defined f\circ \iota_S : S\to Y$ denote the restriction of $f$ to $S$.
\end{definition}

\begin{definitions}
  For any element $p$ of a poset $\P$ let $\up{p} \defined \{q\in\P\mid p\leq q\}$ (resp. $\oup{p} = \{q\in\P\mid p < q\}$) denote the \defn{closed (resp. open) principal up set} at $p$
  and let $\down{p} \defined \{q\in \P\mid p\geq q\}$ (resp. $\odown{p} = \{q\in\P\mid p > q\}$) denote the \defn{closed (resp. open) principal down set} at $p$.
  \begin{enumerate}
    \item The \defn{specialization (Alexandroff) topology} on a poset \P is the topology generated by closed principal up sets $\up{p}$ for $p\in \P$.
      Dually, the \defn{\emph{co}specialization topology} on $\P$ is the topology generated by closed principal \emph{down} sets $\down{p}$ for $p\in \P$.
    \item The \defn{right (resp. left) order topology} on a \emph{totally ordered} set $\T$ is the topology generated by \emph{open} principal up sets $\oup{t}$ (resp. $\odown{t}$) for $t\in\T$.
      Naturally, the \defn{order topology} on $\T$ is the topology generated by the open principal up \emph{and} down sets.
  \end{enumerate}
\end{definitions}

\begin{example}[The Real Numbers]
  The standard topology on the real numbers $\RR$ is generated by open intervals $(a,b)$ for all $a < b$ in $\RR$
  and is equivalent to the order topology on the totally ordered set $\cat{R} = (\RR,\leq)$.
  Let $\cat{R}_+$ denote the totally ordered set of non-negative real numbers and let $\RR_+$ (resp. $\pro{\RR_+}$, $\ind{\RR_+}$)
  denote the topological space of non-negative real numbers endowed with the order (resp. left order, right order) topology.
\end{example}

\begin{definition}[The Connected Components Functor]
  A topological space $X$ is \defn{connected} if it is not the disjoint union of two nonempty open sets,
  and two elements $x,y\in X$ are connected if they are both contained in a connected open set.
  The \defn{connected component} $\b{x}_X$ of a point $x\in X$ is the set of all points $y\in X$ that are connected to $x$.
  The \defn{connected components functor} $\pi_0 : \Top\to \Set$ takes each topological space to the set of its connected components.
\end{definition}

\begin{definitions}
  For any $x\in X$ let $\Op_X(x)\defined \{U\in\Op_X\mid x\in U\}$ denote the subposet of open neighborhoods of $x$.
  $X$ is \defn{locally finite} if $\Op_X(x)$ is finite for all $x\in X$ and \defn{locally connected} if it admits a basis of connected open sets.
  Let $\Toplc$ denote the full subcategory of $\Top$ restricted to locally connected topological spaces.
\end{definitions}

\begin{remark}
  The restriction of $\pi_0$ to locally connected topological spaces is left adjoint to the discrete space functor $\Delta : \Set\to \Top$ taking each set $A$ to the topological space $\Delta(A) \defined (A,\Pow_A)$,
  which is itself left adjoint to the forgetful functor $\Gamma : \Top\to \Set$.
\end{remark}

\subsection{Covers and Nerves}\label{sec:covers}

Let $X$ be a topological space.
A \defn{cover} of $X$ is a collection $\U = \big\{ \U(i) \big\}_{i\in I}$ of open sets $\U(i)\in \Op_X$ indexed by a set $I$ such that $X = \bigcup_{i\in I} \U(i)$.
In the following, we will regard covers as functors $\U : I\to \Op_X$ from a discrete category $I$ such that $\U(i)\neq \U(j)$ for all $i\neq j$ in $I$.
$\U$ is a \defn{good (open) cover} if the intersection $\bigcap_{i\in\sigma} \U(i)$ of cover sets is empty or contractible for all finite $\sigma\subseteq I$,
and \defn{locally finite} if each $x\in X$ is contained in finitely many cover sets.

\begin{definition}[Nerve]
  The \defn{nerve} of a cover $\U$ is a poset of \emph{finite} subsets $\sigma\subset I$ ordered by inclusion defined
  \[ \N_\U \defined\big\{\sigma\subseteq I\mid \bigcap_{i\in\sigma} \U(i) \neq \emptyset\big\}.\]
  We endow the nerve of a cover with the specialization topology $\Op_{\N_\U}$ generated by principal up sets
  $\up{\sigma}\defined \{ \tau\in\N_\U\mid \tau\supseteq \sigma\}$.
\end{definition}

\begin{notations}\label{notation:nerve}
  For any locally finite cover $\U : I\to \Op_X$ let $\eta_\U : X\to \N_\U$ denote the canonical map that takes each point $x\in X$ to the element (simplex) of $\N_\U$ corresponding to cover sets containing $x$:
  \[ \eta_\U(x) \defined \{ i\in I\mid x\in \U(i)\}.\]
  We omit the subscript and write $\eta : X\to \N_\U$ when no confusion may occur.
  For any $\sigma\in\N_\U$ let $\U_\sigma\defined \bigcap_{i\in\sigma} \U(i)$ denote the corresponding intersection of cover sets,
  and for any $x\in X$ let $\U_x\defined \U_{\eta(x)}$ denote the smallest open set containing $x$ that is supported by $\U$.
  For any open set $S\in\Op_{\N_\U}$ let $\U_S \defined \bigcup_{\sigma\in S}\U_\sigma$
  and for any $U\in\Op_X$ let $\U_U \defined \bigcup_{x\in U} \U_x$.
\end{notations}

\begin{convention}\label{con:surjective}
  We implicitly assume that  $\U_\sigma \neq \U_\tau$ for all $\sigma\neq \tau$ in $\N_\U$ so that $\eta$ is surjective
  and $\N_\U$ is in bijective correspondence to the \emph{basic} open cover $\sigma\mapsto \U_\sigma$ of $X$ indexed by $\N_\U$.
  In practice, this convention amounts to removing any redundancies in the cover which may be formalized by defining the nerve as a poset of cover sets $\U_\sigma$ as in~\cite{brown21probabilistic}.
\end{convention}

We often require that $\U$ be a \emph{locally finite good open cover}
so that $\U_x$ is a contractible (and therefore connected) open set for all $x\in X$.
This allows us to make use of the following preliminary result
which implies that $\eta^{-1}$ has a left adjoint $\eta_!\dashv \eta^{-1}$ defined as the upward closure of the image
$\eta_!(U) = \up{\eta(U)} = \bigcup_{x\in U} \up{\eta(x)}$.
Its proof can be found in Appendix~\ref{sec:proofs}.

\begin{proposition}\label{prop:eta_open}
  If $\U$ is a locally connected good open cover of $X$ then $\eta$ is essential.
\end{proposition}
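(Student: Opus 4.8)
The plan is to produce the left adjoint explicitly, namely the map $\eta_! : \Op_X\to\Op_{\N_\U}$ given by $\eta_!(U)\defined \up{\eta(U)} = \bigcup_{x\in U}\up{\eta(x)}$, and then to verify the defining adjunction inequality $\eta_!(U)\subseteq S \iff U\subseteq \eta^{-1}(S)$ by hand. Before doing so I would record two structural facts about the specialization topology on the nerve: first, since $\Op_{\N_\U}$ is generated by the principal up sets $\up{\sigma}$, a subset $S\subseteq \N_\U$ is open precisely when it is upward closed, so that $\sigma\in S$ together with $\tau\supseteq\sigma$ forces $\tau\in S$; second, every open $S$ is the union $\bigcup_{\sigma\in S}\up{\sigma}$ of the principal up sets of its members.

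First I would check that $\eta$ is continuous, which is needed even to regard $\eta^{-1}$ as a map of open-set lattices. It suffices to test on the generating sets $\up{\sigma}$, and here the computation $\eta^{-1}(\up{\sigma}) = \{x\in X\mid \sigma\subseteq \eta(x)\} = \{x\mid x\in\U(i)\text{ for all }i\in\sigma\} = \U_\sigma$ identifies the preimage with the intersection $\U_\sigma=\bigcap_{i\in\sigma}\U(i)$, which is open because $\sigma$ is a finite simplex of $\N_\U$. More generally $\eta^{-1}(S)=\bigcup_{\sigma\in S}\U_\sigma$ is open for every open $S$.

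Next I would verify that $\eta_!$ is well defined and is the left adjoint. Well-definedness is immediate: $\eta_!(U)$ is a union of principal up sets, hence upward closed, hence open. For the adjunction, fix $U\in\Op_X$ and $S\in\Op_{\N_\U}$. If $U\subseteq\eta^{-1}(S)$ then $\eta(x)\in S$ for every $x\in U$, and upward closure of $S$ gives $\up{\eta(x)}\subseteq S$, whence $\eta_!(U)=\bigcup_{x\in U}\up{\eta(x)}\subseteq S$. Conversely, if $\eta_!(U)\subseteq S$ then for each $x\in U$ we have $\eta(x)\in\up{\eta(x)}\subseteq\eta_!(U)\subseteq S$, so $x\in\eta^{-1}(S)$ and therefore $U\subseteq\eta^{-1}(S)$. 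This establishes $\eta_!\dashv\eta^{-1}$, and hence that $\eta$ is essential.

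The adjunction above is formal once $\eta$ is well defined and continuous, and the role of the hypotheses is precisely to secure this. Local finiteness (implicit in the very definition of $\eta$) guarantees that each $\eta(x)$ is a finite simplex, so that $\U_x=\U_{\eta(x)}$ is a finite intersection of cover sets and therefore open --- indeed the smallest $\U$-supported open set containing $x$ --- and it is this minimal neighborhood that makes preimages of opens open. The good, locally connected hypotheses further ensure each $\U_x$ is contractible and hence connected; this is not needed for essentialness itself but is what makes $\eta_!$ interact correctly with the connected-components cosheaf in the mapper construction. The only point I expect to demand care, and the closest thing to an obstacle, is the equivalence ``open in $\N_\U$ $\iff$ upward closed,'' since it is exactly the upward closure of $S$ that drives the forward direction of the adjunction. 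Conceptually this is the abstract criterion that $\eta^{-1}$ preserves arbitrary meets, which is what allows a continuous map to be essential.
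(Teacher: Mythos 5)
Your proposal is correct and follows essentially the same route as the paper's own proof: both establish continuity via the identification $\eta^{-1}(\up{\sigma}) = \U_\sigma$ (a finite intersection of cover sets), then exhibit the left adjoint explicitly as $\eta_!(U) = \bigcup_{x\in U}\up{\eta(x)}$ and verify the Galois connection $\eta_!(U)\subseteq S \iff U\subseteq \eta^{-1}(S)$. The only cosmetic difference is that the paper checks the two implications through the unit and counit inequalities ($U\subseteq\eta^{-1}\eta_!(U)$ and $\eta_!\eta^{-1}(S)\subseteq S$) together with monotonicity, whereas you argue directly from upward closure of open sets in the specialization topology; your closing observation that goodness of the cover is not actually needed for essentialness is also accurate.
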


We also make use of the following standard result in order to cluster a sample of a topological space by pixelizing by a cover that satisfies mild regularity conditions.
Its usefulness is due primarily to the fact that the number of connected components is a topological invariant, and is therefore preserved under homotopy.

\begin{theorem}[The Nerve Theorem (Hatcher~\cite{hatcher01} Corollary 4G.3)]\label{thm:nerve}
  If $\U$ is a good open cover of $X$ then $\N_\U$ is homotopy equivalent to $X$.
\end{theorem}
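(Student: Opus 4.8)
The plan is to exhibit $X$ and the nerve as the two ends of a single intermediate space and to check that both comparison maps are homotopy equivalences. Write $|\N_\U|$ for the geometric realization of $\N_\U$ as an abstract simplicial complex, and for $t\in|\N_\U|$ let $\sigma(t)\in\N_\U$ be the carrier of $t$, i.e.\ the unique simplex whose open interior contains $t$. I would form the Mayer--Vietoris blowup
\[ B \defined \big\{\,(x,t)\in X\times|\N_\U| \ \big|\ x\in\U_{\sigma(t)}\,\big\}, \]
a standard model for the homotopy colimit of the diagram $\sigma\mapsto\U_\sigma$ indexed by $\N_\U$, together with the two projections $p\colon B\to X$ and $q\colon B\to|\N_\U|$.

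The map $q$ is a homotopy equivalence by homotopy invariance of the homotopy colimit: since $\U$ is a good cover every nonempty $\U_\sigma$ is contractible, so the canonical transformation from the diagram $\sigma\mapsto\U_\sigma$ to the terminal diagram $\sigma\mapsto\ast$ is an objectwise homotopy equivalence, and the induced map on blowups is precisely $q$. Concretely, the fiber $q^{-1}(t)=\U_{\sigma(t)}\times\{t\}$ is nonempty (as $\sigma(t)\in\N_\U$) and contractible, and one assembles these fiberwise contractions into a global homotopy inverse by inducting over the skeleta of $|\N_\U|$, using the cofibrations $|\partial\sigma|\hookrightarrow|\sigma|$.

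The map $p$ is the natural map from the homotopy colimit to the ordinary colimit $\bigcup_\sigma\U_\sigma=X$. Its fiber over $x$ is the closed simplex $\{x\}\times|\eta(x)|$ on the (finite, by local finiteness) vertex set $\eta(x)=\{i\mid x\in\U(i)\}$, again contractible. To upgrade this to a genuine homotopy equivalence I would choose a partition of unity $\{\rho_i\}$ subordinate to $\U$ (available once $X$ is paracompact, as in the hypotheses of Hatcher's Corollary 4G.3), use it to define a section $s(x)=\big(x,\sum_i\rho_i(x)\,e_i\big)$ of $p$, and contract $B$ onto $s(X)$ by the fiberwise straight-line homotopy inside each simplex. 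Composing, $X\simeq B\simeq|\N_\U|$.

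Finally, the theorem topologizes $\N_\U$ with the specialization topology rather than as a realization, so I would close the gap with McCord's theorem: the Alexandroff space on a poset is weakly homotopy equivalent to the realization of its order complex, and the order complex of the face poset $\N_\U$ is the barycentric subdivision of $|\N_\U|$, hence homeomorphic to it. This gives a weak homotopy equivalence $\N_\U\simeq|\N_\U|\simeq X$. The main obstacle is precisely the passage from contractible fibers to honest homotopy equivalences for $p$ and $q$: contractibility only controls the maps locally, and stitching the local data together is where paracompactness (for $p$) and the skeletal cofibration structure (for $q$) are essential --- a Mayer--Vietoris or homology argument alone would yield only an isomorphism of invariants, not the homotopy equivalence claimed.
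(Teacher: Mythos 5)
The paper offers no proof of this statement: it is imported verbatim from Hatcher (Corollary 4G.3), and your Mayer--Vietoris blowup argument is in substance exactly Hatcher's own proof of that corollary---your map $p$, built from a partition of unity over a paracompact base, is his Proposition 4G.2, and the contractibility of the pieces of a good cover is what makes your map $q$ an equivalence. So as a proof of the cited result your outline is the standard one, and it correctly locates where paracompactness enters; note that this hypothesis appears in Hatcher but is silently omitted from the paper's statement.

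The genuinely important point is the one in your last paragraph, and it deserves to be stated more forcefully: with the specialization topology that this paper puts on $\N_\U$, the theorem as written claims more than is true. Your McCord bridge $\N_\U \to |\N_\U|$ is the right (indeed the only) way to connect the Alexandroff space to Hatcher's geometric realization, but McCord's theorem yields a \emph{weak} homotopy equivalence, and in general this cannot be upgraded: a finite Alexandroff space is almost never homotopy equivalent to a Hausdorff space. For instance, the nerve of a good cover of $S^1$ by three open arcs is a six-point poset which is weakly homotopy equivalent, but not homotopy equivalent, to the circle. So the correct conclusion of your argument---and the correct statement of the theorem in this setting---is that $\N_\U$ and $X$ are weakly homotopy equivalent. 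That weaker conclusion is all the paper ever uses: a weak equivalence still induces a bijection on $\pi_0$, which is what the clustering arguments of Section~\ref{sec:clustering} rely on. Your proof is therefore correct in substance; the discrepancy lies in the paper's phrasing, not in your argument.
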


\subsection{Cosheaves}\label{sec:cosheaves}

Let $X$ be a topological space.
A \defn{precosheaf} on $X$ is a functor $F : \Op_X\to \Set$ associating each open set $U\in\Op_X$ with a \emph{set} $F(U)$.
A precosheaf is a \emph{cosheaf} if it commutes with colimits as defined in Appendix~\ref{sec:cosheaves2}.
Let $\PCSh(X)\defined \Fun{\Op_X,\Set}$ denote the category of precosheaves on $X$
and let $\CSh(X)$ denote the category with objects being the cosheaves on $X$.

Because the restriction of the connected components functor to locally connected spaces is left adjoint to the discrete space functor $\Delta : \Set\to \Toplc$,
the induced functor $\pi_X :\Op_X\to \Set$ defined $U\mapsto \pi_0(U)$ (with $U$ regarded as a subspace) is a cosheaf whenever $X$ is locally connected~\cite{woolf08fundamental},
and will be referred to as the \defn{connected components cosheaf} on $X$.

\begin{definition}[Direct Image]
  The \defn{(cosheaf-theoretic) direct image functor} associated with a continuous function $f : Y\to X$ is given by precomposition with $f^{-1} :\Op_X\to \Op_Y$:
  \begin{align*}
    f_* : \PCSh(X) &\longrightarrow \PCSh(Y)\\
    F &\longmapsto Ff^{-1}.
  \end{align*}
  If $F$ is a cosheaf on $X$ then $f_*(F)$ is a cosheaf on $Y$.
\end{definition}

\begin{definition}[The Reeb Functor~\cite{desilva16categorified}]
  The \defn{Reeb functor} $\lambda : \Top/X\to \PCSh(X)$ takes each space over $X$ (see Appendix~\ref{sec:slice}) to the direct image of the connected components (pre)cosheaf $\pi_Y : \Op_Y\to \Set$ along $f$:
  \begin{align*}
    \lambda : \Top/X&\longrightarrow \PCSh(X)\\
    (Y,f) &\longmapsto f_*(\pi_Y) = \pi_Y f^{-1}
  \end{align*}
  Importantly, the Reeb functor takes locally connected spaces $(L,f)$ over $X$ to cosheaves on $X$.
  Formally, the \defn{Reeb cosheaf} $R_f : \Op_X\to \Set$ of a locally connected space $(L,f)$ over $X$ is the cosheaf on $X$ given by the Reeb functor:
  \begin{align*}
    R_f\defined f_*(\pi_L): \Op_X &\longrightarrow \Set\\
      U &\longmapsto \pi_L\big(f^{-1}(U)\big).
  \end{align*}
\end{definition}

\begin{examples}
  Let $L$ be a locally connected space and let $\id_L : L\to L$ denote the identity.
  The Reeb cosheaf of $(L,\id_L)\in\Toplc/L$ is precisely the connected components cosheaf $R_L = \pi_L : \Op_L\to \Set$.
  Similarly, for any subspace $S\subset L$ the Reeb cosheaf of the space $(S,\iota_S)$ over $L$ is a cosheaf on $X$ that takes each open set $U$ of $X$ to the connected components of $S\cap U$:
  \begin{align*}
    R_S\defined R_{\iota_S} : \Op_X&\longrightarrow \Set\\
      U &\longmapsto \pi_S(S\cap U).
  \end{align*}
\end{examples}

\section{Topological Hierarchical Decompositions}\label{sec:clustering_and_thds}

The goal of this section is to motivate the intuition that the topology of a space determines a ``clustering'' of its points that is carried out by the connected components functor.
We can then exploit the rich theory associated with this functor in order to establish a rigorous topological framework for cluster analysis.
In particular, this allows us to generate hypotheses by applying known theory to experimentally verified conditions.

We begin with a discussion of clustering from the topological perspective in Section~\ref{sec:clustering}, followed by detailed examples of metric and density-based clustering in Sections~\ref{sec:metric} and~\ref{sec:dbscan}.
We then show how hierarchical clustering can be topologized
and introduce our prototypical THD as the (generalized) merge tree~\cite{curry21decorated} of a filtration of spaces in Section~\ref{sec:hierarchical}.

\subsection{Topological Clustering}\label{sec:clustering}

The connected components functor $\pi_0: \Top\to \Set$ takes each topological space $X$ to the set of its connected components,
which may be thought of as a \emph{clustering} $\pi_0(X)$ of its points: a partition of $X$ into connected subsets $\b{x}_X$.
Given a finite sample $P\subset X$, our goal is to compute a clustering of $P$ that is induced by the topology of $X$.
However, the subspace topology on $P$ inherited from $X$ does not provide the desired clustering of the sample points.
For example, for any finite subset $P\subset \RR$, we can construct a cover by open balls containing one point each, thus, $\pi_0(P) \simeq P$.
We therefore define a canonical clustering associated with a subspace as follows.

\begin{definition}[Canonical Clustering]
  Let $X$ be a topological space.
  For any sample $S\subset X$ the associated \defn{canonical clustering functor} takes each open set $U$ of $X$ to the corresponding clustering of $S\cap U$:
  \begin{align*}
    \pi_X^S : \Op_X&\longrightarrow \Set\\
    U&\longmapsto \big\{S\cap \beta \mid \beta\in \pi_X(U)\big\}
  \end{align*}
  The \defn{canonical clustering} of $S$ is the union of the $\pi_X^S(U)$ in \Set:
  \[ \Join \pi_X^S = \bigcup_{U\in\Op_X} \pi^S_X(U)
    = \{S\cap \beta \mid \beta\in \pi_0(X)\} = \pi_X^S(X).\]
\end{definition}

Alternatively, the connected components of $X$ may be characterized as the connected components of the lattice of open sets~\cite{funk95display}.
That is, we can think of the connected components of $X$ as either a partition of the \emph{points} of $X$ as above,
or as a partition of (connected) open sets%
\footnote{
  Locales~\cite{maclane92sheaves,borceux94handbook3} are generalizations of topological spaces
  that formalize the sense in which the lattice of open sets $\Op_X$ (a Heyting algebra) is dual to the underlying space $X$.
}.
We will therefore regard our input not as a finite sample of points $P\subset X$,
but as a finite sample of (connected) open sets: a good open cover $\U : I\to \Op_X$.
In the next section, we will discuss a common situtaion in which these two representations coincide.

Let $\U : I\to \Op_X$ be a good open cover.
We can use $\U$ to cluster the points of a finite sample $P\subset X$ by precomposition:
\[ \Join \pi_X^P\U = \bigcup_{i\in I} \pi_X^P\big(\U(i)\big) = \bigcup_{U\in\Op_X} \pi_X^P(U) = \Join \pi_X^P.\]
Moreover, by the Nerve Theorem (Theorem~\ref{thm:nerve}), the nerve $\N_\U$ of $\U$ is a topological space that is homotopy equivalent to $X$.
In particular, this means that there is a bijective correspondence between the connected components of the nerve and that of the underlying space $X$.
The resulting clustering $\pi_0(\N_\U)$ of the elements (simplices) of the nerve can then be extended to a clustering of the underlying cover $\U : I\to \Op_X$ as follows.

The image of the canonical map $\eta : X\to \N_\U$ (see Section~\ref{sec:covers}) takes a subset $S\subset X$ to the subset $\eta(S) = \{\eta(x)\mid x\in S\}$ of the nerve $\N_\U$.
The join of the image is therefore the subset
\[ \Join \eta(S) = \{i \in I\mid \U(i)\cap S\neq\emptyset\}\]
of the index set $I$ of $\U$ corresponding to cover sets that intersect $S$.
We define a functor
\begin{align*}
  \pi_\U : \Op_X &\longrightarrow \Set\\
  U &\longmapsto \Big\{\Join \eta(\beta)\mid \beta\in \pi_X(U)\Big\}
\end{align*}
so that $\Join\pi_\U = \big\{ \Join \eta(\beta)\mid \beta\in\pi_0(X)\big\}$ is a \emph{clustering} of $\U$:
a partition of the index set of $\U : I\to \Op_X$ induced by the connected components of $X$.

\subsection{Metric Clustering}\label{sec:metric}

In this section, we will consider a common situation in which the sample points $P\subset X$ generate a cover by metric balls.

\begin{definition}[Metric Space]
  A \defn{metric space} is a pair $(X,\dist)$ where $X$ is a set and $\dist : X\times X\to \RR_+$ is a function satisfying
  \begin{enumerate}
    \item $\dist(x,x) = 0$ for all $x\in X$,
    \item $\dist(x,y) > 0$ for all $x\neq y$,
    \item $\dist(x,y) = \dist(y,x)$ for all $x,y$, and
    \item $\dist(x,y) + \dist(y,x) \geq \dist(x,z)$ for all $x,y,z$.
  \end{enumerate}
  The \defn{metric topology} on $(X,\dist)$ is generated by the collection of basic open sets
  \[ \ball^\e(x) \defined \{ y\in X\mid \dist(x,y) < \e\} \text{ for } x\in X \text{ and } \e > 0.\]
  Equivalently, the pre-image of the function $\dist_x\defined \dist(x,-): X\to\pro{\RR_+}$ at a basic open set $\odown{\e}$ of $\pro{\RR_+}$ is precisely the $\e$-ball centered at $x$,
  so the metric topology is equivalent to the initial topology associated with the family of maps $\dist_x$.
\end{definition}

\begin{definition}[$\e$-offset]
  The map taking each point $x\in X$ to its $\e$-neighborhood will be denoted
  \begin{align*}
    B^\e : X&\longrightarrow \Op_X\\
    x&\longmapsto \dist_x^{-1}(\odown{\e}) = \ball^\e(x)
  \end{align*}
  and for any subspace $S\subset X$, let $B_S^\e\defined B^\e |_S : S\to \Op_X$ denote the restriction of $B^\e$ to $S$.
  The \defn{$\e$-offset} $S^\e$ of a subspace is the join of the restriction $B_S^\e : S\to \Op_X$;
  that is, the union of $\e$-balls centered at the points of $S$:
  \[ S^\e\defined \Join B_S^\e = \bigcup_{s\in S}\ball^\e(s).\]
\end{definition}

\begin{figure}[ht]
  \centering
  \includegraphics[width=0.9\textwidth]{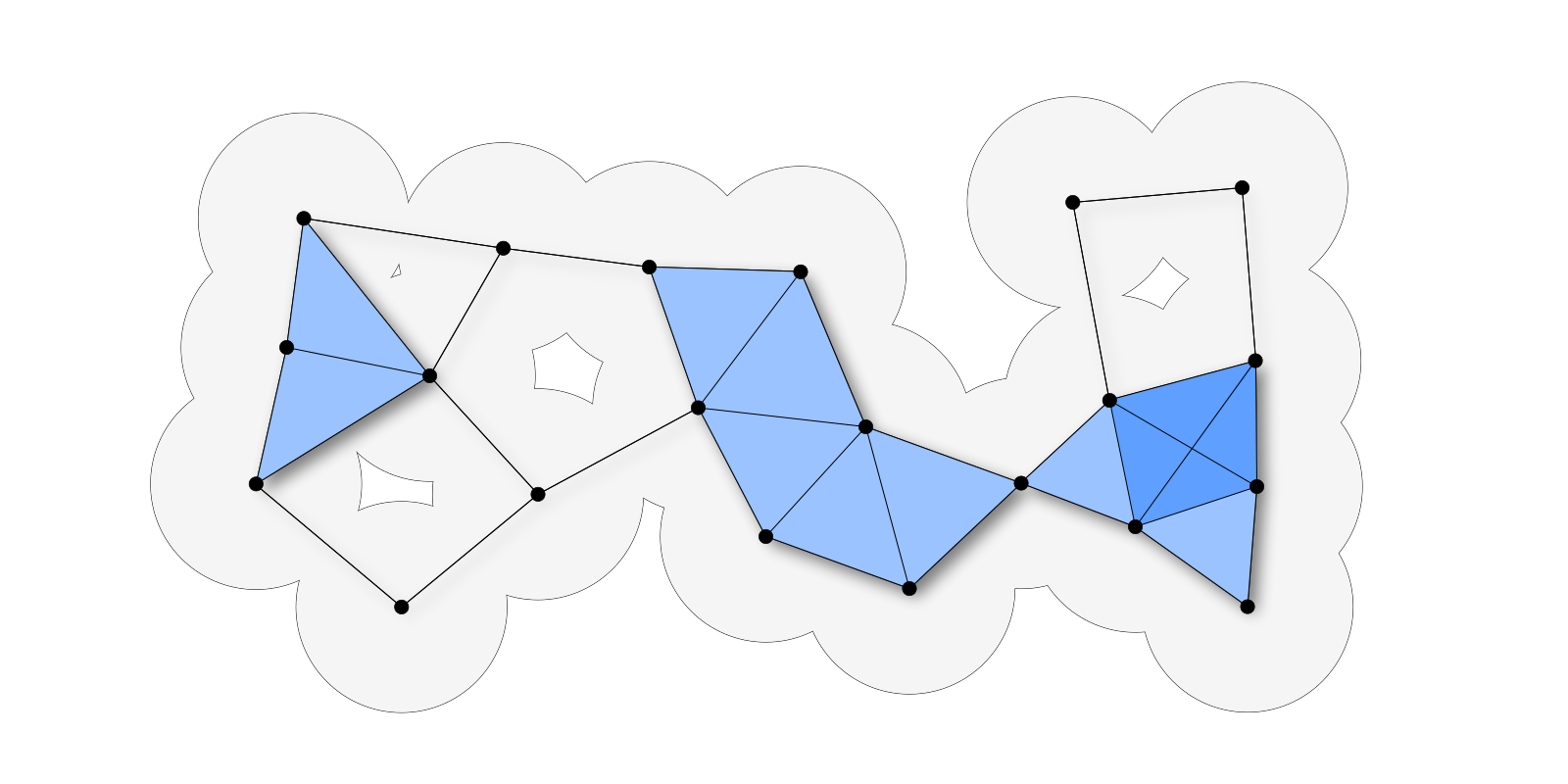}
  \caption{The \v{C}ech complex of a cover $B_P^\e : P\to \Op_X$ of a metric space $X = \Join B_P^\e = P^\e$ by a finite sample $P\subset X$.
          Vertices of the complex correspond to the sample points $p\in P$; more specifically, \emph{$0$-simplices} $\{p\}\in\N_\U$.
          Similarly, edges correspond to two-element subsets $\{p,q\}\in\N_\U$ with $\ball^\e(p)\cap \ball^\e(q)\neq\emptyset$,
          triangles correspond to three-element subsets $\{p,q,r\}\in\N_\U$ with $\ball^\e(p)\cap \ball^\e(q)\cap \ball^\e(r)\neq\emptyset$, etc.}\label{fig:clustering_cech}
\end{figure}

\begin{definition}[\v{C}ech Complex]
  A finite subspace $P\subset X$ is an \defn{$\e$-sample} of $X$ if the restriction $B^\e_P : P\to \Op_X$ is a cover of $X$:
  \[ P^\e = \Join B_P^\e = \bigcup_{p\in P}\dist_p^{-1}(\odown{\e}) = X.\]
  Equivalently, an \defn{$\e$-cover} of a metric space is a cover by metric balls indexed by a finite subset.
  The \defn{\v{C}ech complex} of an $\e$-sample $P$ is the nerve of the corresponding $\e$-cover:
  \[ \cech^\e_P\defined \big\{\sigma\subseteq P\mid \bigcap_{p\in \sigma} \ball^\e(p)\neq \emptyset\big\} = \N_{B^\e_P}.\]
\end{definition}

\begin{example}
  Let $P$ be an $\e$-sample of $(X,\dist)$.
  If the metric balls of $X$ are sufficiently convex~\cite{chazal09analysis} with respect to $\e$,
  then the \v{C}ech complex of an $\e$-sample is equivalent up to homotopy by the Nerve Theorem (Theorem~\ref{thm:nerve}),
  and can therefore be used to cluster the sample points.
  Noting that
  \[ \Join \eta(\beta) = \big\{p\in P\mid \exists x\in \beta \text{ such that } x\in\ball^\e(p)\big\} = P\cap \beta,\]
  it follows that the canonical clustering of $P$ is equivalent to the clustering of the cover $B_P^\e : P\to \Op_X$:
  \[ \Join \pi_X^P = \{ P\cap \beta\mid \beta\in \pi_0(X)\} = \Big\{ \Join \eta(\beta) \mid \beta\in \pi_0(X) \Big\} = \Join \pi_{B_P^\e}.\]
  That is, the canonical clustering of the sample points can be computed directly from the \v{C}ech complex
  (or a sufficient approximation%
  \footnote{In practice, it suffices to compute the image of a pair of Vietoris-Rips complexes that factors through the \v{C}ech,
            which can be done using pairwise proximity information alone~\cite{desilva07coverage,cavanna17when}.})
  using a standard graph search (Figure~\ref{fig:clustering_cech}).
\end{example}

\begin{remark}
  Because $P$ is a \emph{finite} sample of $X$, the $\e$-cover $B^\e_P$ is locally finite,
  so the canonical map $\eta : X\to \cech^\e_P$ is essential and induces a bijective correspondence between the connected components of $\cech^\e_P$ and $X$:
  \begin{align*}
    \pi_0\b{\eta} : \pi_0(X)&\longisorightarrow \pi_0(\cech^\e_P)\\
    \b{x}_X &\longrightarrow \b{\eta(x)}_{\cech^\e_P}
  \end{align*}
  Moreover, Because the canonical map $\eta : X\to \cech^\e_P$ is \emph{essential}, the pre-image $\eta^{-1} : \Op_{\cech^\e_P}\to \Op_X$ has a left adjoint $\eta_! : \Op_X\to \Op_{\cech^\e_P}$.
  Because the composition of $\eta$ with the embedding $\iota_P : P\hookrightarrow X$ is essential as well,
  we obtain a canonical essential map $\eta|_P : P \to \cech^\e_P$ that factors uniquely through $X$.
  In Section~\ref{sec:mapper} we will show how the unit of the adjunction associated with an essential map can be formalized as a \emph{pixelization}~\cite{botnan20relative} of a cosheaf by a locally finite good open cover.
\end{remark}

\subsection{Density-Based Clustering}\label{sec:dbscan}

Let $(X,\dist)$ be a metric space and for any subspace $S\subset X$
let $S_\e\defined \iota_S^{-1}\circ B^\e : X\to \Op_S$ denote the cover of $S$ defined by $x\mapsto S\cap \ball^\e(x)$.

Let $P\subset X$ be an $\e$-sample
such that, for all $x\in X$,
\[ \# P_\e(x) = \#\big(P\cap \ball^\e(x)\big)\geq k \text{ for some } k > 0.\]
That is, $P$ has uniform density in the sense that each $\e$-neighborhood contains at least $k$ points of $P$;
in other words, $P$ is a \emph{$k$-cover} of $X$~\cite{cavanna17when}.
Using this assumption, we can cluster the points of $P$ from a subsample $Q\subset P$ with sufficiently high local density.
This approach is particularly useful in the presence of noise where sample points with insufficient density can be identified as outliers.
More generally, we can extend this approach to arbitrary finite samples $Q\not\subset P$ as follows.

\begin{figure}[ht]
  \centering
  \includegraphics[width=0.65\textwidth]{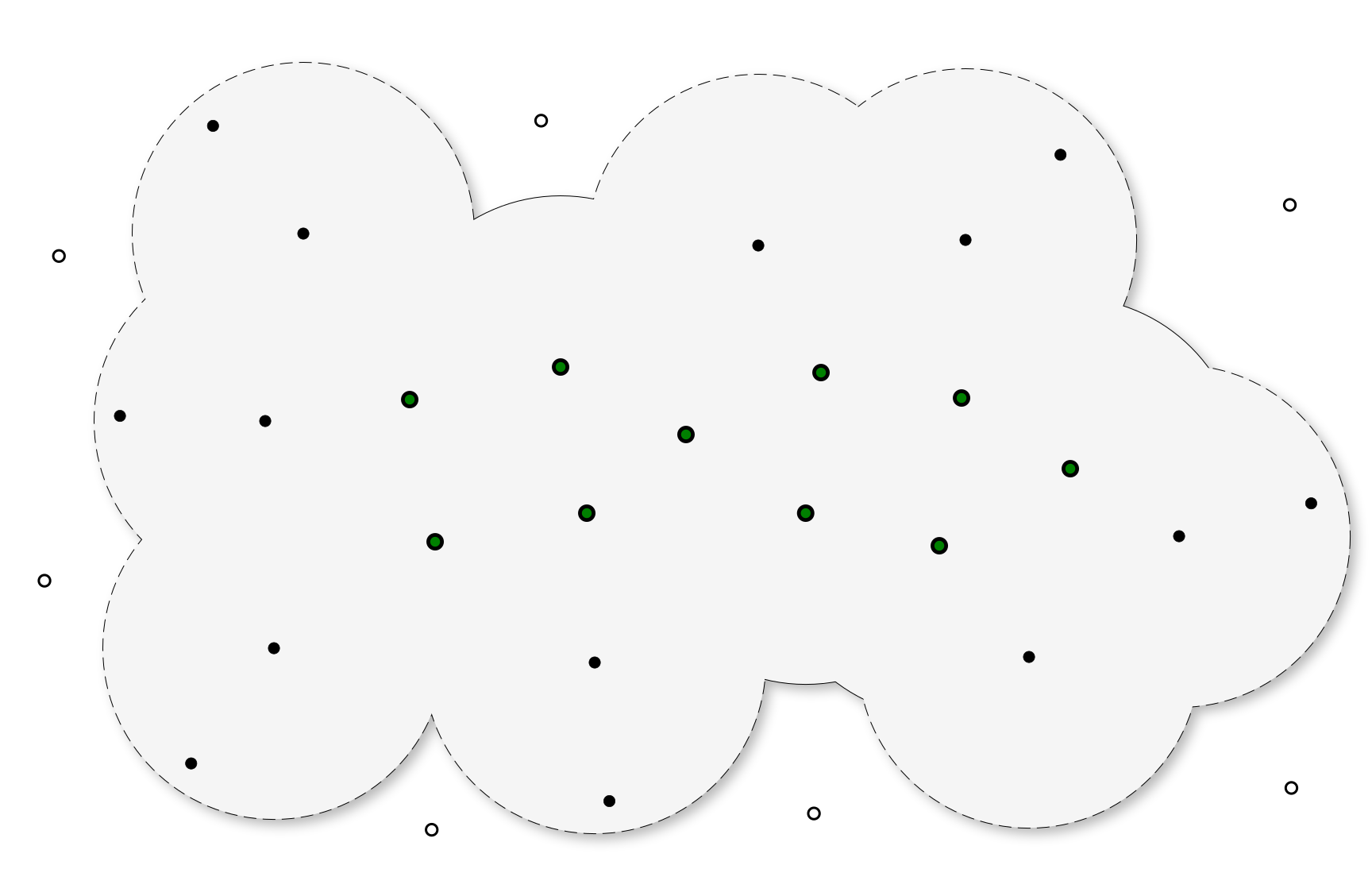}
  \caption{Density-based clustering with $k=4$ and $\delta = \e$.
    Each core point $q\in P_{4,\delta}$ (in green) corresponds to a cover set $B_{P_{4,\delta}}^\e = \bigcup_{p\in P_\delta|_{P_{4,\delta}}} \ball^\e(p)$
    equal to the $\e$-offset of the $\delta$-neighborhood of $q$ in $P$.
    Outlier points (empty circles) are not within $\e$ of any neighborhood point.}\label{fig:dbscan}
\end{figure}

Let $P,Q\subset X$ be finite subspaces and let $\delta,\e > 0$.
If $Q$ is a $\delta$-sample of $X$ then the restriction $P_\delta|_Q : Q\to \Op_P$ of $P_\delta$ to $Q$ is a cover of $P$.
If, in addition, $P$ is an $\e$-cover of $X$, then we can extend $P_\delta|_Q$ to a cover of $X$ that takes each point of $Q$ to the $\e$-offset of its $\delta$-neighborhood in $P$ (see Figure~\ref{fig:subsample_cover}):
\begin{align*}
  B^\e_{{P_\delta |_Q}}\defined \big(P\cap \ball^\delta(-)\big)^\e : Q &\longrightarrow \Op_X\\
    q &\longmapsto \bigcup \big\{  \ball^\e(p)\mid p\in P\cap \ball^\delta(q)\big\}.
\end{align*}

\begin{figure}[ht]
  \centering
  \includegraphics[width=0.35\textwidth]{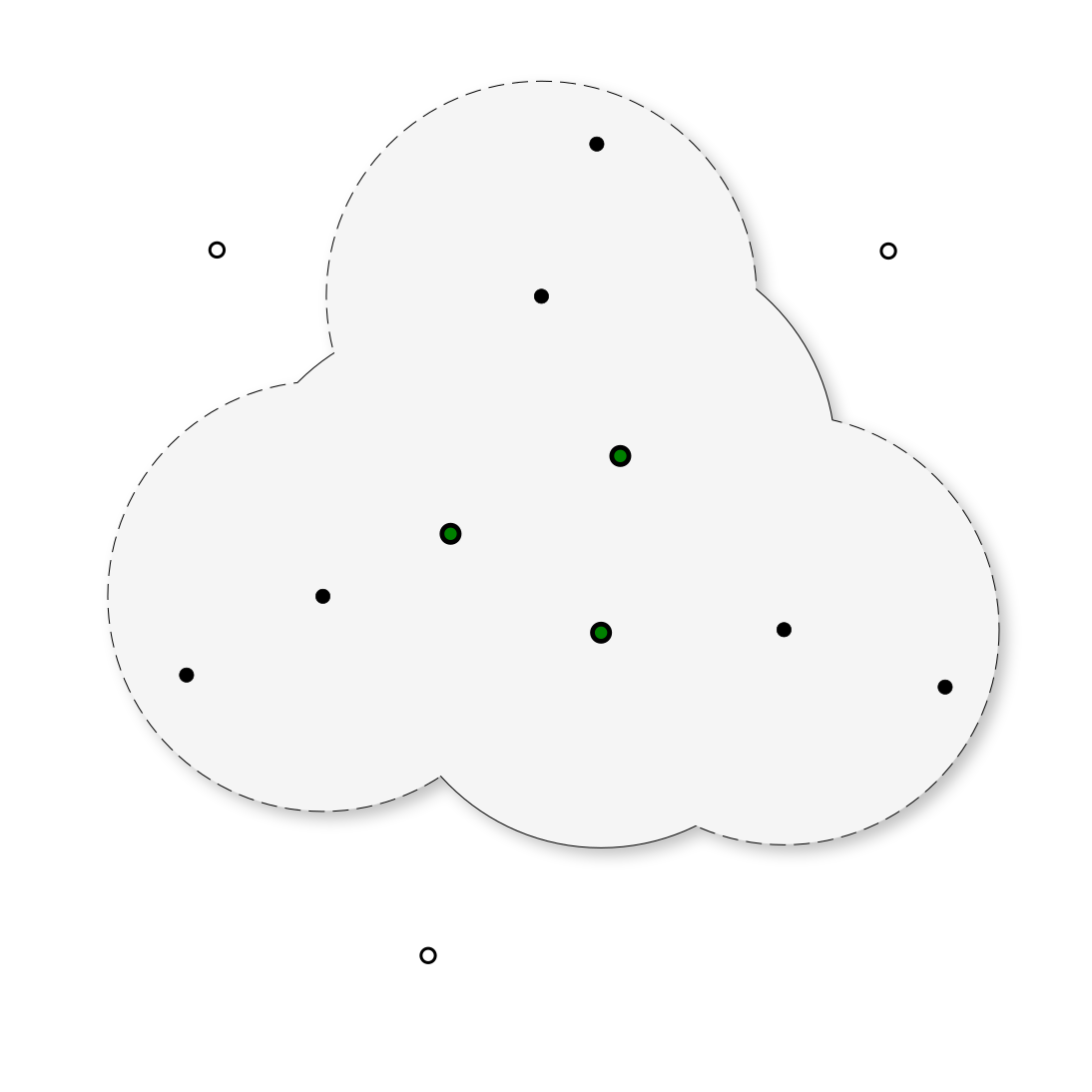}
  \hspace{10ex}
  \includegraphics[width=0.35\textwidth]{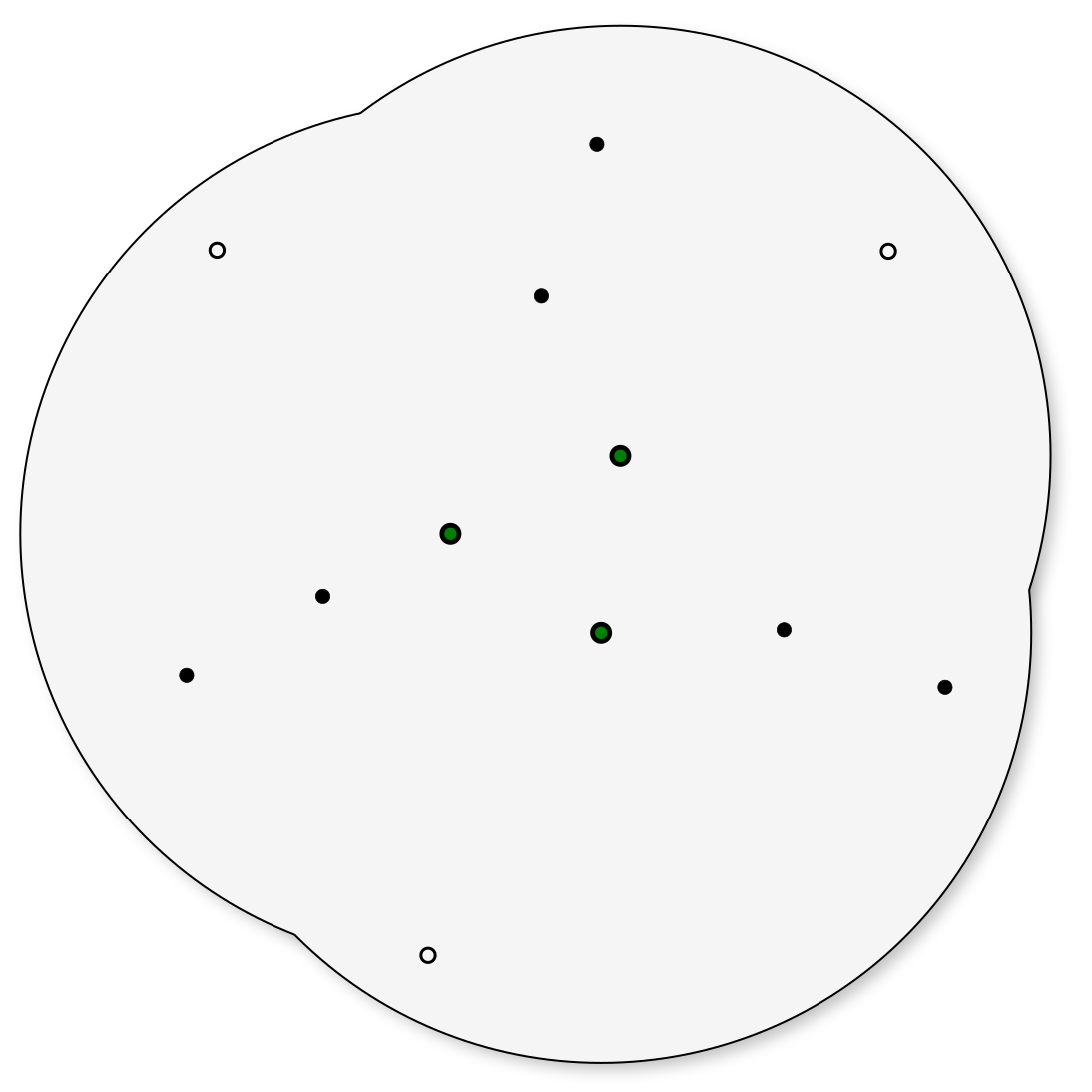}
  \caption{Although $P_{4,\e}$ is a $2\e$-sample of $X$ the cover $B^\e_{P_{4,\e}}$ (left) is finer than $B_{P_{4,\e}}^{2\e}$ (right).}\label{fig:subsample_cover}
\end{figure}

\begin{definition}[Density-Based Clustering]
  Let $P\subset X$ be a finite subspace and assume that $\# P_\delta \geq k$ for some $k,\delta > 0$.
  Then the set $P_{k,\delta} \defined \{p\in P\mid \# P_\delta(p)\geq k\}$ of \defn{core points} of $P$ is a $\delta$-sample of $P$ and,
  as above, $B^\e_{P_{k,\delta}} : P_{k,\delta}\to \Op_X$ is a cover of $X$.
  Composition with the canonical clustering functor $\pi_X^P : \Op_X\to \Set$ yields a functor $\pi_X^P B_{P_{k,\delta}}^\e : P_{k,\delta}\to \Set$ that takes each core point to the set containing the corresponding connected component.
  Thus, the desired clustering of $P$ is given by the join
  \[ \mathsf{DBS}_P(k,\delta,\e)\defined \Join \pi_X^P B_{P_{k,\delta}}^\e.\]
\end{definition}

\subsection{Merge Trees and Topological Hierarchical Decompositions}\label{sec:hierarchical}

In the previous examples, the clustering of the input points is parameterized by the distance between them,
and the correctness of the clustering depends on the assumption that our sample covers the underlying space at a known scale.
In practice, we often do not know at what scale a sample covers the underlying space%
\footnote{%
    There is a topological criterion for coverage introduced by De Silva and Ghrist~\cite{desilva07coverage,cavanna17when}
    that can be efficiently computed from the \v{C}ech complex under mild sampling conditions.
}.
This is the primary motivation for hierarchical clustering, which has two main variations:
\begin{enumerate}
    \item \emph{Agglomerative clustering} starts with a fine clustering of points and merges them,
        often starting with the connected components of the \emph{initial topology}: the finest topology in which each point belongs to its own cluster.
    \item \emph{Divisive clustering} starts with a coarse clustering of points and divides them,
        often starting with the connected components of the \emph{final topology} in which all points are in the same connected component.
\end{enumerate}

In Section~\ref{sec:multi}, we provide a novel example of divisive hierarchical clustering~\cite{dey16multiscale}.
In this section, we will focus on agglomerative clustering.
Specifically, we formalize single and complete linkage clustering and show how the associated functions bound the Hausdorff distance~\cite{basalto08hausdorff}.
Motivated by the examples of the previous section, we will regard the data of a hierarchical clustering scheme as a filtration of topological spaces, defined formally as follows.

\begin{definition}[Filtration of Spaces]
    A \defn{filtration of spaces} is a functor $\F : \P\to \Top$ from a poset $\P$ to the category of topological spaces.
    That is, a collection of topological spaces $\F(p)$ indexed by elements $p\in\P$ with continuous functions $\F\b{p\leq q} : \F(p) \to \F(q)$ for all $p\leq q$.
\end{definition}

\begin{example}[Metric Clustering]
    Let $P\subset X$ be a finite subspace of a metric space $(X,\dist)$ and let $B_P :\cat{R}_+\to \Top$ be the filtration of spaces defined
    \[ B_P(\e) \defined \begin{cases}  P &\text{ if } \e = 0,\\ P^\e &\text{ otherwise.}\end{cases}\]
    Composition with the connected components functor yields a filtration of \emph{sets} $\pi_0 B_P : \cat{R}_+\to\Set$
    that not only takes each $\e > 0$ to a clustering $\pi_0 B_P(\e) = \pi_0(P^\e)$ of the $\e$-neighborhood of the sample points,
    but also comes equipped with transition maps $\pi_0 B_P\b{\e < \delta} : \pi_0(P^\e)\to \pi_0(P^\delta)$.
    Using this information, we can calculate the corresponding \emph{merge tree}, recently generalized to filtrations of spaces by Curry et al.~\cite{curry21decorated},
    as a poset with elements $\beta_p\defined (p, \beta)$ for each $p\in\P$ and each connected component $\beta\in \pi_0(\F(p))$ (see Appendix~\ref{sec:cats2}).
    This construction serves as a prototypical example of a THD: a dendrogram enriched with the topological structure provided by the filtration of spaces $\F$ (Figure~\ref{fig:clustering_merge}).
\end{example}

\begin{definition}[Generalized Merge Tree~\cite{curry21decorated}]
    The \defn{merge tree} of a filtration of spaces $\F : \P\to \Top$ is the \emph{category of elements} (see Appendix~\ref{sec:slice})
    of the composition $\pi_0 \F : \P\to \Set$:
    \[ \cT_\F\defined \El(\pi_0 \F).\]
    Vertices $\beta_p\in\cT_\F$ of the merge tree correspond to connected components $\beta\in\pi_0 \F(p)$,
    and edges $(\alpha_p,\beta_q)$ of the tree correspond to pairs with $p\leq q$ and $\F\b{p\leq q}(\alpha) = \beta$.
    Let $\sT_\F$ denote the topological space given by endowing the merge tree of $\F$ with the \emph{co}specialization topology.
\end{definition}

In this work, all of the THDs we encounter will arise as the merge tree of a filtration of spaces.
This includes multiscale mapper~\cite{dey16multiscale} which takes as input a filtration of \emph{covers},
but produces a THD that is equivalent to the merge tree of a filtration of nerves (see Section~\ref{sec:multi}, Proposition~\ref{prop:multi_thd}).
Formally, for the purposes of this work, the \defn{Topological Hierarchical Decomposition (THD)} of a filtration of spaces $\F : \P\to \Top$ is defined as the generalized merge tree $\cT_\F = \El(\pi_0\F)$.

\begin{remark}
    Providing a more general characterization of hierarchical decompositions that arise from topological spaces,
    including their relationship with the total locale of a cosheaf~\cite{funk95display} (see also Appendix~\ref{sec:display}),
    persistent homology, and sheaf theory is the subject of future work (see also Appendix~\ref{sec:reeb_future}).
\end{remark}

\begin{example}[Metric Clustering cont.]
    The THD associated with the filtration of spaces $B_P$ defined above can be thought of as a dendrogram that is parameterized continuously by $\cat{R}_+$.
    That is, $\cT_{B_P}$ is a poset with vertices $\alpha_\e = (\e,\alpha)\in \El(\pi_0 B_P)$ for all $\e > 0$ and each connected component $\alpha\in\pi_0(P^\e)$,
    and edges $\alpha_\e\preceq \beta_\delta$ for all $\delta > \e > 0$ and $\beta\in \pi_0 B_P(\delta)$ such that $\pi_0 B_P\b{\e\leq\delta}(\alpha) = \beta$.
\end{example}

We proceed to define clustering linkage and show how it can be viewed as a reparameterization of the filtration $B_P$ according to a function on subspaces.
Importantly, the functions associated with single and complete linkage clustering do not constitute metrics on subspaces of a metric space,
but they do provide lower and upper bounds on the Hausdorff distance, defined formally as follows.

\begin{figure}[ht]
    \includegraphics[width=\textwidth]{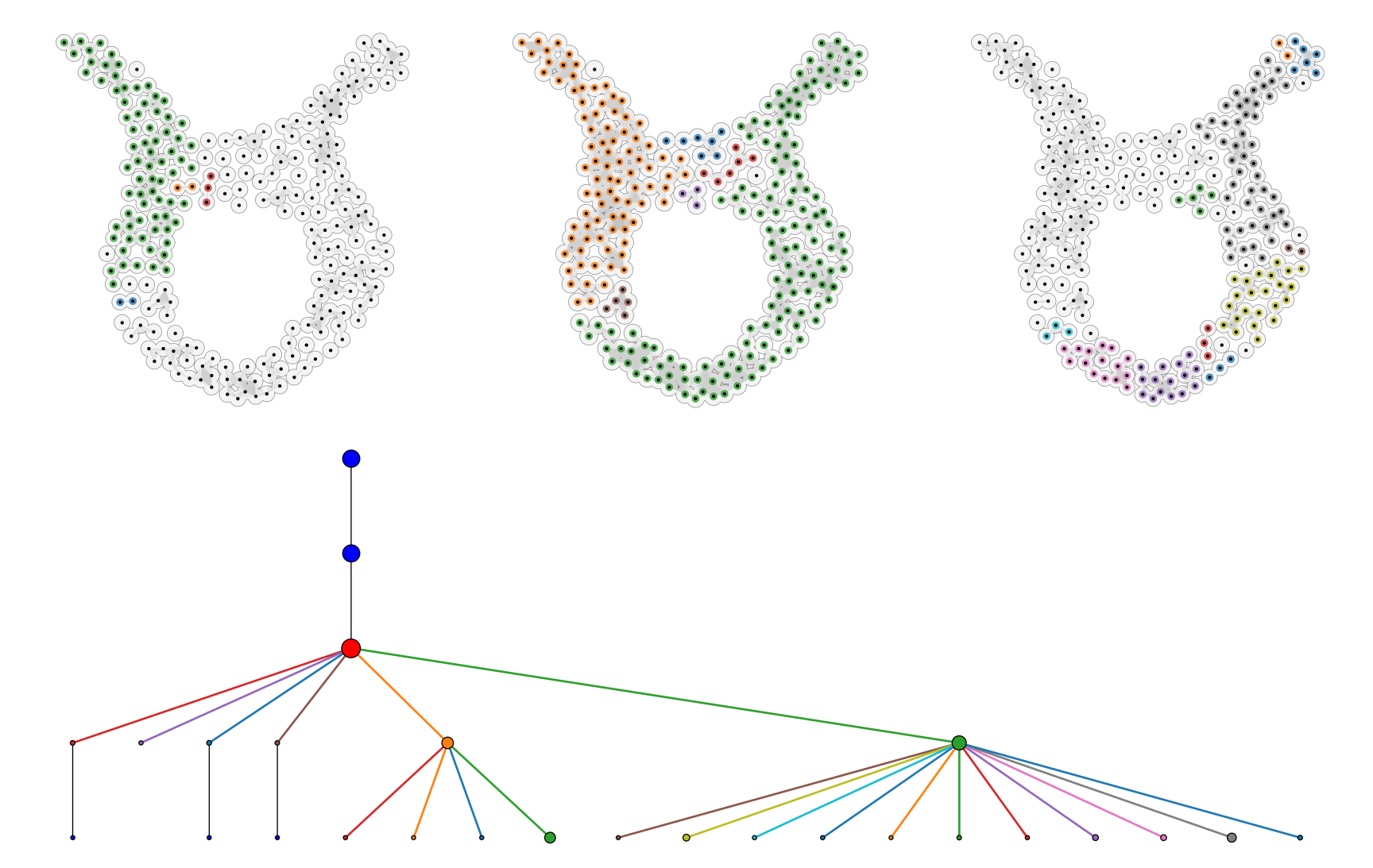}
    \caption{Our prototypical THD is the merge tree of a filtration of spaces.
              Here, each space is generated by a metric cover indexed by the same set of sample points.
              As the coverage radius is increased (left, right into middle),
              clusters corresponding to vertices of the THD merge,
              which is represented by an edge in the THD.}\label{fig:clustering_merge}
\end{figure}

Let $S\subset X$ be a subspace.
For any $x\in X$ the restriction of $\dist_x$ to the subspace $S$ is a map $\dist_x |_S : S\to \pro{\RR_+}$.
The meet or \emph{infimum} of this restriction defines the distance to $S$:
\begin{align*}
    \dist_S : X &\longrightarrow \pro{\RR_+}\\
    x & \longmapsto \Meet\dist_x |_S = \inf_{s\in S}\dist(x,s),
\end{align*}
and has a pre-image $\dist_S^{-1} : \Op_{\pro{\RR_+}}\to \Op_X$ taking each basic open set $\odown{\e}$ of $\pro{\RR_+}$ to the $\e$-offset of $S$:
\[ \dist_S^{-1}(\odown{\e}) = \{x\in X\mid \dist_S(x) < \e\} = S^\e.\]

\begin{definition}[Hausdorff Distance]
    The \defn{partial hausdorff distance to $S$} is the join or \emph{supremum} of the restriction of $\dist_S$ to a subspace $T\subset X$:
    \[ \Join \dist_S|_T = \sup_{t\in T}\inf_{s\in S} \dist(s,t).\]
    In general, the partial hausdorff distances between two subspaces are not equal,
    and the \defn{hausdorff distance} is the maximum:
    \[ \dist_\mathrm{H}(S,T)\defined \max\Big\{\Join \dist_S|_T,\Join \dist_T|_S\Big\}.\]
\end{definition}

\begin{figure}[ht]
    \centering
    \includegraphics[width=0.75\textwidth]{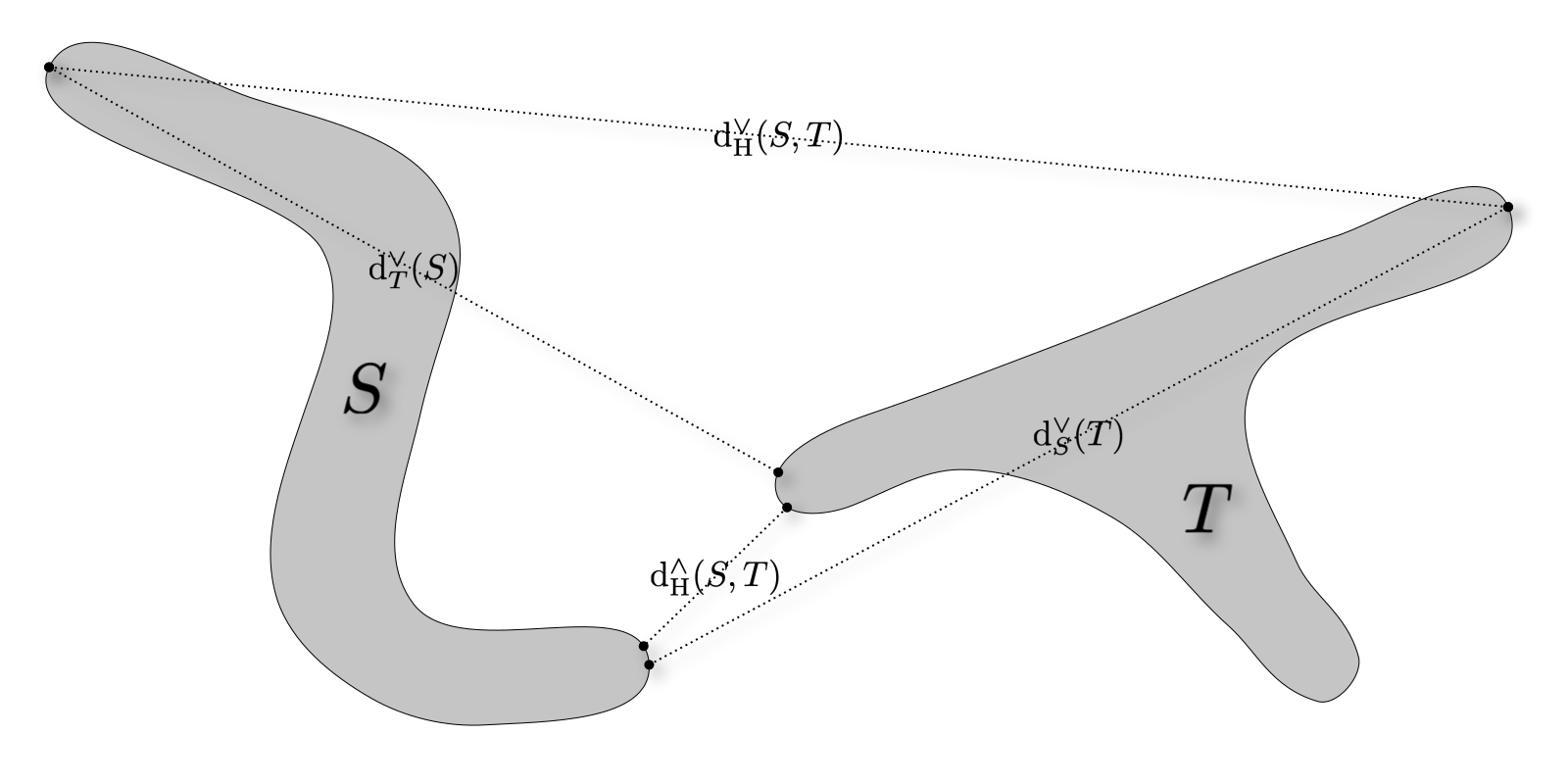}
    \caption{The minimum and maximum distance from $T$ to the closest point in $S$
        correspond to the partial hausdorff distance to $S$ and the function used in single linkage clustering, respectively.
        The minimum and maximum distance from $T$ to the \emph{farthest} point in $S$
        correspond to the function used in complete linkage clustering and the Jaccard distance, respectively.}\label{fig:set_min_distance}
\end{figure}

The function associated with single linkage clustering is the smallest distance between the two subspaces:
\[ \pro{\dist_\mathrm{H}}(S,T) \defined \Meet \dist_S|_T = \Meet \dist_T|_S = \inf_{s\in S,t\in T}\dist(s,t).\] 
This function is not a metric on subspaces of $X$, but it is a lower bound of the hausdorff distance.
On the other hand, the function associated with complete linkage clustering is defined as the largest distance between subspaces,
and provides an upper bound on the hausdorff distance:
\[ \ind{\dist_\mathrm{H}}(S,T) \defined \sup_{s\in S,t\in T}\dist(s,t),\]
Thus, we have the following sequence of inequalities:
\[ \pro{\dist_{\mathrm{H}}}\leq \dist_{\mathrm{H}} \leq \ind{\dist_\mathrm{H}}.\]

\begin{definitions}[Single and Complete Linkage THDs]
    Let $P\subset X$ be a finite subspace.
    \begin{enumerate}
        \item Let $\pro{\L_P} : \NN\to \RR_+$ be the function defined inductively for all $n\in \NN$ as
            \[ \pro{\L_P}(n) \defined \min\big\{\pro{\dist_{\mathrm{H}}}(\alpha,\beta)\mid \alpha\neq \beta \in \pi_0 B_P(n-1)\big\}.\]
            The \defn{single linkage THD} of $B_P$ is given by the merge tree $\pro{\sT_P}\defined \El(\pi_0\pro{B_P})$ of the filtration
            \begin{align*}
                \pro{B_P}\defined B_P\circ \pro{\L_P} : \NN&\longrightarrow \Top\\
                    n &\longmapsto P^{\pro{\L_P}(n)/2} = \bigcup_{p\in P} \ball^{\pro{\L_P}(n)/2}(x).
            \end{align*}
        \item Let $\ind{\L_P} : \NN\to \RR_+$ be the function defined inductively for all $n\in \NN$ as
            \[ \ind{\L_P}(n) \defined \min\big\{\ind{\dist_{\mathrm{H}}}(\alpha,\beta)\mid \alpha\neq \beta \in \pi_0 B_P(n-1)\big\}.\]
            The \defn{complete linkage THD} of $B_P$ is given by the merge tree $\ind{\sT_P}\defined \El(\pi_0\ind{B_P})$ of the filtration
            \begin{align*}
                \ind{B_P}\defined B_P\circ \ind{\L_P} : \NN&\longrightarrow \Top\\
                    n &\longmapsto P^{\ind{\L_P}(n)/2} = \bigcup_{p\in P} \ball^{\ind{\L_P}(n)/2}(x).
            \end{align*}
    \end{enumerate}
\end{definitions}

\begin{remark}
    This approach merges \emph{all} clusters within the minimum distance.
    Breaking ties amounts to modeling the topology locally around the subspaces in question,
    and is the subject of future work.
\end{remark}

\section{Reeb Graphs and the Mapper Functor}\label{sec:mapper}

In this section we give a generalized construction of the mapper functor of Brown et al.~\cite{brown21probabilistic}.
We will follow the approach of Botnan et al.~\cite{botnan20relative} alluded to in~\cite{brown21probabilistic} in which the mapper functor
is defined as the pixelization of a (pre)cosheaf by a cover $\U : I\to \Op_X$;
that is, the pullback of the pushforward of the canonical map $\eta : X\to \N_\U$.

We begin by reviewing the adjoint Reeb and display space functors in Section~\ref{sec:reeb}.
We then show that the mapper functor $\M_\U : \PCSh(X)\to \CShsp(X)$ is equivalent to the pixelization~\cite{botnan20relative} of a precosheaf on $X$ in Section~\ref{sec:mapper_fun}.
We conclude with some examples of how these results can be used in practice in Section~\ref{sec:mapper_examples}.

\subsection{The Reeb and Display Space Functors}\label{sec:reeb}

In recent work, De Silva et al.~\cite{desilva16categorified} (re)introduced the \emph{Reeb cosheaf} $R_f : \Op_X\to \Set$ associated with a locally connected space $(L,f)$ over $X$
in order to categorify Reeb graphs.
In this section, following earlier work by Funk~\cite{funk95display} and Woolf~\cite{woolf08fundamental}, as well as the more recent work by Brown et al.~\cite{brown21probabilistic},
we review the display space construction that provides a right adjoint to the restriction of the Reeb functor to locally connected spaces.
For a more combinatorial treatment of the Reeb cosheaf and its associated display space see~\cite{desilva16categorified}.

Let $F$ be a precosheaf on $X$ and let $x\in X$.
Letting $F^x\defined F |_{\Op_X(x)}$ denote the restriction of $F$ to the neighborhoods of $x$,
the \defn{costalk}\footnote{usually denoted $F_x\defined \lim F^x$ in the literature} of $F$ at $x$ is the limit
\[ \lim F^x = \lim_{U\ni x} F(U) = \Big\{\{\alpha_U\}\in\prod_{U\ni x} F(U)\ \big|\ F\b{U'\subseteq U}(\alpha_{U'}) = \alpha_U\text{ for all } U'\subseteq U\Big\}.\]
We topologize the costalks with the initial topology associated with the system of canonical projections
$\lim F^x\to \sE_X(F)$ to the \emph{total locale} $\sE_X(F)$ given by endowing the category of elements $\El(F)$ of $F$
with the \emph{co}specialization topology~\cite{funk95display} (see also Appendix~\ref{sec:display}).

\begin{definition}[Display Space]
  The \defn{display space} of $F$ is the coproduct of costalks in \Top:
  \[ \dis(F) = \coprod_{x\in X} \dis_x(F).\]
  The points of $\dis(F)$ are pairs $(\alpha,x)$ for each $x\in X$ and $\alpha\in \dis_x(F)$,
  and the basic open sets of $\dis(F)$ are defined for each open set $U\in\Op_X(x)$ and $\beta\in F(U)$ as
  \[ \cev{\beta}_U \defined \bigsqcup_{x\in U}\cev{\beta}_U^x = \big\{(\alpha,x)\mid x\in U\text{ and }\rho_U^x(\alpha) = \beta\big\}.\]
\end{definition}

\begin{definition}[Display Space Functor]
  Let $\gamma_F : \dis(F)\to X$ denote the continuous projection $(\alpha,x)\mapsto x$ with pre-image defined
  \begin{align*}
    \gamma_F^{-1}(U)
      = \bigsqcup_{x\in U} \dis_x(F)
      =\! \bigcup_{\beta\in F(U)}\!\cev{\beta}_U.
  \end{align*}
  The \defn{display space functor} $\gamma : \PCSh(X)\to\Top/X$ takes each precosheaf $F$ on $X$ to the space $\big(\dis(F),\gamma_F\big)$ over $X$:
  \begin{align*}
    \gamma : \PCSh(X)&\longrightarrow \Top/X\\
    F &\longmapsto \big(\dis(F),\gamma_F\big).
  \end{align*}
\end{definition}

\begin{definition}[Spatial Cosheaf~\cite{funk95display,woolf08fundamental}]
  A precosheaf is \defn{spatial} if the basic open sets of its display space are non-empty and connected.
\end{definition}

Importantly, every spatial precosheaf is a cosheaf~\cite{funk95display},
so the restriction of $\PCSh(X)$ to spatial cosheaves is a full subcategory $\CShsp(X)$ of cosheaves $\CSh(X)$ on $X$.
Moreover, because the display space of any spatial cosheaf is generated by connected open sets,
$\gamma$ takes spatial cosheaves to \emph{locally connected} spaces over $X$.
As a result, the composite $\lambda\gamma : \PCSh(X)\to \CShsp(X)$ is a ``spatial cosheafification functor''~\cite{woolf08fundamental}
that takes any precosheaf $F : \Op_X\to \Set$ to the Reeb cosheaf of the projection $\gamma_F : \dis(F)\to X$:
a spatial cosheaf by the following standard result.

\begin{proposition}[Proposition 5.14~\cite{funk95display}]\label{prop:reeb_spatial}
  Let $(L,f)$ be a locally connected space over $X$.
  Then the Reeb cosheaf $R_f$ is spatial and its display space $\dis(R_f)$ is locally connected.
\end{proposition}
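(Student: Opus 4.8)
The plan is to realize the display space $\dis(R_f)$ through a canonical comparison map out of $L$ and to show that every basic open set $\cev{\beta}_U$ is non-empty and trapped between a connected set and its closure. First I would define
\[ \phi : L \longrightarrow \dis(R_f), \qquad y \longmapsto (\alpha_y, f(y)), \]
where $\alpha_y \defined \big\{\,\b{y}_{f^{-1}(V)}\,\big\}_{V\in\Op_X(f(y))}$ sends each neighborhood $V$ of $f(y)$ to the connected component of $f^{-1}(V)$ containing $y$. This family lies in the costalk $\dis_{f(y)}(R_f)=\lim R_f^{f(y)}$ since for $V'\subseteq V$ the component $\b{y}_{f^{-1}(V')}$ is carried to $\b{y}_{f^{-1}(V)}$ by $R_f\b{V'\subseteq V}$. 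The decisive computation is that for every basic open $\cev{\beta}_U$ with $\beta\in R_f(U)=\pi_0(f^{-1}(U))$,
\[ \phi^{-1}\big(\cev{\beta}_U\big) = \big\{\,y\in L \mid f(y)\in U,\ \b{y}_{f^{-1}(U)}=\beta\,\big\} = \beta. \]
Because $L$ is locally connected, the component $\beta$ of the open set $f^{-1}(U)$ is itself open, so this single identity shows both that $\phi$ is continuous and that $\phi$ maps $\beta$ into $\cev{\beta}_U$.

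Non-emptiness is then immediate: a connected component $\beta$ is non-empty, so any $y\in\beta$ gives $\phi(y)\in\cev{\beta}_U$. For connectedness I would show $\cev{\beta}_U\subseteq\overline{\phi(\beta)}$. The image $\phi(\beta)$ is connected as the continuous image of the connected set $\beta$. Given any $(\alpha,x)\in\cev{\beta}_U$ and any basic neighborhood $\cev{\beta'}_{U'}\ni(\alpha,x)$, set $W\defined U\cap U'$; coherence of $\alpha$ forces the component $\alpha_W=\rho_W^x(\alpha)\in\pi_0(f^{-1}(W))$ to satisfy $\alpha_W\subseteq\beta$ and $\alpha_W\subseteq\beta'$, whence $\beta\cap\beta'\supseteq\alpha_W\neq\emptyset$. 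Applying $\phi^{-1}(\cev{\beta'}_{U'})=\beta'$ again, any $y\in\beta\cap\beta'$ yields $\phi(y)\in\cev{\beta'}_{U'}\cap\phi(\beta)$, so every basic neighborhood of $(\alpha,x)$ meets $\phi(\beta)$. Thus $\phi(\beta)\subseteq\cev{\beta}_U\subseteq\overline{\phi(\beta)}$, and a set between a connected set and its closure is connected. This establishes that $R_f$ is spatial.

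For local connectedness of $\dis(R_f)$ I would first confirm that the sets $\cev{\beta}_U$ form a basis: given $(\alpha,x)\in\cev{\beta}_U\cap\cev{\beta'}_{U'}$, the same meet argument shows $\cev{\gamma}_{W}\subseteq\cev{\beta}_U\cap\cev{\beta'}_{U'}$ with $W=U\cap U'$ and $\gamma=\rho_W^x(\alpha)$. Since every basic open is connected by the previous paragraph, $\dis(R_f)$ then admits a basis of connected opens and is locally connected.

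I expect the main obstacle to be the connectedness of $\cev{\beta}_U$, specifically the ``phantom'' costalk elements $(\alpha,x)$ outside the image of $\phi$ that arise as nested limits with $\bigcap_V \alpha_V=\emptyset$ (as already occurs for $L=(0,1)\hookrightarrow\RR$ at $x=0$). The closure argument is designed precisely to absorb these phantom points into $\overline{\phi(\beta)}$ without identifying them individually; the hypothesis of local connectedness enters only through the openness of components, which is exactly what makes $\phi$ continuous and validates the identity $\phi^{-1}(\cev{\beta}_U)=\beta$ on which everything rests.
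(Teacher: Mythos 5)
Your proof is correct, but note what it is being compared to: the paper never proves this proposition. It is imported verbatim as Proposition 5.14 of Funk~\cite{funk95display} (with Woolf~\cite{woolf08fundamental} as companion reference), and Funk's own argument is carried out in the localic setting, so your argument stands as an independent, elementary, point-set proof of a result the paper treats as a black box.

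The two pillars of your argument both hold up. The comparison map $\phi(y)=(\alpha_y,f(y))$ is well defined because the assignment $V\mapsto\b{y}_{f^{-1}(V)}$ is compatible with the transition maps $R_f\b{V'\subseteq V}$, and the identity $\phi^{-1}\big(\cev{\beta}_U\big)=\beta$ together with openness of components of open sets (exactly where local connectedness of $L$ enters) gives continuity, non-emptiness of every $\cev{\beta}_U$, and $\phi(\beta)\subseteq\cev{\beta}_U$. The meet trick---pass to $W=U\cap U'$ and $\gamma=\rho^x_W(\alpha)$, which is a genuine (hence non-empty) component of $f^{-1}(W)$ precisely because the costalk element $\alpha$ has a $W$-coordinate---simultaneously shows that the sets $\cev{\beta}_U$ form a basis and that every basic neighborhood of a point of $\cev{\beta}_U$ meets $\phi(\beta)$; the sandwich $\phi(\beta)\subseteq\cev{\beta}_U\subseteq\overline{\phi(\beta)}$ then yields connectedness, and the connected basis yields local connectedness of $\dis(R_f)$. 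Your identification of the phantom costalk points (consistent families with empty intersection) as the genuine obstruction, and the use of closure to absorb them rather than describe them, is exactly the right move. One organizational nit: the closure argument in your second paragraph already uses that every open neighborhood of $(\alpha,x)$ contains a basic one, i.e., the basis property you only verify in your third paragraph; state the basis lemma first, or remark that the same meet argument applies to finite intersections of the subbasic sets. Stylistically your proof is in the same hands-on spirit as the paper's own spatiality arguments (Lemma~\ref{lem:special_spatial} and Theorem~\ref{thm:poset_pullback}, which also produce common refinements inside basic opens); Funk's route buys generality (locales, no points), while yours buys a concrete picture of $\dis(R_f)$ as the image of $L$ completed by phantom limit points.
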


This behavior is formalized by the following adjunction.

\begin{theorem}[Funk~\cite{funk95display} Theorem 5.9 (see also Theorem 3.1) and Woolf~\cite{woolf08fundamental} Proposition B.2]\label{thm:dis_adjoint}
  The restriction of the Reeb functor to locally connected spaces is left adjoint to the display space functor:
  \[ \adjoint{\Toplc/X}{\CShsp(X).}["\lambda"]["\gamma"]\]
\end{theorem}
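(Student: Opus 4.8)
The plan is to exhibit the adjunction through an explicit unit and counit and then check the triangle identities, leaning on Proposition~\ref{prop:reeb_spatial} to guarantee both functors land where claimed. That proposition already tells us $\lambda$ sends a locally connected $(L,f)$ to a spatial cosheaf $R_f$ whose display space is locally connected, and the discussion of spatial cosheaves records that $\gamma$ carries $\CShsp(X)$ into $\Toplc/X$; so the two functors do restrict as stated. What remains is a natural isomorphism $\Hom_{\CShsp(X)}(R_f, F) \cong \Hom_{\Toplc/X}((L,f), \gamma(F))$.

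First I would build the unit $\eta_{(L,f)}: (L,f) \to \gamma\lambda(L,f) = (\dis(R_f), \gamma_{R_f})$. For $\ell \in L$ and each neighborhood $U \ni f(\ell)$, the point $\ell$ determines a unique connected component $\b{\ell}_{f^{-1}(U)} \in \pi_L(f^{-1}(U)) = R_f(U)$, and as $U$ shrinks these components are compatible under the transition maps of $R_f$, hence assemble into an element of the costalk $\dis_{f(\ell)}(R_f) = \lim_{U \ni f(\ell)} R_f(U)$. Setting $\eta_L(\ell) = \big((\b{\ell}_{f^{-1}(U)})_U, f(\ell)\big)$ gives a map over $X$ by construction, and its continuity is exactly where local connectedness enters: the preimage of a basic open set $\cev{\beta}_U$ is the component $\beta \subseteq f^{-1}(U)$ itself, which is open in $L$ precisely because $L$ is locally connected.

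Next I would build the counit $\epsilon_F: \lambda\gamma(F) = R_{\gamma_F} \to F$ and argue it is an isomorphism. By definition $R_{\gamma_F}(U) = \pi_{\dis F}(\gamma_F^{-1}(U))$, and $\gamma_F^{-1}(U) = \bigsqcup_{\beta\in F(U)} \cev{\beta}_U$ is a disjoint union (distinct $\beta$ yield disjoint $\cev{\beta}_U$, since $\rho_U^x$ is single-valued) of basic open sets that are non-empty and connected because $F$ is spatial. Hence the connected components of $\gamma_F^{-1}(U)$ are exactly the $\cev{\beta}_U$, giving a natural bijection $\pi_{\dis F}(\gamma_F^{-1}(U)) \cong F(U)$; this reconstruction is the whole point of the spatiality condition and makes $\epsilon_F$ an iso. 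With $\epsilon$ invertible, the remaining verification is the two triangle identities $\gamma\epsilon \circ \eta\gamma = \id_\gamma$ and $\epsilon\lambda \circ \lambda\eta = \id_\lambda$, which unwind to saying that passing a point to its tower of connected components and then reading those components back off recovers the original data.

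The main obstacle I anticipate is compatibility bookkeeping rather than any single deep idea. One must confirm that $\ell \mapsto (\b{\ell}_{f^{-1}(U)})_U$ is genuinely a point of the inverse limit (compatibility under $R_f\b{U'\subseteq U}$), and, in the reverse half of the bijection, that the value assigned to a component $\beta$ of $f^{-1}(U)$ is independent of the representative $\ell \in \beta$. The latter is the crux: continuity of a morphism $g:L\to\dis F$ over $X$ forces the sets $g^{-1}(\cev{b}_U)$ to form an open partition of $f^{-1}(U)$ indexed by $b \in F(U)$, so the $F(U)$-value of $g$ is constant on each connected component -- again exactly where local connectedness of $L$ is indispensable. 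Once this well-definedness is secured, naturality in both variables and the triangle identities reduce to routine diagram chases.
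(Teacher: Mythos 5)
The paper never proves this theorem itself --- it is imported from Funk (Theorem 5.9) and Woolf (Proposition B.2) as a background result --- so there is no in-paper argument to compare against; judged on its own, your reconstruction is correct and follows the standard route those sources take. Your unit (the tower-of-components map, continuous exactly because components of open subsets of a locally connected space are open), your counit (the identification $\pi_0\big(\gamma_F^{-1}(U)\big)\cong F(U)$ coming from the fact that the $\cev{\beta}_U$ form an open partition by non-empty connected sets when $F$ is spatial, which is precisely the content of Proposition~\ref{prop:spatial_iso}), and your well-definedness argument for the transpose (local constancy of the $F(U)$-coordinate of any map over $X$ into $\dis(F)$, hence constancy on components) are exactly the three pillars of the cited proofs, with Proposition~\ref{prop:reeb_spatial} correctly invoked to see that both functors restrict as claimed.
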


\begin{proposition}[Woolf~\cite{woolf08fundamental} Proposition B.4. (see also Funk~\cite{funk95display} Definition 5.12)]\label{prop:spatial_iso}
  A precosheaf is spatial if and only if the counit of $\lambda\dashv\gamma$ yields a natural isomorphism.
\end{proposition}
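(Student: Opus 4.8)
The plan is to make the counit $\epsilon_F\colon \lambda\gamma(F)\to F$ entirely explicit and then read spatiality directly off of it. For any precosheaf $F$ we have $\lambda\gamma(F) = R_{\gamma_F}$, so $\lambda\gamma(F)(U) = \pi_0\big(\gamma_F^{-1}(U)\big)$, and the formula for the display space projection gives $\gamma_F^{-1}(U) = \bigcup_{\beta\in F(U)}\cev{\beta}_U$. The first and most important observation is that this is not merely a cover but a \emph{partition into clopen pieces}: every point $(\alpha,x)$ of $\gamma_F^{-1}(U)$ has a single well-defined image $\rho_U^x(\alpha)\in F(U)$, so it lies in exactly one $\cev{\beta}_U$, whence the $\cev{\beta}_U$ are pairwise disjoint; and since each is open with open complement $\bigcup_{\beta'\neq\beta}\cev{\beta'}_U$, each is clopen in $\gamma_F^{-1}(U)$. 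It follows that every connected component $C$ of $\gamma_F^{-1}(U)$ is contained in exactly one $\cev{\beta}_U$, and I claim the counit component is precisely the resulting ``index map'' $\epsilon_{F,U}\colon \pi_0\big(\gamma_F^{-1}(U)\big)\to F(U)$, $C\mapsto\beta$.

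Next I would check that $\epsilon_F$ is a morphism of precosheaves. For $U'\subseteq U$ the costalk projections satisfy the cone relation $F\b{U'\subseteq U}\circ\rho_{U'}^x = \rho_U^x$, which immediately yields the containment $\cev{\beta'}_{U'}\subseteq\cev{F\b{U'\subseteq U}(\beta')}_U$. Chasing a component $C'$ of $\gamma_F^{-1}(U')$ into the (larger) component of $\gamma_F^{-1}(U)$ containing it then gives $\epsilon_{F,U}\circ R_{\gamma_F}\b{U'\subseteq U} = F\b{U'\subseteq U}\circ\epsilon_{F,U'}$, so the naturality square commutes.

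With the partition in hand the equivalence is essentially a translation of two words. If $F$ is spatial then every $\cev{\beta}_U$ is non-empty and connected, so the clopen pieces $\cev{\beta}_U$ \emph{are} exactly the connected components of $\gamma_F^{-1}(U)$; hence each $\epsilon_{F,U}$ is a bijection and $\epsilon_F$ is a natural isomorphism. Conversely, suppose $\epsilon_F$ is a natural isomorphism. Surjectivity of $\epsilon_{F,U}$ says each $\beta\in F(U)$ is the image of some component, forcing $\cev{\beta}_U\neq\emptyset$; injectivity says at most one component maps to $\beta$, so the clopen set $\cev{\beta}_U$ — being a union of components — contains exactly one and therefore equals it, hence is connected. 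Thus every $\cev{\beta}_U$ is non-empty and connected, i.e.\ $F$ is spatial.

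The main obstacle is the bookkeeping of the first paragraph: correctly matching the abstract counit of Theorem~\ref{thm:dis_adjoint} with this concrete ``component $\mapsto$ index'' map, and in particular verifying that the basic open sets $\cev{\beta}_U$ are genuinely disjoint and clopen in $\gamma_F^{-1}(U)$ rather than merely covering it. Once that partition structure is established, ``non-empty'' becomes surjectivity and ``connected'' becomes injectivity, and both directions of the equivalence follow without further work.
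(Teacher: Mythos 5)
The paper does not actually prove this proposition --- it is imported as a citation (Woolf, Proposition B.4; Funk, Definition 5.12) --- so there is no in-paper argument to compare against; judged on its own, your proof is correct and is essentially the standard argument underlying the cited result. The decisive observation is exactly the right one: since each point $(\alpha,x)\in\gamma_F^{-1}(U)$ determines a single element $\rho_U^x(\alpha)\in F(U)$, the basic open sets $\{\cev{\beta}_U\}_{\beta\in F(U)}$ form a partition of $\gamma_F^{-1}(U)$ into relatively clopen pieces; hence every connected component lies in exactly one piece, every clopen piece is a union of components, and the equivalence reduces to the dictionary ``non-empty $\leftrightarrow$ surjective, connected $\leftrightarrow$ injective.'' Your naturality check via the cone relation $F\b{U'\subseteq U}\circ\rho_{U'}^x = \rho_U^x$ is also correct, and the argument goes through verbatim with the paper's definition of $\pi_0$ (components generated by connected \emph{open} sets), since each $\cev{\beta}_U$ is open.

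Two points should be firmed up rather than asserted. First, the identification of your ``component $\mapsto$ index'' map with the abstract counit: this is a statement about the counit specifically, so you need to recall that the adjunction bijection of Funk/Woolf sends a morphism $h : (Y,f)\to\gamma(F)$ over $X$ to the precosheaf morphism taking $C\in\pi_0\big(f^{-1}(U)\big)$ to the index of the unique piece $\cev{\beta}_U$ containing $h(C)$; the counit is the transpose of $\id_{\gamma(F)}$, which is then literally your map. Second, Theorem~\ref{thm:dis_adjoint} as stated restricts the adjunction to $\Toplc/X$ and $\CShsp(X)$, so for a \emph{non-spatial} precosheaf $F$ the counit at $F$ is only meaningful in the unrestricted adjunction between $\Top/X$ and $\PCSh(X)$ (Funk's more general statement, alluded to in the paper's citation of his Theorem 3.1); your explicit construction is in effect working in that setting, which is the right thing to do, but it deserves a sentence. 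Neither point is a gap in the mathematics --- both are bookkeeping you have already flagged --- and the equivalence itself is established correctly.
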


\begin{example}[Merge Trees and Reeb Graphs]
  Let $(L,f)$ be a locally connected space over $X$.
  Regarding the pre-image of $f$ as a filtration $f^{-1}: \Op_X\to \Toplc$ of subspaces $f^{-1}(U)\subset L$,
  we can extend the earlier definition of merge trees to locally connected spaces over $X$ as
  \[ \cT_f\defined \cT_{R_f} = \El\big(\lambda(L,f)\big) = \El\big(\pi_L f^{-1}\big).\]
  The \defn{Reeb graph} of $(L,f)$ is the locally connected space over $X$ given by the unit of the adjunction $\lambda\dashv \gamma$:
  \[ \gamma(R_f) = \gamma\lambda(L,f) = \big(\dis(R_f), \gamma_f\big).\]
  Equivalently, as in the original work by Funk~\cite{funk95display}, the Reeb graph can be described by the pullback
  of the unique maps from $X$ and the merge tree $\sT_f$ of $(L,f)$ into the total locale of the terminal precosheaf (see Appendix~\ref{sec:display}):
  \begin{equation}
    \begin{tikzcd}[sep=huge]
      \dis(R_f)
        \arrow[r, dotted, "\pi_{f}"]
        \arrow[d, dotted, "\gamma_f"]
        &\sT_f \arrow[d, dotted, "\exists!"]\\
      X\arrow[r, "\Op_X(-)"]
      & \sE_X(\id_X).
    \end{tikzcd}
  \end{equation}
\end{example}

\subsection{The Mapper Functor}\label{sec:mapper_fun}

Let $\U : I\to \Op_X$ be a locally finite good open cover of a topological space $X$
and recall the canonical map $\eta : X\to \N_\U$ takes each point $x\in X$ to the set of elements $i\in I$ corresponding to cover sets containing $x$.
In recent work Brown et al.~\cite{brown21probabilistic} define the \emph{mapper (pre)cosheaf} associated with a precosheaf $F$ and $\U$ as
\begin{align}\label{eq:brown_mapper}
  \nonumber\M_\U(F) : \Op_X &\longrightarrow \Set\\
    U &\longmapsto F\Big(\bigcup_{x\in U}\bigcap_{i\in\eta(x)} \U(i)  \Big) = F(\U_U),
\end{align}
(see Notations~\ref{notation:nerve})
and the authors assert that this formulation is equivalent to the pixelization of $F$ by $\eta$.

The goal of this section is to formalize this equivalence.
In analog with the construction of the inverse image for sheaves by passing to the corresponding \'{e}tale space,
we will do so by passing through spatial cosheafification in order to define a spatial inverse image functor as follows.

\begin{definition}[Spatial Inverse Image]
  The \defn{(spatial) inverse image functor} associated with a locally connected space $(L,f)$ over $X$ is the composition $f^*\defined \lambda f^\star\gamma : \PCSh(X)\to \CShsp(L)$
  where $f^\star : \Toplc/X\to \Toplc/L$ is the change of base functor associated with $f$ (see Appendix~\ref{sec:cats2}).
\end{definition}

\begin{definition}[$\U$-Pixelization]
  The \defn{$\U$-pixelization}~\cite{botnan20relative} of a precosheaf $F$ by $\U$ is defined as the \emph{spatial} cosheaf
  given by the pullback of the pushforward of $F$ along $\eta$:
  \[ F^\U\defined \eta^*\eta_*(F) : \Op_X\longrightarrow \Set.\]
\end{definition}

In order to show that the $\U$-pixelization of a precosheaf is equivalent to the corresponding mapper (pre)cosheaf
we will use the fact that $\eta$ is essential to make use of a result due to Funk~\cite{funk95display} which may be found in Appendix~\ref{sec:inverse} (Theorem~\ref{thm:funk_pullback}).
Specifically, Proposition~\ref{prop:eta_open} implies that $\eta^{-1} : \Op_{\N_\U}\to \Op_X$ has a left adjoint defined as the upward closure of the image:
\begin{align*}
  \eta_! : \Op_X&\longrightarrow \Op_{\N_\U}\\
  U&\longmapsto \bigcup_{x\in U}\up{\eta(x)}.
\end{align*}
It follows that the direct image $\eta_*$ has a right adjoint $\eta^\dagger$ defined as precomposition with the left adjoint $\eta_!$:
\[ \adjoint{\PCSh(X)}{\PCSh(\N_\U).}["\eta_*"]["\eta^\dagger"]\]
Following Convention~\ref{con:surjective}, Theorem~\ref{thm:poset_pullback} implies that $\M_\U(F) = \eta^\dagger\eta_*(F)$ is a spatial cosheaf for any precosheaf $F$ on $X$.
We can therefore define the mapper functor more formally as follows.

\begin{definition}[Mapper Functor]
  The \defn{$\U$-mapper functor} is defined
  \begin{align*}
    \M_\U \defined \eta^\dagger\eta_* : \PCSh(X) &\longrightarrow \CShsp(X)\\
      F &\longrightarrow \eta^\dagger\eta_*(F) = F\eta^{-1}\eta_!
  \end{align*}
\end{definition}

\begin{remark}
  The equivalence of the mapper (pre)cosheaf defined in~\cite{brown21probabilistic} (Equation~\eqref{eq:brown_mapper})
  with the $\U$-mapper functor defined above is given by the following equalities:
  \[ \U_U = \bigcup_{x\in U}\bigcap_{i\in\eta(x)} \U(i) = \eta^{-1}\Big(\bigcup_{x\in U} \up{\eta(x)} \Big) = \eta^{-1}\eta_!(U).\]
  Moreover, because every spatial (pre)cosheaf is a cosheaf~\cite{woolf08fundamental}, we have the following equivalence by the cosheaf axiom (see Appendix~\ref{sec:cosheaves2}):
  \begin{equation}\label{eq:mapper}
    \M_\U(F)(U)
      = F\eta^{-1}\Big( \bigcup_{\sigma\in\eta_!(U)} \up{\sigma}\Big)
      = F\Big(\bigcup_{\sigma\in\eta_!(U)} \U_\sigma \Big)
      \simeq \colim_{\sigma\in\eta_!(U)} F(\U_\sigma).
  \end{equation}
\end{remark}

Theorem~\ref{thm:mapper} follows directly from Corollary~\ref{cor:poset_pullback}.
We include a proof here for completeness.

\begin{theorem}\label{thm:mapper}
  Let $X$ be a locally connected topological space and let $\U : I \to \Op_X$ be a locally finite good open cover.
  Then for all precosheaves $F$ on $X$ the counit of $\lambda\dashv\gamma$ yeilds a natural isomorphism
  \[ \Lambda_F^\U : F^\U\simeq\M_\U(F).\]
\end{theorem}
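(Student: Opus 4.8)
The plan is to factor the claimed isomorphism through spatial cosheafification $\lambda\gamma$: I would use essentialness of $\eta$ to identify the \emph{geometric} inverse image $\eta^*$ with the \emph{combinatorial} precomposition $\eta^\dagger$ up to spatialization, and then discharge the spatialization using the counit of $\lambda\dashv\gamma$.

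First I would record the adjunctions in play. By Proposition~\ref{prop:eta_open} the canonical map $\eta : X\to \N_\U$ is essential, so $\eta^{-1}$ has a left adjoint $\eta_!\dashv \eta^{-1}$ given by the upward closure of the image, and precomposition with $\eta_!$ supplies the right adjoint $\eta_*\dashv\eta^\dagger$ defining $\M_\U(F)=\eta^\dagger\eta_*(F)=F\eta^{-1}\eta_!$. Setting $G\defined \eta_*(F)\in\PCSh(\N_\U)$, the two sides of the statement read $F^\U=\eta^*(G)=\lambda\eta^\star\gamma(G)$ and $\M_\U(F)=\eta^\dagger(G)=G\eta_!$.

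The key step is the comparison of these two constructions. The geometric inverse image $\eta^*(G)=\lambda\eta^\star\gamma(G)$ is formed by building the display space $\gamma(G)$ over $\N_\U$, pulling it back along $\eta$, and reading off the Reeb cosheaf of the resulting locally connected space over $X$; the combinatorial object $\eta^\dagger(G)=G\eta_!$ is mere precomposition. Funk's pullback theorem (Theorem~\ref{thm:funk_pullback}), specialized to the essential map $\eta$ into the Alexandroff space $\N_\U$ (Corollary~\ref{cor:poset_pullback}), furnishes a natural isomorphism $\eta^*(G)\simeq\lambda\gamma\big(\eta^\dagger(G)\big)$, identifying the geometric inverse image with the spatialization of the combinatorial one. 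I expect this to be the main obstacle, since it amounts to matching, over each open set $U$, the connected components of the pulled-back display space with the values $G(\eta_!U)$ of $\eta^\dagger(G)$; essentialness is precisely what makes $\eta_!$ transport the basic open sets $\cev{\beta}_U$ correctly, while Convention~\ref{con:surjective} guarantees that $\eta$ is surjective and no costalks are lost.

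Finally I would discharge the spatialization. By Theorem~\ref{thm:poset_pullback} the cosheaf $\M_\U(F)=\eta^\dagger\eta_*(F)$ is already spatial, so Proposition~\ref{prop:spatial_iso} gives a natural isomorphism $\lambda\gamma\big(\M_\U(F)\big)\simeq\M_\U(F)$ realized by the counit of $\lambda\dashv\gamma$. Composing with the comparison above,
\[ F^\U=\eta^*\eta_*(F)\;\simeq\;\lambda\gamma\big(\eta^\dagger\eta_*(F)\big)\;\simeq\;\M_\U(F), \]
yields $\Lambda_F^\U$, which is essentially the counit at $\M_\U(F)$. Naturality in $F$ is automatic, since both factors are instances of natural transformations, so the composite is natural and the theorem follows directly from Corollary~\ref{cor:poset_pullback}.
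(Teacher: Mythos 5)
Your proposal is correct and follows essentially the same route as the paper's proof: the comparison $F^\U=\lambda\eta^\star\gamma\eta_*(F)\simeq\lambda\gamma\,\eta^\dagger\eta_*(F)$ via Funk's pullback theorem (Corollary~\ref{cor:adjoint_equiv}, using surjectivity from Convention~\ref{con:surjective}), followed by spatiality of $\M_\U(F)$ (Theorem~\ref{thm:poset_pullback}) to make the counit of $\lambda\dashv\gamma$ an isomorphism (Proposition~\ref{prop:spatial_iso}). This is precisely the argument of Corollary~\ref{cor:poset_pullback}, which the paper itself notes the theorem follows from; if anything, your explicit appeal to Theorem~\ref{thm:poset_pullback} for spatiality of $\M_\U(F)$ is slightly more careful than the paper's citation of Lemma~\ref{lem:special_spatial}, which only gives spatiality of $\eta_*(F)$ on $\N_\U$.
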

\begin{proof}
  By Corollary~\ref{cor:adjoint_equiv}, $\eta^\star\gamma\eta_* \simeq \gamma\eta^\dagger\eta_* = \gamma\M_\U$,
  so $F^\U \simeq \lambda\gamma\M_\U$.
  Because $\N_\U$ is a poset endowed with the specialization topology,
  Lemma~\ref{lem:special_spatial} implies that $\eta_*(F)$ is spatial,
  so the counit of $\lambda\dashv\gamma$ at $F^\U$ yields a unique natural isomorphism
  $\Lambda_F^\U : F^\U\simeq\M_\U(F)$ as desired.
\end{proof}

\begin{remark}
  Also in the work by Brown et al.~\cite{brown21probabilistic},
  the authors introduce the \emph{enhanced mapper graph} as the display space of the mapper cosheaf $\gamma\M_\U(F)$.
  However, because $\eta$ is essential, the pullback $\eta^\star\gamma$ is equivalent to $\gamma\eta^\dagger$ by Corollary~\ref{cor:adjoint_equiv}.
  It follows that the enhanced mapper graph is equivalent to the pullback of $\gamma\eta_*$ along $\eta$:
  \[ \gamma\M_\U  = \gamma\eta^\dagger\eta_* \simeq \eta^\star\gamma \eta_*.\]
\end{remark}

\begin{example}[$\U$-Mapper Reeb Graph]
  Let $(L,f)$ be a locally connected space over $X$.
  The projection of the pullback of $\gamma_{\eta\circ f} = \gamma_{\eta_*(R_f)}$ along $\eta$ onto $X$ defines a space
  $\big(X\times_{\N_\U} \dis(R_{\eta\circ f}\big), \gamma_{\eta\circ f}^\eta)$ over $X$:
  \begin{equation}\label{dgm:reeb_pullback}
    \begin{tikzcd}[sep=huge]
      X\times_{\N_\U} \dis(R_{\eta\circ f})
        \arrow[r, dotted]
        \arrow[d, dotted,"\gamma_{\eta\circ f}^\eta"]
      & \dis(R_{\eta\circ f})\arrow[d, "\gamma_{\eta\circ f}"]
      \\
      X\arrow[r,"\eta"]
      & \N_\U
    \end{tikzcd}
  \end{equation}
  The merge tree of this space is the total locale of the pixelization $R_f^\U\simeq \M_\U(R_f)$ of the Reeb cosheaf $R_f$ by $\U$:
  \[ \sT_f^\U\defined \sE_X(R_f^\U).\]
  The points of $\sT_f^\U$ are pairs $\beta_U = (U,\beta)$ for each $U\in\Op_X$ and each connected component $\beta\in R_f^\U(U)$.
  That is, $\beta$ is a connected component of $f^{-1}(\U_U)$ where $\U_U = \eta^{-1}\eta_!(U) \supseteq U$.
\end{example}

\subsection{Mapper in Practice}\label{sec:mapper_examples}

Traditionally, the mapper algorithm takes as input
\begin{enumerate*}
  \item a cover $\U$ of a topological space $X$,
  \item a filter function $f : P\to X$ from a \emph{set} $P$ of data points, and
  \item a \emph{clustering} of the data.
\end{enumerate*}
The result is the $\U$-mapper Reeb graph: a space over $X$ that has been ``pixelized''~\cite{botnan20relative} by the cover $\U$.
However, in general, even for locally finite covers $\U : I\to \Op_X$, the pullback $\dis^\U(F) = X\times_{\N_\U} \dis\big(\eta_*(F)\big)$ is not finite, nor does it have finitely generated topology.

On the other hand, the intermediate space $\dis\big(\eta_*(F)\big)$ has finitely generated topology and is finite for finite covers.
Moreover, because $\N_\U$ is endowed with the specialization topology,
Proposition~\ref{prop:rho_bi} implies that the display space $\dis\big(\eta_*(F)\big)$ is homeomorphic to the total locale of $\eta_*(F)$;
in particular, for any locally connected space $(L,f)$ over $X$,
this implies that the display space of $\eta_*(R_f) = \lambda(L, \eta\circ f) = R_{\eta\circ f}$ is homeomorphic to the merge tree of $(L,\eta\circ f)$:
\[ \dis\big(\eta_*(R_f)\big) = \dis\big( \lambda(L,\eta\circ f) \big) \cong \sT_{\eta\circ f}.\]

Let $(L,f)$ be a locally connected space over $X$ and let $\U: I \to\Op_X$ be a locally finite good open cover.
Then $f^{-1}\U : I\to \Op_L$ is a cover of $L$ and $R_f\U : I\to\Set$ is a function taking each $i\in I$ to the connected components of the pullback $f^{-1}\big(\U(i)\big)$.
Because $\U$ is discretely parameterized, the merge tree of $f^{-1}\U$ (regarded as a filtration of spaces) is equivalent to the coproduct $\bigsqcup R_f\U$ and,
because $L$ is locally connected, each connected component $\beta\in R_f\big(\U(i)\big)$ is a connected open set in $L$.
Thus, the projection $(i,\beta)\mapsto \beta$ defines a cover $\U^f : \bigsqcup R_f\U\to \Op_L$ of $L$.
The nerve of $\U^f$ is the simplicial complex traditionally associated with the output of the mapper algorithm:
\[ \N_\U^f = \Big\{B\subseteq \bigsqcup R_f\U \mid \bigcap_{(\beta,i)\in B} \beta \neq \emptyset\Big\}.\]
Proof of Proposition~\ref{prop:display_nerve_iso} can be found in Appendix~\ref{sec:proofs}.
\begin{proposition}\label{prop:display_nerve_iso}
  The nerve of $\U^f$ is isomorphic to the display space of $R_{\eta\circ f}$.
\end{proposition}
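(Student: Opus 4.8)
The plan is to exhibit an explicit order-reversing bijection between the face poset of $\N_\U^f$ and the points of $\dis(R_{\eta\circ f})$, and then to check that it carries the specialization topology of the nerve to the topology on the display space. First I would compute the display space concretely. Since $\N_\U$ carries the specialization topology, every simplex $\sigma$ has a smallest open neighborhood $\up\sigma$, so the costalk of $R_{\eta\circ f}$ at $\sigma$ is simply its value there; using $\eta^{-1}(\up\sigma)=\U_\sigma$ this gives
\[ \dis_\sigma(R_{\eta\circ f}) \;=\; R_{\eta\circ f}(\up\sigma) \;=\; \pi_L\big(f^{-1}(\U_\sigma)\big). \]
Hence the points of $\dis(R_{\eta\circ f})$ are exactly the pairs $(\sigma,\gamma)$ with $\sigma\in\N_\U$ and $\gamma$ a connected component of $f^{-1}(\U_\sigma)$; equivalently, via the homeomorphism $\dis(\eta_*(R_f))\cong\sT_{\eta\circ f}$ recalled before the statement, these are the vertices of the merge tree of $(L,\eta\circ f)$.

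Next I would define the candidate bijection. A face $B=\{(i,\beta_i)\mid i\in\sigma\}\in\N_\U^f$ has a well-defined index set $\sigma=\{i\mid (i,\beta_i)\in B\}$, since distinct components of the same $f^{-1}(\U(i))$ are disjoint and so no index can repeat in a face, and its defining nonempty intersection $\bigcap_{i\in\sigma}\beta_i$ lies inside $f^{-1}(\U_\sigma)=\bigcap_{i\in\sigma}f^{-1}(\U(i))$. I would send $B$ to $(\sigma,\gamma_B)$, where $\gamma_B$ is the component of $f^{-1}(\U_\sigma)$ meeting $\bigcap_{i\in\sigma}\beta_i$, and conversely send $(\sigma,\gamma)$ to the face $\{(i,\beta_i(\gamma))\mid i\in\sigma\}$, where $\beta_i(\gamma)$ is the component of $f^{-1}(\U(i))$ containing $\gamma$. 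These are mutually inverse precisely when the following key lemma holds: for each $\sigma\in\N_\U$ the trace map $\gamma\mapsto(\beta_i(\gamma))_{i\in\sigma}$ is a bijection from $\pi_L(f^{-1}(\U_\sigma))$ onto the families $(\beta_i)_{i\in\sigma}$ with $\bigcap_i\beta_i\neq\emptyset$. Surjectivity is immediate; the content is injectivity, i.e.\ that whenever a common intersection $\bigcap_{i\in\sigma}\beta_i$ of whole components is nonempty it is contained in a \emph{single} component of $f^{-1}(\U_\sigma)$.

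Establishing this injectivity is the main obstacle, and it is exactly where the good-cover hypothesis must enter: one needs the contractibility (hence connectedness) of each $\U_\sigma$ to control how the components of the various $f^{-1}(\U(i))$ can overlap over $\U_\sigma$, ruling out two distinct components of $f^{-1}(\U_\sigma)$ sharing the same trace on every $f^{-1}(\U(i))$. I would isolate this as a separate lemma, analysing for a point $x\in\bigcap_{i\in\sigma}\beta_i$ the basic open $\U_{f(x)}$ supported by $\U$ together with the local connectivity of $L$, and reducing the claim to the local structure of the good cover (Proposition~\ref{prop:eta_open} and the Nerve Theorem). This is the step where care is required and where the statement is genuinely sensitive to the hypotheses on $\U$ and on $(L,f)$.

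Finally I would upgrade the set-level bijection to a homeomorphism. The face relation $B\subseteq B'$ forces $\sigma_B\subseteq\sigma_{B'}$ and, since enlarging the index set only shrinks the intersection, $\gamma_{B'}\subseteq\gamma_B$; under the identification $\dis(R_{\eta\circ f})\cong\sT_{\eta\circ f}$ this is exactly the reverse of the merge-tree order $(\up\sigma,\gamma)\preceq(\up\tau,\delta)\Leftrightarrow\tau\subseteq\sigma\ \text{and}\ \gamma\subseteq\delta$. Thus the map is an order-reversing bijection of posets, and since $\N_\U^f$ carries the specialization topology while $\sT_{\eta\circ f}$ carries the cospecialization topology, order reversal sends up-sets to down-sets, so the map and its inverse are both continuous. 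The only genuinely nontrivial step is the injectivity lemma of the previous paragraph; the remaining verifications are bookkeeping with the costalk formula and the two Alexandroff topologies.
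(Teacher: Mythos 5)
Your proposal follows the same skeleton as the paper's proof---identify $\dis(R_{\eta\circ f})$ with $\bigsqcup_{\sigma\in\N_\U} R_f(\U_\sigma)$ via Proposition~\ref{prop:rho_bi}, then match faces of $\N_\U^f$ with pairs $(\sigma,\alpha)$ by intersecting components in one direction and taking traces in the other---and you correctly isolate the crux: injectivity of the trace map, i.e.\ that a nonempty intersection $\bigcap_{i\in\sigma}\beta_i$ of components $\beta_i\in\pi_L\big(f^{-1}(\U(i))\big)$ meets only one component of $f^{-1}(\U_\sigma)$. But you leave this as a sketched lemma and claim the good-cover hypothesis on $\U$ will supply it. It will not: goodness of $\U$ constrains intersections downstairs in $X$, and contractibility of $\U_\sigma$ gives no control over $\pi_0\big(f^{-1}(\U_\sigma)\big)$; neither Proposition~\ref{prop:eta_open} nor the Nerve Theorem (Theorem~\ref{thm:nerve}) sees $f$ at all. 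In fact the lemma is \emph{false} under the stated hypotheses. Take $L=S^1\subset\RR^2$, $X=\RR$, $f$ the height function, and the locally finite good cover $\U(1)=(-\infty,1)$, $\U(2)=(0,\infty)$. Both preimages $f^{-1}(\U(i))$ are connected, so $\U^f$ has two cover sets $\beta_1,\beta_2$, and $\beta_1\cap\beta_2=f^{-1}\big((0,1)\big)$ is nonempty with \emph{two} components $\gamma_1,\gamma_2$. Then $\N_\U^f$ has three faces, while $\bigsqcup_{\sigma\in\N_\U}R_f(\U_\sigma)$ has four points ($\beta_1,\beta_2,\gamma_1,\gamma_2$); your trace map sends both $\gamma_1$ and $\gamma_2$ to the pair $(\beta_1,\beta_2)$, so no bijection of the kind you describe exists, and indeed the two posets are not isomorphic (one is a single edge with its vertices, the other is the Reeb-graph-like object with two merges).

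For comparison, the paper's own proof founders at exactly the point you flagged: having produced $\alpha\in R_f(\U_{B_1})$ with $R_f\b{\U_{B_1}\subseteq\U(i)}(\alpha)=\beta_i$ for all $i\in B_1$, it asserts $\alpha\supseteq\Meet B_0$ and concludes $\alpha=\Meet B_0$. The true inclusion runs the other way, $\alpha\subseteq\Meet B_0$, and equality is precisely the connectivity claim you identified as the main obstacle; in the example above $\Meet B_0=\gamma_1\sqcup\gamma_2$ is not a component, so $\phi(B)=(\Meet B_0,B_1)$ is not even a well-defined point of the coproduct. So your diagnosis of where the difficulty lies is sharper than the paper's argument, but neither closes the gap, because it cannot be closed as stated: the proposition requires an additional hypothesis, e.g.\ that $\bigcap_{(\beta,i)\in B}\beta$ is connected for every $B\in\N_\U^f$ (which holds in particular when each $f^{-1}(\U_\sigma)$ is connected or empty). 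Under such a hypothesis your trace-map bijection, the order-reversal, and the specialization/cospecialization bookkeeping all go through.
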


\begin{figure}[ht]
  \centering
  \includegraphics[width=0.9\textwidth]{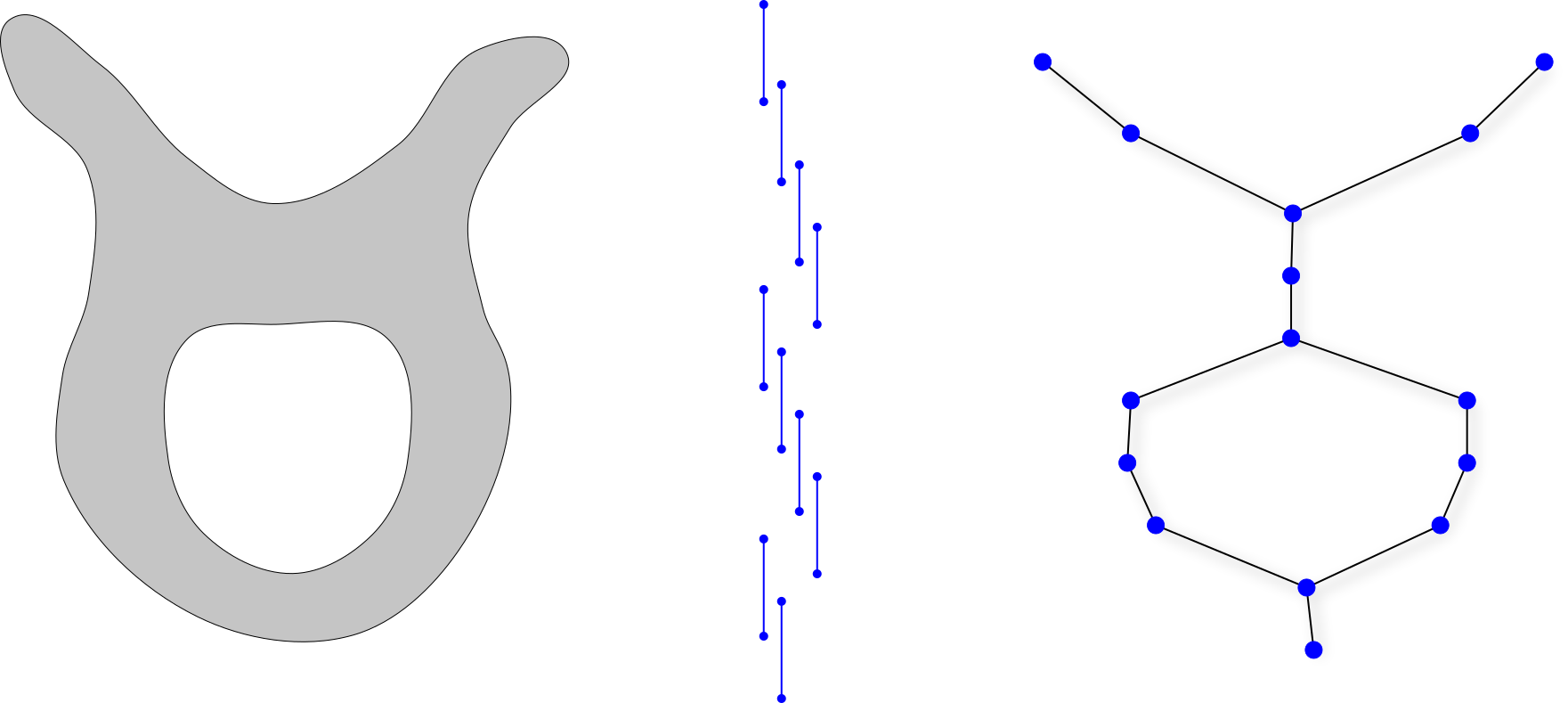}
  \caption{(Left) a locally connected space $(L,f)$ over $\RR$ taking each point to its $y$-coordinate and a cover of $\im f$ (middle).
          (Right) the nerve $\N_\U^f$ of the cover of $L$ produced by the mapper algorithm (right).}\label{fig:nerve_reeb}
\end{figure}

\begin{examples}\hfill
  \begin{enumerate}
    \item Let $\U : I\to \Op_L$ be a locally finite good open cover of a locally connected space $L$.
      Then the Reeb cosheaf of the space $(L,\eta)$ is a cosheaf on $\N_\U$ that,
      in particular, takes each principal up set $\up{\sigma}$ to the connected components of $\eta^{-1}(\up{\sigma}) = \U_\sigma$
      which, because each $\U_\sigma$ is connected, is the singleton $\pi_0(\U_\sigma) = \{\U_\sigma\}$.
      Proposition~\ref{prop:rho_bi} therefore implies that
      \[ \dis(R_\eta) = \bigsqcup_{\sigma\in\N_\U}\lim_{S\ni \sigma} \pi_L(\U_S)\simeq \bigsqcup_{\sigma\in\N_\U}\pi_L(\U_\sigma) = \{(\U_\sigma,\sigma)\mid \sigma\in \N_\U\} \simeq \N_\U, \]
      so Reeb graph of $(L,\eta)$ is isomorphic to the identity over the nerve $\N_\U$.
    \item Let $(X,\dist)$ be a metric space and let $P$ be a locally finite $\e$-sample.
      Let $B^\e : P\to \Op_X$ be the cover of $X$ defined as the collection of $\e$-balls $B^\e(p) \defined \ball^\e(p)$ centered at sample points $p\in P$.
      The canonical map $\eta : X\to \cech^\e(P)$ has a pre-image defined for basic open sets $\up{\sigma}$
      as the intersection of metric balls $\eta^{-1}(\up{\sigma}) = \bigcap_{p\in \sigma} \ball^\e(p)$.
      It follows that
      \[ B^\e_x = \bigcap_{p\in\eta(x)} \U(p) =\bigcap_{p\in P_\e(x)} \ball^\e(p)\]
      is the intersection of $\e$-balls centered at points in $p\in P$ within distance $\e$ of $x$,
      and the pixelization of an open set $U\in \Op_X$ is the union
      \[ B^\e_U = \bigcup_{x\in U}B^\e_x = \bigcup_{x\in U}\bigcap_{p\in P_\e(x)}\ball^\e(p) = \bigcup_{\sigma\in \eta_!(U)}\bigcap_{p\in\sigma}\ball^\e(p).\]
    \end{enumerate}
\end{examples}

\section{Multiscale Mapper}\label{sec:multi}

There are two essential parameters to the mapper algorithm: the choice of \emph{filter function}, and the choice of \emph{cover},
and the usefulness of the mapper algorithm depends on both~\cite{carriere18structure}.
Multiscale mapper~\cite{dey16multiscale} proposes a persistence-based solution by instead considering a filtration or \emph{tower} of covers.
The result is a divisive hierarchical clustering scheme that analyzes how the output of the mapper algorithm evolves as the cover is refined.

In this section, we adapt the results of the previous section to the multiscale setting.
We will begin by reviewing recent work by Curry et al.~\cite{curry21decorated} that provides a convenient language for filtrations of covers in Section~\ref{sec:param}.
In Section~\ref{sec:mapper_filt} we define multiscale mapper filtrations and show that the mapper construction commutes with finite limits.
We conclude with some discussion on the resulting THDs in Section~\ref{sec:multi_examples}.

\subsection{Parameterized Categories and Filtrations of Covers}\label{sec:param}

Recent work by Curry et al.~\cite{curry21decorated} makes use of \emph{parameterized categories}
to decorate persistence modules by the connected components of the underlying filtration of spaces.
We find that this formalism provides a natural language for covers of topological spaces
in which a cover $\U : I\to\Op_X$ is a \emph{parameterized object} in $\Op_X$
that corresponds to an object of the parameterized category $\mathbf{pOp}_X$.

\begin{definition}[Parameterized Category~\cite{curry21decorated}]
  For any category \cat{C} the category $\cat{pC}$ of \defn{(discretely) parameterized objects in $\cat{C}$} has functors $\X : I\to \cat{C}$ from a discrete category $I$ as objects, and morphisms $\phi : \X\Longuto \X'$ given by a reindexing map $\supp{\phi} : I\to I'$ and a natural transformation $\phi : \X\Rightarrow \X'\supp{\phi}$.
  \begin{equation}
    \begin{tikzcd}[sep=large]
      I \arrow[rr,dotted,"\supp{\phi}"description]
        \arrow[dr,"\X"description]
        \arrow[dr,phantom,""{name=f,above}]
      & & I'
        \arrow[dl,"\X'"description]
        \arrow[dl,phantom,pos=0.5,""{name=g,above}]
      \\
      & \Op_X. &
      \arrow[Rightarrow, from=f,to=g,shift right=1,shorten >=2ex,shorten <=1.5ex,"\phi"description]
    \end{tikzcd}
  \end{equation}
\end{definition}

\begin{definitions}[Filtrations of Covers]
  Let $\P$ be a poset.
  A functor $\sU : \P\to \pOp_X$ is a \defn{filtration of covers} of $X$ if $\sU(p)$ is a cover of $X$ for all $p\in\P$.
  Naturally, $\sU$ is a good (resp. locally finite) filtration of covers if the components are good (resp. locally finite) covers of $X$.
  Importantly, the nerve of a filtration of locally finite covers is a filtration of locally connected spaces $\N_\sU : \P\to \Toplc$.
\end{definitions}

\begin{notation}
  For any $p\in \P$ let $\sU^p \defined \sU(p) : I^p\to \Op_X$ and for any $q\geq p$ let $\sU^q_p\defined \sU^q\circ\supp{\sU}\b{p\leq q} : I^p\to \Op_X$ denote the composition of the reindexing map
  $\supp{\sU}\b{p\leq q} : I^p \to I^q$ with $\sU^q : I^q\to \Op_X$ so that $\sU\b{p\leq q} : \sU^p\Rightarrow \sU^q_p$.
\end{notation}

\begin{definition}[Refinement]
  A \defn{($\P$-indexed) refinement} is a $\op{\P}$-indexed filtration of covers:
  a functor $\sU : \op{\P}\to\pOp_X$ in which $\sU^q(j)\subseteq \sU^p_q(j)$ for all $q\geq p$ and $j\in I^q$.
  \begin{equation}
    \begin{tikzcd}[sep=large]
      I^q \arrow[rr,dotted,"\supp{\sU}\b{q\geq p}"description]
        \arrow[dr,"\sU^q"description]
        \arrow[dr,phantom,""{name=f,above}]
      & & I^p
        \arrow[dl,"\sU^p"description]
        \arrow[dl,phantom,pos=0.5,""{name=g,above}]
      \\
      & \Op_X. &
      \arrow[Rightarrow, from=f,to=g,shift right=1,shorten >=2ex,shorten <=1.5ex,"\sU\b{q\geq p}"description]
    \end{tikzcd}
  \end{equation}
\end{definition}

\begin{definition}[Strict Refinement]
  A refinement $\sU : \op{\P}\to \pOp_X$ is \defn{strict} if Diagram~\eqref{dgm:strict_natural} commutes for all $q\geq p$.
  That is, if $\sU$ is strict then the family of canonical maps $\eta_{\sU(p)} : X\to \N_\sU(p)$ is a cone $\eta_\sU : \Delta_{\op{\P}}(X)\Rightarrow \N_\sU$.
  \begin{equation}\label{dgm:strict_natural}
    \begin{tikzcd}[sep=huge]
      X\arrow[r,"\eta_{\sU(q)}"]\arrow[dr,"\eta_{\sU(p)}"']
      & \N_\sU(q)\arrow[d,"\N_\sU\b{q\geq p}"]\\
      & \N_\sU(p).
    \end{tikzcd}
  \end{equation}
\end{definition}

\begin{notation}
  We often drop the subscript and write $\eta \defined \eta_\sU$ and $\eta_p\defined \eta_{\sU(p)}$ when no confusion may occur.
  For convenience, let $\eta_*^p \defined (\eta_p)_*$ denote the direct image along $\eta_p$
  and let $\eta_!^p\defined (\eta_p)_!$ denote the left adjoint of $\eta_p^{-1}$.
\end{notation}

\subsection{Mapper Filtrations}\label{sec:mapper_filt}

The goal of this section is to show that the mapper functor commutes with \emph{finite} limits.
Formally, let $X$ be a topological space and let $\sU : \op{\P}\to\pOp_X$ be a strict refinement of locally finite good open covers.
The \defn{$\sU$-pixelization} of a precosheaf $F$ on $X$ is a filtration of spatial cosheaves defined
\begin{align*}
  F^\sU : \op{\P} &\longrightarrow \CShsp(X)\\
    p &\longmapsto F^{\sU(p)} = \eta^*_p\eta_*^p(F),
\end{align*}
We will show that the universal arrow $\hat{\eta} : X\to \lim\N_\U$ for which Diagram~\eqref{dgm:nerve_limit} commutes is essential and,
if $\P$ is finite, that the pixelization of a preosheaf by $\hat{\eta}$ is equivalent to the limit of $F^\sU$.

\begin{equation}\label{dgm:nerve_limit}
  \begin{tikzcd}[column sep=huge]
    & & \N_\sU(q)
      \arrow[dd, "\N_\sU\b{q\geq p}"description]\\
    X
      \arrow[r, dotted, "\exists !\hat{\eta}"description]
      \arrow[urr, bend left=15, "\eta_q"description]
      \arrow[drr, bend right=15, "\eta_p"description]
    & \lim \N_\sU
      \arrow[ur, dotted,"\rho_q"description]
      \arrow[dr, dotted,"\rho_p"description]\\
    & & \N_\sU(p)
  \end{tikzcd}
\end{equation}

The limit $\lim\N_\sU$ of the nerve of $\sU$ consists of elements $\sigma = \{\sigma^p\defined \rho_p(\sigma)\} \in \prod_{p\in\P} \N_\sU(p)$ of the product
that are \emph{consistent} in the sense that $\N_\sU\b{q\geq p}(\sigma^q) = \sigma^p$ for all $p\leq q$.
We will regard the limit as a poset in which $\sigma\subseteq \tau$ if $\sigma^p\subseteq \tau^p$ for all $p\in\P$
and endow it with the specialization topology generated by the collection of principal up-sets $\{\up{\sigma}\mid \sigma\in\lim\N_\sU\}$.

\begin{remark}
  The specialization topology on $\lim\N_\sU$ is equivalent to the initial topology associated with the system of canonical projections
  \[ \big\{ \rho_p : \lim\N_\sU\to \N_\sU(p)\big\}_{p\in\P}.\]
  Proof of this fact can be found in Appendix~\ref{sec:proofs}
\end{remark}

\begin{notation}
  For any $\sigma\in\lim\N_{\sU}$ let $\sU^p_\sigma\defined \eta_p^{-1}(\up{\sigma^p})$
  and for any $x\in X$ let $\sU_x^p \defined \eta_p^{-1}\big(\up{\eta_p(x)}\big) = (\sU^p)_{\eta_p(x)}$.
  Similarly, for any $S\in\Op_{\lim\N_\sU}$ (resp. $U\in \Op_X$) let
  \begin{align*}
    \sU^p_S&\defined \bigcup_{\sigma\in S} \sU_\sigma^p\\
    \Big(\text{resp. }\ \sU^p_U&\defined \bigcup_{x\in U} \sU_x^p\ \Big).
  \end{align*}
  Naturally, for any $\sigma\in\lim \N_\sU$ (resp. $x\in X$)
  let $\sU_\sigma : \op{\P}\to \Op_X$ (resp. $\sU_x: \op{\P}\to \Op_X$)
  be the functor defined $p\mapsto \sU_\sigma^p$ (resp. $p\mapsto\sU_x^p$),
  and for any $S\in\Op_{\lim \N_\sU}$ (resp. $U\in\Op_X$) let $\sU_S,\sU_U :\op{\P}\to \Op_X$
  be defined $p\mapsto \sU_S^p, \sU_U^p$, respectively.
\end{notation}

\begin{lemma}\label{lem:eta_multi_cts}
  If $\sU$ is strict refinement of locally finite good open covers then $\hat{\eta}$ is essential.
\end{lemma}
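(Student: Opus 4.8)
The plan is to produce an explicit left adjoint to $\hat{\eta}^{-1} : \Op_{\lim\N_\sU}\to\Op_X$ by imitating the single-cover formula $\eta_!(U)=\up{\eta(U)}$ from Proposition~\ref{prop:eta_open}, exploiting that $\lim\N_\sU$ carries the specialization topology, so that its open sets are exactly the up-sets of the underlying poset.

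First I would record that $\hat{\eta}$ exists and is continuous. Strictness of $\sU$ is precisely the hypothesis that the canonical maps assemble into a cone $\eta_\sU:\Delta_{\op{\P}}(X)\Rightarrow\N_\sU$, so the universal property of the limit (taken in posets) yields a unique monotone map $\hat{\eta}:X\to\lim\N_\sU$, $x\mapsto(\eta_p(x))_{p\in\P}$, whose image is consistent and satisfies $\rho_p\hat{\eta}=\eta_p$ for all $p$. By Proposition~\ref{prop:eta_open} each $\eta_p$ is essential, in particular continuous; since by the preceding remark the specialization topology on $\lim\N_\sU$ agrees with the initial topology determined by the projections $\{\rho_p\}$, the identities $\rho_p\hat{\eta}=\eta_p$ force $\hat{\eta}$ to be continuous. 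Thus $\hat{\eta}^{-1}$ is a well-defined morphism of open-set lattices.

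Next I would exhibit the left adjoint. Because $\lim\N_\sU$ carries the specialization topology, its open sets are exactly the up-sets of the poset: every up-set $S$ equals $\bigcup_{\sigma\in S}\up{\sigma}$, a union of subbasic opens, and conversely every open set is upward closed. I therefore define
\[
  \hat{\eta}_! : \Op_X\longrightarrow\Op_{\lim\N_\sU},\qquad
  U\longmapsto \up{\hat{\eta}(U)}=\bigcup_{x\in U}\up{\hat{\eta}(x)},
\]
which is an up-set and hence open. The adjunction $\hat{\eta}_!\dashv\hat{\eta}^{-1}$ is then immediate: for $U\in\Op_X$ and $S\in\Op_{\lim\N_\sU}$, if $\hat{\eta}_!(U)\subseteq S$ then $\hat{\eta}(U)\subseteq\up{\hat{\eta}(U)}\subseteq S$, giving $U\subseteq\hat{\eta}^{-1}(S)$; conversely if $U\subseteq\hat{\eta}^{-1}(S)$ then $\hat{\eta}(U)\subseteq S$, and since $S$ is upward closed this forces $\up{\hat{\eta}(U)}\subseteq S$, i.e.\ $\hat{\eta}_!(U)\subseteq S$. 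This is exactly the statement that $\hat{\eta}$ is essential, and it also shows $\hat{\eta}_!$ extends the single-cover left adjoints $\eta_!^p$.

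The main obstacle is the continuity step, equivalently the identification of topologies imported from the preceding remark. Unwinding the definitions gives $\hat{\eta}^{-1}(\up{\sigma})=\bigcap_{p\in\P}\sU^p_\sigma$, an a priori infinite intersection of open sets that must be shown open in $X$. Here I would use that strictness and consistency of $\sigma$ make this family \emph{decreasing} along $\P$: for $q\geq p$ the refinement inclusions $\sU^q(j)\subseteq\sU^p_q(j)$ together with $\N_\sU\b{q\geq p}(\sigma^q)=\sigma^p$ yield $\sU^q_\sigma\subseteq\sU^p_\sigma$, and local finiteness of the good covers controls the intersection in a neighborhood of each point so that it remains open. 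This is precisely the content of the deferred remark, and it is the only place the locally-finite and good-open hypotheses are genuinely needed; granting it, the left-adjoint construction above completes the argument.
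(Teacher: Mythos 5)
Your overall architecture (continuity of $\hat{\eta}$ first, then the explicit left adjoint $\hat{\eta}_!(U)=\bigcup_{x\in U}\up{\hat{\eta}(x)}$) matches the paper's, and your adjunction half is correct --- indeed your direct verification of the Galois connection, using that every open set of the specialization topology is up-closed, is cleaner than the paper's separate unit/counit checks. The gap is in the continuity half, exactly where you yourself locate ``the main obstacle.'' Your direct argument --- that strictness makes the family $\sU^p_\sigma$ decreasing in $p$ and that ``local finiteness of the good covers controls the intersection in a neighborhood of each point'' --- does not work: a decreasing family of open sets indexed by an infinite poset need not have open intersection, and local finiteness of each individual cover does nothing to prevent the sets $\sU^p_\sigma$ from shrinking as $p$ grows; shrinking is the whole point of a refinement. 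Concretely, take $X=\RR$, $\P=\NN$, and $\sU(n)$ a locally finite good cover of $\RR$ by intervals of length roughly $2^{-n}$, assembled into a strict refinement; then $\hat{\eta}^{-1}\big(\up{\hat{\eta}(x)}\big)=\bigcap_{n}\sU^n_x$ has diameter tending to zero, hence is not open, and $\hat{\eta}$ is not continuous for the specialization topology. The ingredient you never invoke is finiteness of $\P$: the paper's proof writes $\hat{\eta}^{-1}(S)=\bigcup_{\sigma\in S}\bigcap_{t\in\P}\sU^t_\sigma$, interchanges the union and intersection (justified there by ``$\P$ is finite and $S$ is directed''), and concludes openness precisely because $\Meet\sU_S$ is then a \emph{finite} intersection of open sets. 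Finiteness of $\P$ is a standing hypothesis of the surrounding Theorem~\ref{thm:multi_iso} even though it is omitted from the lemma's statement, and without it the lemma fails.

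Your alternative route for continuity --- cite the remark that the specialization topology on $\lim\N_\sU$ coincides with the initial topology of the projections, so that continuity of $\hat{\eta}$ follows from $\rho_p\circ\hat{\eta}=\eta_p$ and Proposition~\ref{prop:eta_open} --- is a genuinely different decomposition from the paper's (the paper never routes through the initial topology in this proof, and there is no circularity since the remark rests on Lemma~\ref{lem:limit_basis}). But this move relocates rather than removes the problem: for the two topologies to agree, each principal up set $\up{\sigma}=\bigcap_{t\in\P}\rho_t^{-1}(\up{\sigma^t})$ must be open in the initial topology, which is again an infinite intersection of subbasic opens unless $\P$ is finite. So on either route the proof must invoke $\P$ finite; as written, yours does not, and the ``local finiteness controls it'' step is where it breaks.
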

\begin{proof}
  We will begin by showing that $\hat{\eta}^{-1}(\up{\sigma}) = \Meet \sU_\sigma$ for all $\sigma\in\lim \N_\sU$.

  Let $\sigma\in \lim\N_\sU$.
  If $x\in \hat{\eta}^{-1}(\up{\sigma})$ then $\hat{\eta}(x) \in\, \up{\sigma}$ implies that $\hat{\eta}(x) \supseteq \sigma$,
  so $\eta^p(x) \supseteq \sigma^p$ implies $x\in\sU_x^p\subseteq \sU_\sigma^p$ for all $p\in \P$, so $x\in \Meet \sU_\sigma$.
  Conversely, if $x\in\Meet\sU_\sigma$ then $x\in \sU_\sigma^p$ for all $p\in \P$.
  Because $x\in \sU_\sigma^p$ implies $x\in \sU^p(i)$ for all $i\in I^p$,
  it follows that $\eta_p(x) \supseteq \sigma^p$ for all $p\in\P$ so $\hat{\eta}(x)\supseteq \sigma$ implies $x\in\hat{\eta}^{-1}(\up{\sigma})$.

  Let $S\in\Op_{\lim\N_\sU}$.
  Because $\Op_{\lim\N_\sU}$ is generated by principal up-sets $S$ is of the form $S = \bigcup_{\sigma\in S}\up{\sigma}$.
  It follows that
  \[ \hat{\eta}^{-1}(S)
    = \bigcup_{\sigma\in S} \hat{\eta}^{-1}(\up{\sigma})
    = \bigcup_{\sigma\in S} \bigcap_{t\in\P}\eta_t^{-1}(\up{\sigma^t})
    = \bigcup_{\sigma\in S} \bigcap_{t\in \P} \sU_\sigma^t.\]
  Because $\P$ is finite and $S$ is directed,
  \[\hat{\eta}^{-1}(S) = \bigcup_{\sigma\in S} \bigcap_{t\in \P} \sU_\sigma^t = \bigcap_{t\in \P} \bigcup_{\sigma\in S}  \sU_\sigma^t = \Meet \sU_S,\]
  is a finite intersection of open sets, and is therefore open in $X$, so $\hat{\eta}$ is continuous as desired.
  It remains to show that $\hat{\eta}$ is essential, which requires showing that $\hat{\eta}^{-1}$ has a left adjoint
  $\hat{\eta}_!$ such that $U \subseteq \hat{\eta}^{-1}\hat{\eta}_!(U)$ for each open set $U\in\Op_X$
  and $\hat{\eta}_!\hat{\eta}^{-1}(\up{\sigma}) \subseteq \up{\sigma}$ for each $\sigma\in \lim\N_\sU$.

  Let $\hat{\eta}_! : \Op_X\to \Op_{\lim \N_\sU}$ be defined for all $U\in \Op_X$ as the union of principal up sets
  \[ \hat{\eta}_!(U)\defined \bigcup_{x\in U} \up{\hat{\eta}}(x).\]
  Then for all $U\in\Op_X$ we have
  \[ \hat{\eta}^{-1}\hat{\eta}_!(U) = \Meet \sU_{\hat{\eta}_!(U)} = \bigcap_{p\in \P} \bigcup_{\sigma^p\in \eta_!^p(U)} \sU_\sigma^p.\]
  Because $\eta_!^p$ is essential for all $p\in \P$, $U\subseteq \sU_\sigma^p$ for all $\sigma^p\in \eta_!^p(U)$, so $U \subseteq \hat{\eta}^{-1}\hat{\eta}_!(U)$.

  For any $\sigma\in \lim \N_\U$ we have
  \[ \hat{\eta}_!\hat{\eta}^{-1}(\up{\sigma}) = \bigcup_{x\in \hat{\eta}^{-1}(\up{\sigma})} \up{\hat{\eta}(x)}.\]
  If $x\in \hat{\eta}^{-1}(\up{\sigma})$ then $x\in \sU^p(i)$ for all $p\in\P$ and $i\in\rho_p(\sigma)$.
  So $\hat{\eta}(x)\supseteq \sigma$ implies $\hat{\eta}(x)\in\,\up{\sigma}$ for all $x\in\hat{\eta}^{-1}(\up{\sigma})$,
  thus $\hat{\eta}_! \hat{\eta}^{-1}(\up{\sigma}) \subseteq \up{\sigma}$.
\end{proof}

By Theorem~\ref{thm:poset_pullback}, Lemma~\ref{lem:eta_multi_cts} implies that $\hat{\eta}^\dagger\hat{\eta}_*(F)$ is a spatial cosheaf for any precosheaf $F$ on $X$.
We can therefore define the \emph{projective} multiscale mapper functor as follows.

\begin{definition}[$\sU$-Multiscale Mapper Functor]
  The \defn{\emph{projective} $\sU$-multiscale mapper functor} is defined
  \begin{align*}
    \pro{\sM_\sU}\defined \hat{\eta}^\dagger\hat{\eta}_* : \PCSh(X) &\longrightarrow \CShsp(X)\\
      F &\longmapsto \hat{\eta}^\dagger\hat{\eta}_*(F).
  \end{align*}
  $\pro{\sM_\sU}$ takes each precosheaf $F$ on $X$ to the \defn{\emph{projective} $\sU$-multiscale mapper precosheaf} defined
  \begin{align*}
    \pro{\sM_\sU}(F) : \Op_{X}&\longrightarrow \Set\\
    U &\longmapsto F\Big( \Meet \sU_U \Big).
  \end{align*}
\end{definition}

\begin{theorem}\label{thm:multi_iso}
  Let $X$ be a locally connected topological space and let $\sU : \op{\P}\to\pOp_X$ be a strict refinement of locally finite good open covers of $X$.
  If $\P$ is finite then for any precosheaf $F$ on $X$
  \[\pro{\sM_\sU}(F)\simeq \lim F^\sU.\]
\end{theorem}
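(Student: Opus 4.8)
The plan is to prove the isomorphism by realizing both sides as pixelizations and comparing them through the limiting cone $\hat\eta=\langle\eta_p\rangle : X\to\lim\N_\sU$ of Diagram~\eqref{dgm:nerve_limit}. Theorem~\ref{thm:mapper} identifies each component $F^\sU(p)=\eta_p^*\eta_{p*}(F)$ with the single–cover mapper cosheaf $\eta_p^\dagger\eta_{p*}(F)$, whose value at $U$ is $F(\sU_U^p)$, while by construction $\pro{\sM_\sU}(F)(U)=F(\Meet\sU_U)=F\big(\bigcap_{p}\sU_U^p\big)$. Since $\bigcap_p\sU_U^p\subseteq\sU_U^p$ for every $p$ and the transition maps of $F^\sU$ are induced by the inclusions $\sU_U^q\subseteq\sU_U^p$ for $q\geq p$, functoriality of $F$ on $\Op_X$ produces a canonical cone $\pro{\sM_\sU}(F)\Rightarrow F^\sU$. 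The content of the theorem is precisely that this cone is limiting in $\CShsp(X)$; I would stress at the outset that the limit must be formed in the category of \emph{spatial cosheaves}, not pointwise in $\PCSh(X)$ — otherwise even a two-element antichain would return a product of the $F(\sU_U^p)$ rather than the intersection $F(\bigcap_p\sU_U^p)$, so the whole point is that spatiality collapses the pointwise limit onto the intersection.

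First I would transport the problem to display spaces, where the limit is controlled geometrically, and verify the universal property there. Because $\gamma$ is fully faithful on spatial cosheaves (the counit of $\lambda\dashv\gamma$ is a natural isomorphism by Proposition~\ref{prop:spatial_iso}), it suffices to show that $\gamma$ carries the comparison cone above to a limiting cone in $\Toplc/X$. Using the pullback description of pixelization (the equivalence $\gamma\M_\U\simeq\eta^\star\gamma\eta_*$ of Corollary~\ref{cor:adjoint_equiv}), each $\gamma F^\sU(p)$ is the pullback $X\times_{\N_\sU(p)}\dis(\eta_{p*}F)$ along $\eta_p$, and $\gamma\,\pro{\sM_\sU}(F)$ is the single pullback $X\times_{\lim\N_\sU}\dis(\hat\eta_*F)$ along $\hat\eta$. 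The factorization $\eta_p=\rho_p\hat\eta$ through the projections of Diagram~\eqref{dgm:nerve_limit} exhibits the partial pullbacks as the components of a cone over $X$, reducing the claim to showing that the single pullback over $\lim\N_\sU$ is the finite limit of the partial pullbacks over the $\N_\sU(p)$.

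The decisive step is the interchange already isolated in the proof of Lemma~\ref{lem:eta_multi_cts}: for directed $S$ and \emph{finite} $\P$ one has $\hat\eta^{-1}(S)=\bigcup_{\sigma\in S}\bigcap_{t\in\P}\sU_\sigma^t=\bigcap_{t\in\P}\bigcup_{\sigma\in S}\sU_\sigma^t=\Meet\sU_S$, and in particular $\hat\eta^{-1}(\up{\sigma})=\bigcap_{t\in\P}\eta_t^{-1}(\up{\sigma^t})$. This says that the costalks over $\lim\N_\sU$, and hence the basic open sets of $\dis(\hat\eta_*F)$, are the finite intersections of the corresponding data over the $\N_\sU(p)$, which is exactly what forces the single pullback over $\lim\N_\sU$ to coincide with the finite limit of the partial pullbacks. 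I would then read this back through $F$ using $\Meet\sU_U=\bigcap_p\sU_U^p$ to conclude that the comparison cone is limiting at the level of display spaces over $X$.

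Finally I would descend the resulting identification from display spaces to cosheaves: each $F^\sU(p)$ is spatial by Theorem~\ref{thm:mapper}, and $\pro{\sM_\sU}(F)=\hat\eta^\dagger\hat\eta_*(F)$ is spatial by Lemma~\ref{lem:eta_multi_cts} together with Theorem~\ref{thm:poset_pullback}, so Proposition~\ref{prop:spatial_iso} lets the display-space limit descend to the desired natural isomorphism $\pro{\sM_\sU}(F)\simeq\lim F^\sU$. The main obstacle — and the one place finiteness of $\P$ is unavoidable — is the interchange of the arbitrary union with the intersection over $\P$ in the decisive step: for infinite $\P$ the set $\bigcap_{t\in\P}\sU_\sigma^t$ need not be open, $\hat\eta$ may fail to be essential, and the comparison cone stops being limiting, so that pixelization commutes only with \emph{finite} limits.
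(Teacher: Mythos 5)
Your decisive step contains a genuine gap, and it sits exactly where the content of the theorem lives. Lemma~\ref{lem:eta_multi_cts} is a statement about \emph{open sets of $X$}: it gives $\hat{\eta}^{-1}(\up{\sigma})=\bigcap_{t\in\P}\eta_t^{-1}(\up{\sigma^t})$, so by Proposition~\ref{prop:rho_bi} the costalk of $\hat{\eta}_*(F)$ at $\sigma$ is $F\big(\bigcap_{t\in\P}\sU_\sigma^t\big)$ --- that is, $F$ \emph{of} a finite intersection. By contrast, the fiber over $x\in X$ of $\lim_{p}\big(X\times_{\N_\sU(p)}\dis(\eta_*^p F)\big)$, computed in $\Top/X$ (passing to $\Toplc/X$ changes topologies, not underlying sets), is the set-theoretic limit $\lim_{p} F(\sU_x^p)$ of the costalks of the $\eta_*^p F$. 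So your claim that the union/intersection interchange ``forces'' the pullback over $\lim\N_\sU$ to coincide with the limit of the partial pullbacks amounts to asserting that the canonical comparison $F\big(\bigcap_{p}\sU_x^p\big)\to\lim_{p}F(\sU_x^p)$ is a bijection. Nothing supplies this: the interchange happens \emph{inside} the argument of $F$, and cosheaves take unions to colimits while saying nothing about intersections versus limits. It fails even for the spatial cosheaf $F=\pi_X$: let $\P=\{p,q\}$ be a two-element antichain (a finite poset for which the refinement and strictness conditions are vacuous), let $X=S^1$, and choose two good covers by three arcs each so that, near some $x$, $\sU_x^p=A$ and $\sU_x^q=B$ are long arcs overlapping at both ends; then $\pi_0(A\cap B)$ has two elements while $\pi_0(A)\times\pi_0(B)$ is a singleton. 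Note the irony: this is precisely the antichain situation you invoke to argue the limit must be formed in $\CShsp(X)$, but spatial cosheafification of the product still does not produce $F$ of the intersection --- for a small connected arc $U$ in that region, $\pro{\sM_\sU}(\pi_X)(U)=\pi_0(A\cap B)$ has two elements, while $\lim F^\sU(U)$ is a singleton whether the limit is read pointwise or in $\CShsp(X)$. Finiteness of $\P$, which you correctly identify as what makes $\hat{\eta}$ continuous and essential, plays no role in this failure.

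For comparison, the paper's proof never descends to display spaces: it asserts $\pro{\sM_\sU}=\hat{\eta}^\dagger\hat{\eta}_*\simeq\lim_{p}\eta_p^\dagger\eta_*^p$ on the grounds that right adjoints preserve limits, and then uses Lemma~\ref{lem:eta_multi_cts} together with Corollary~\ref{cor:poset_pullback} to trade $\dagger$ for $*$, identifying $\eta_p^\dagger\eta_*^p\simeq\eta_p^*\eta_*^p=F^{\sU(p)}$ and likewise for $\hat{\eta}$. Your route is not a reformulation of that argument: you replaced its abstract first step by an explicit fiberwise identification that is strictly stronger and, as shown above, false. In fairness, the difficulty is not of your making --- the paper's first step hides the same interchange of $F$ with a finite limit (evaluated at $U$ it asserts $F(\bigcap_p\sU_U^p)\simeq\lim_p F(\sU_U^p)$), and the antichain example contradicts the statement itself in this generality. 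The statement is immediate when $\P$ is a finite \emph{chain}: there $\bigcap_p\sU_U^p=\sU_U^{m}$ for the maximum $m$, the limit over $\op{\P}$ is its initial (finest) term, and both sides reduce to Theorem~\ref{thm:mapper} for $\sU(m)$; some such restriction, or an exactness hypothesis on $F$, is what your decisive step would actually require, so the gap cannot be closed as stated.
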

\begin{proof}
  Because right adjoint preserve limits and $\eta^\dagger_p$ and $\eta_*^p$ are right adjoints for all $p\in \P$,
  \[ \pro{\sM_\sU} = \hat{\eta}^\dagger\hat{\eta}_* \simeq \lim_{p\in\P} \eta^\dagger_p\eta_*^p.\]
  Because $X$ is locally connected, $\lim \N_\sU$ is a poset endowed with the specialization topology.
  Therefore, because $\hat{\eta}$ is surjective and essential by Lemma~\ref{lem:eta_multi_cts},
  the desired result follows from Corollary~\ref{cor:poset_pullback}.
\end{proof}

\subsection{Multiscale Mapper in Practice}\label{sec:multi_examples}

Let $(L,f)$ be a locally connected space over $X$ and let $\sU : \op{\P}\to \pOp_X$ be a strict refinement of locally finite good open covers of $X$.
The goal of this section is to extend the results of Section~\ref{sec:mapper_examples} to the multiscale setting.
This will be done by showing how the results of the previous section can be used to compute a THD that corresponds to a filtration of pixelized spaces.

Intuitively, the desired THD is the category of elements of the filtration of sets
\begin{align*}
  \hat{R}^\sU_f : \op{\P}\times \lim \N_\sU&\longrightarrow \Set\\
    (p,\sigma) &\longmapsto R_f\big(\sU_\sigma^p\big);
\end{align*}
That is, the poset $\El(\hat{R}^\sU_f)$ forms a tree with vertices $\big((p,\sigma),\beta_p \big)$ for all $p\in \P$ and $\sigma\in\lim\N_\sU$, and each connected component $\beta_p\in R_f\big(\sU_\sigma^p\big)$,
and edges induced by the functoriality of the Reeb cosheaf (Figure~\ref{fig:multi_mapper_thd}).
The following proposition states that this THD is equivalent to the merge tree of a filtration of spaces that can be computed in practice.

\begin{figure}[ht]
  \includegraphics[width=\textwidth]{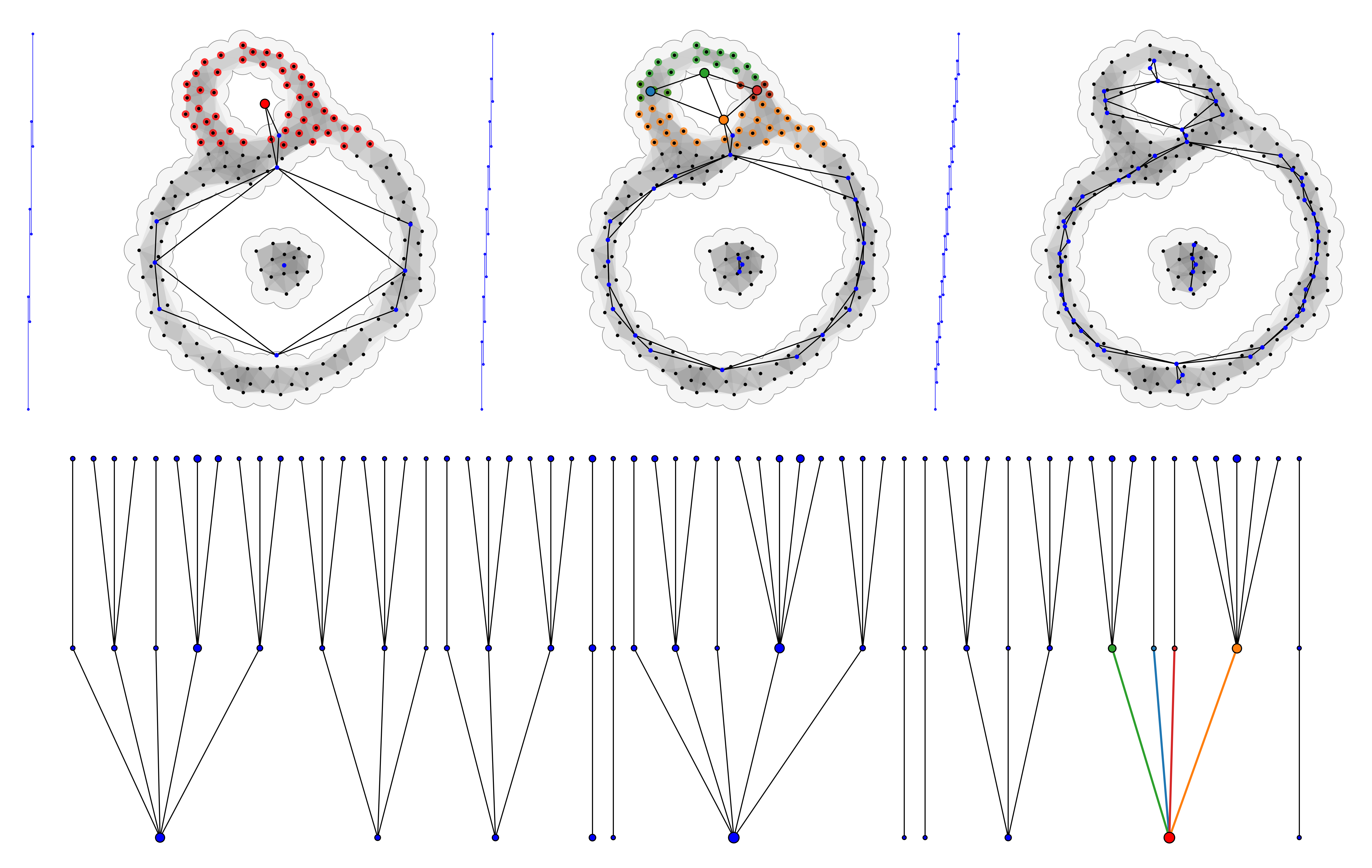}
  \caption{Multiscale mapper (top) and its corresponding THD (bottom).
      The filter function $f_P : P\to \RR$ takes each point of a finite sample to its $y$-coordinate.
      As the cover of $\im f_P\subset \RR$ by intervals is refined (left of each, in blue), so is the corresponding nerve.}\label{fig:multi_mapper_thd}
\end{figure}

\begin{notation}
  As in Section~\ref{sec:mapper_examples}, let $\sU^f : \op{\P}\to \pOp_L$ denote the function defined $\sU^f(p) \defined f^{-1}\sU(p) : I^p\to \Op_L$
  and let $\N_\sU^f\defined \N_{\sU^f} : \op{\P}\to \Toplc$ denote the filtration of spaces defined $\N_\sU^f(p) \defined \N_{\sU(p)}^f$.
\end{notation}

\begin{proposition}\label{prop:multi_thd}
  The poset $\El(\hat{R}^\sU_f)$ is isomorphic to the merge tree of $\N_\sU^f$.
\end{proposition}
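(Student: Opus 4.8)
The plan is to unwind both posets explicitly and then construct an order-isomorphism one level at a time, gluing the pieces along $\op{\P}$. By definition the merge tree of $\N_\sU^f:\op{\P}\to\Toplc$ is $\El(\pi_0\N_\sU^f)$, whose vertices are the pairs $(p,\gamma)$ with $\gamma\in\pi_0\big(\N_\sU^f(p)\big)$, while the vertices of $\El(\hat R^\sU_f)$ are the triples $\big((p,\sigma),\beta\big)$ with $\sigma\in\lim\N_\sU$ and $\beta\in R_f(\sU^p_\sigma)$. First I would fix $p$ and invoke the costalk computation of Section~\ref{sec:mapper_examples}: since $\N_\sU(p)$ carries the specialization topology, $\up{\sigma^p}$ is the smallest open set containing $\sigma^p$, so the costalk of $R_{\eta_p\circ f}$ there is $R_{\eta_p\circ f}(\up{\sigma^p}) = R_f(\sU^p_{\sigma^p})$ and the display space decomposes as $\dis(R_{\eta_p\circ f})\simeq\bigsqcup_{\sigma^p\in\N_\sU(p)} R_f(\sU^p_{\sigma^p})$, its points being the pairs $(\beta,\sigma^p)$. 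By Proposition~\ref{prop:display_nerve_iso} this display space is isomorphic to the mapper nerve $\N_\sU^f(p)$, so $\pi_0\N_\sU^f(p)$ is the set of connected components of $\dis(R_{\eta_p\circ f})$.

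Next I would set up the comparison. The assignment $\big((p,\sigma),\beta\big)\mapsto(\beta,\rho_p(\sigma))$, where $\rho_p:\lim\N_\sU\to\N_\sU(p)$, $\sigma\mapsto\sigma^p$ is the canonical projection, lands each vertex of $\El(\hat R^\sU_f)$ on a point of $\dis(R_{\eta_p\circ f})\cong\N_\sU^f(p)$, and surjectivity of $\rho_p$ — which follows from strictness of the refinement together with the surjectivity of $\hat\eta$ and of each $\eta_p$ furnished by Lemma~\ref{lem:eta_multi_cts} and Proposition~\ref{prop:eta_open} — shows every such point is attained. The order on $\lim\N_\sU$ is the key: a relation $\sigma\subseteq\tau$ produces the inclusion $\sU^p_\tau\subseteq\sU^p_\sigma$, hence a transition map of the Reeb cosheaf $R_{\eta_p\circ f}$, and these transition maps are exactly the adjacencies that generate connectivity in the display space. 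The heart of the argument is to promote this observation into an identification of the vertex set and order of $\El(\hat R^\sU_f)$ with those of $\El(\pi_0\N_\sU^f)$; that is, to verify that routing the global index $\lim\N_\sU$ through $\rho_p$ reconstructs the mapper nerve at level $p$ together with its component structure on the nose.

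Finally I would glue across $\op{\P}$: strictness (Diagram~\eqref{dgm:strict_natural}) makes the projections $\rho_p$ compatible with the simplicial maps $\N_\sU\b{q\ge p}$, so the latter induce on $\pi_0$ precisely the transition maps of the merge tree $\El(\pi_0\N_\sU^f)$, and these agree with the functoriality of $\hat R^\sU_f$ in its $\op{\P}$-variable, after which matching the edges $\big((p,\sigma),\beta\big)\preceq\big((q,\tau),\beta'\big)$ on the two sides is a routine diagram chase. I expect the main obstacle to be the level-$p$ identification in the second paragraph: it is a parameterized (relative) version of Proposition~\ref{prop:display_nerve_iso}, and the delicate point is to show that the limit index $\lim\N_\sU$, restricted through each $\rho_p$, reproduces the edge relation of the mapper nerve without over- or under-counting its connected components — which is exactly where essentialness of each $\eta_p$ (so that the up-sets $\up{\sigma^p}$ generate $\Op_{\N_\sU(p)}$ and the costalks glue along honest open inclusions) does the work.
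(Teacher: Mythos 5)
Your skeleton is the same as the paper's---fix $p$, use Propositions~\ref{prop:rho_bi} and~\ref{prop:display_nerve_iso} to identify $\N_\sU^f(p)$ with $\dis(R_{\eta_p\circ f})$, then glue over $\op{\P}$ using strictness and $\sigma^p=\rho_p(\sigma)$---but there is a genuine gap, and it is exactly the step you defer as ``the heart of the argument.'' Your assignment $\big((p,\sigma),\beta\big)\mapsto\big(\beta,\rho_p(\sigma)\big)$ lands on the \emph{points} of $\N_\sU^f(p)$ (its simplices), whereas you have set up the target as $\El(\pi_0\N_\sU^f)$, whose level-$p$ elements are the \emph{connected components} of $\N_\sU^f(p)$. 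No amount of care with the edge relation closes this mismatch: a single component of the mapper nerve typically contains many simplices, and each simplex carries its own vertices of $\El(\hat{R}^\sU_f)$. Concretely, take $\P$ a one-point poset, $L=X=(0,3)\subset\RR$, $f$ the identity, and the cover $\U(1)=(0,2)$, $\U(2)=(1,3)$: then $\El(\hat{R}^\sU_f)$ is a three-element poset (one vertex over each of the simplices $\{1\}$, $\{2\}$, $\{1,2\}$, since each $R_f(\U_\sigma)$ is a singleton), while the mapper nerve is connected, so $\El(\pi_0\N_\sU^f)$ is a single point. So the identification you postpone is not delicate but impossible; your proposed isomorphism already fails on vertex sets.

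The paper's proof sketch never applies $\pi_0$ to the nerves at all: the connected-components functor enters only \emph{inside} the Reeb cosheaf, through $\beta_p\in R_f(\sU_\sigma^p)=\pi_0\big(f^{-1}(\sU^p_\sigma)\big)$, and by Proposition~\ref{prop:display_nerve_iso} these pairs $(\beta_p,\sigma^p)$ are precisely the points of $\N_\sU^f(p)\simeq\dis(R_{\eta_p\circ f})$. The elements of the tree at level $p$ are read off as these points, after which the entire content of the proof is the reindexing $\sigma^p=\rho_p(\sigma)$ together with functoriality in $p$ (your third paragraph, which is fine). In other words, the THD here is the Grothendieck construction over the pairs $(p,\sigma)$ with clusters $\beta_p$ sitting over each simplex, i.e.\ the category of elements of the points functor $p\mapsto\Gamma\big(\N_\sU^f(p)\big)$, not the category of elements of $p\mapsto\pi_0\big(\N_\sU^f(p)\big)$. (You have in fact put your finger on a real tension between Proposition~\ref{prop:multi_thd} and the paper's literal definition of the generalized merge tree as $\El(\pi_0\F)$; but resolving it requires changing which functor you take elements of, not a finer analysis of the component structure.) If you replace your target by $\El$ of the points functor, your level-$p$ step becomes exactly Proposition~\ref{prop:display_nerve_iso}, nothing remains to ``promote,'' and the rest of your argument goes through.
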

\begin{proof}[Proof Sketch]
  It is a straightforward exercise to show that $\sU^f$ is a filtration of covers and that $\N_\sU^f\defined \N_{\sU^f} : \op{\P}\to \Toplc$ is a filtration of locally connected spaces.
  By Proposition~\ref{prop:display_nerve_iso}, $\N_{\sU(p)}^f\simeq \dis(R_{\eta_{\sU(p)}\circ f})$ for all $p\in \P$,
  so the merge tree of $\N_\sU^f$ has elements equivalent to pairs $\big(p,(\beta_p,\sigma^p)\big)$ for all $p\in\P$, $\sigma\in\lim\N_\sU$, and $\beta_p\in R_f(\sU_\sigma^p)$
  where $\big(q,(\beta_q,\sigma^q)\big)\preceq\big (p,(\beta_p,\sigma^p)\big)$ if $q\geq p$ and $\N_\sU^f\b{q\geq p}(\beta_q,\sigma^q) = (\beta_p,\sigma^p)$.
  Because $\sigma^p = \rho_p(\sigma)$ for all $p\in\P$ it follows that $\cT_{\N_\sU^f}\simeq \El(\hat{R}^\sU_f)$ as desired.
\end{proof}

Using the results of the previous section, we can compute the merge tree of $\N_\sU^f$ as follows.

Theorem~\ref{thm:multi_iso} implies that, for all $U\in \Op_X$,
there is a bijective correspondence between the elements of $\sM_\sU(R_f)(U) = R_f\big(\Meet \sU_U \big)$ (i.e. the connected components of $f^{-1}\big(\Meet \sU_U \big)$)
and the elements of $\lim R_f^\U(U)$: consistent families of elements $\beta_p\in R_f(\sU^p_U)$ for which $R_f\b{\sU^q_U\subseteq \sU^p_U}(\beta_q) = \beta_p$ for all $q\geq p$.
It follows that the desired THD can be efficiently computed as the set
\begin{equation}\label{eq:multi_thd_compute}
  \El(\hat{R}^\sU_f) = \Big\{\Big(p,\big(\sigma^p, R_f\big[\Meet \sU_\sigma\subseteq \sU_\sigma^p\big](\alpha)\big) \Big)
  \ \big|\
  p\in\P,\ \sigma^p\in \N_\sU(p),\text{ and } \alpha\in R_f\big(\Meet\sU_\sigma\big)\Big\}.
\end{equation}

\begin{example}
  Let $(X,\dist)$ be a locally connected metric space and let $f : X\to \RR$ be a continuous function.
  Let $P\subset X$ be a finite subspace and let $f_P\defined f|_P : P\to \RR$ denote the restriction of $f$ to $P$.
  Let $\sU : \op{\NN}\to \pOp_{\im f_P}$ be a strict refinement of good open covers of the image of $f_P$ and note that,
  because $P$ is finite, $\im f_P$ is finite, so $\sU$ is locally finite.
  Then $\sU^{f_P} : \op{\NN}\to \pOp_P$ is a refinement of covers of $P$ defined for all $n\in \NN$ as above:
  \begin{align*}
    \sU^{f_P}(n) : \bigsqcup R_{f_P}\sU(n) &\longrightarrow \Op_P\\
      (\beta_n,\sigma^n) &\longmapsto \beta_n,
  \end{align*}
  and $\N_\sU^{f_P} : \op{\NN}\to \Toplc$ is a filtration of locally connected spaces $\N_{\sU(-)}^{f_P}\simeq \dis(R_{\eta_{\sU(-)}\circ f})$.
  If there exists some $\ell \in \NN$ such that $\N_\sU^{f_P}(n) = \N_\sU^{f_P}(\ell)$ for all $n > \ell$, then the filtration of covers is parameterized by a finite poset,
  so Theorem~\ref{thm:multi_iso} implies that the desired THD (Equation~\eqref{eq:multi_thd_compute}) can be computed as follows:
  \begin{algorithm}
    \For{$n < \ell$}
    {
      \For{$\sigma^n\in \N_\sU(n)$}
      {
        \For{$\alpha\in R_{f_P}\big(\Meet \sU_\sigma \big)$}
        {
          $\alpha_n\gets R_{f_P}\big[\Meet \sU_\sigma\subseteq \sU_\sigma^n \big](\alpha)$\\
          add vertex $\big(n, (\sigma, \alpha_n)\big)$\\
          \If{$n > 1$}
          {
            $\alpha_{n-1} \gets R_{f_P}\big[\sU_\sigma^{n}\subseteq \sU_\sigma^{n-1} \big](\alpha_n)$\\
            $\sigma^{n-1} \gets \N_\sU\b{n\geq n-1}(\sigma^n)$\\
            add edge from $\big(n, (\sigma, \alpha_n)\big)$ to $\big(n-1, (\sigma^{n-1}, \alpha_{n-1})\big)$
          }
        }
      }
    }
  \end{algorithm}
\end{example}

\bibliographystyle{plain}
\bibliography{bibliography}

\appendix

\section{Categories cont.}\label{sec:cats2}

For any category \cat{C} we write $A,B\in\cat{C}$ to denote objects $A,B\in\ob(\cat{C})$ and $f: A\to B$ to denote arrows $f\in\hom_\cat{C}(A,B)$ in \cat{C}.
The \defn{opposite category} $\op{\cat{C}}$ associated with \cat{C} has objects $\Ob(\op{\cat{C}}) = \Ob(\cat{C})$ and arrows $\hom_{\op{\cat{C}}}(B,A) = \hom_\cat{C}(A,B)$ for all $A,B\in\cat{C}$.

\begin{definition}[Functor]
  Let \cat{B} and \cat{C} be categories.
  A \defn{functor} $F : \cat{B}\to \cat{C}$ consists of
  \begin{description}[style=multiline,leftmargin=2.7cm,font=\normalfont,align=right]
    \item[\emph{(Objects)}] an \emph{object map} $F(-) : \ob(\cat{B})\to \ob(\cat{C})$;
    \item[\emph{(Arrows)}] for all $x,y\in\cat{B}$, \emph{arrow maps} $F\b{-} : \hom_\cat{B}(x,y)\to \hom_\cat{C}(F(x),F(y))$;
  \end{description}
  subject to the following conditions:
  \begin{description}
    \item[FUN1]\emph{(Unit)} For all $x\in\cat{B}$, $F\b{\id_x} = \id_{F(x)}$.
    \item[FUN2]\emph{(Composition)} For all composable pairs $f,g\in\Mor(\cat{B})$, $F\b{g \circ f} = F\b{g}\circ F\b{f}$.
  \end{description}
\end{definition}

\begin{notation}
  We refer to the objects $F(x)\in\cat{C}$ as the \emph{components} of $F$ in \cat{C},
  and the morphisms $F\b{f}\in\Mor(\cat{C})$ as the \emph{arrow maps} of $F$ in \cat{C}.
\end{notation}

\begin{definition}[Natural Transformation]
  Let $F,G : \cat{B}\rightrightarrows\cat{C}$ be parallel functors.
  A \defn{natural transformation} $\theta : F\Rightarrow G$ is a morphism of functors
  given by an object map $\theta_{(-)} : F(-)\to G(-)$ such that Diagram~\eqref{dgm:natural} commutes for all $f\in\mor(\cat{B})$.
  \begin{equation}\label{dgm:natural}
    \begin{tikzcd}[row sep=large]
          b \arrow[d,"f"]
      &   F(b)\arrow[d,"F\b{f}"]\arrow[rr,"\theta_b"]
      &&  G(b)\arrow{d}{G\b{f}}
      \\
          c
      &   F(c)\arrow[rr,"\theta_c"]
      &&  G(c)
    \end{tikzcd}
  \end{equation}
\end{definition}

For any pair of categories \cat{B},\cat{C}, the \defn{functor category} $\cat{C}^\cat{B} = \Fun{\cat{B},\cat{C}}$ has functors $F:\cat{B}\to\cat{C}$ as objects and natural transformations $\theta : F\Rightarrow G$ as arrows.

\subsection{Slice Categories and the Category of Elements}\label{sec:slice}

\begin{definition}[Slice Categories]
  For any object $C$ in a category $\cat{C}$ the category of \defn{objects over $C$} is the category $\cat{C}/C$ with pairs $(A,f)$ as objects for each $f\in\Hom_\cat{C}(A,C)$,
  and arrows $\gamma : (A,f)\longuto (B,g)$ for each $\supp{\gamma}\in \Hom_\cat{C}(A,B)$ with $f = g\circ \supp{\gamma}$.
  \begin{equation}
    \begin{tikzcd}[sep=large]
      A \arrow[rr,dotted,"\supp{\gamma}"description]
        \arrow[dr,"f"description]
        \arrow[dr,phantom,""{name=f,above}]
      & & B
        \arrow[dl,"g"description]
        \arrow[dl,phantom,pos=0.5,""{name=g,above}]
      \\
      & C. &
      \arrow[from=f,to=g,shift right=1,shorten >=2ex,shorten <=1.5ex]
      \arrow[from=f,to=g,shift right=2,symbol=\bdot]
    \end{tikzcd}
  \end{equation}
  The category of \defn{objects under $C$} is defined $C/\cat{C} \defined \op{\cat{C}}/C$.
\end{definition}

\begin{definition}[Category of Elements]
  Let $F : \cat{C}\to \Set$ be a functor from a small category \cat{C} to \Set.
  The \defn{category of elements} $\El(F)$ of $F$ has an object $(C,\beta)$ for each objet $C\in\cat{C}$ and each \emph{element} $\beta\in F(C)$,
  and arrows $s : (C',\beta')\to (C,\beta)$ for each $\supp{s} : C'\to C$ of $\cat{C}$ with $F\b{\supp{s}}(\beta') = \beta$.
\end{definition}

\begin{notations}\hfill
  \begin{enumerate}
    \item Objects of $\El(F)$ may be denoted $\beta_C \defined (C,\beta)\in\El(F)$.
    \item Letting $\rho_C : \lim F\to F(C)$ denote the canonical projection, for any $\beta_C\in\El(F)$ let
      \[\cev{\beta}_C\defined \{\alpha\in\lim F\mid \rho_C(\alpha) = \beta\}.\]
    \item If $\cat{C} = \P$ is a poset then $\El(F)$ is a poset in which
      $\beta'_{p'}\preceq \beta_p$ if $p'\leq p$ and $F\b{p'\leq p}(\beta') = \beta$.
  \end{enumerate}
\end{notations}

\subsection{Limits and Adjunctions}\label{sec:limits}

Let $\cat{C}$ be a category.
For any small category $\cat{J}$ the \defn{diagonal functor} $\Delta_\cat{J} : \cat{C}\to \cat{C}^\cat{J}$ takes each object $C\in\cat{C}$ to the constant functor $\Delta_\cat{J}(C) : \cat{J}\to \cat{C}$ defined $j\mapsto C$ for all $j\in \cat{J}$.
A \defn{cone} from an object $C\in\cat{C}$ to a functor $F:\cat{J}\to \cat{C}$ is a natural transformation $\theta : \Delta_\cat{J}(C)\Rightarrow F$.
Dually, a \defn{cocone} from a $F$ to $C$ is a natural transformation $\varphi : F\Rightarrow \Delta_\cat{J}(C)$.

\begin{definition}[Limits]
  A cone $\rho : \Delta_\cat{J}(L)\Rightarrow F$ is universal if, for each cone $\theta : \Delta_\cat{J}(C)\Rightarrow F$ there is a unique arrow $u : C\to L$ with $\rho_j\circ u = \theta_j$ for all $j\in\cat{J}$:
  \begin{equation}
    \begin{tikzcd}[column sep=huge]
      && F(j)\arrow[dd,"F\b{f}"]\\
      C\arrow[r,dotted,"\exists ! u"description]\arrow[urr, bend left=10,"\theta_j"]\arrow[drr, bend right=10,"\theta_k"'] &
      L\arrow[ur,"\rho_j"]\arrow[dr,"\rho_k"'] &\\
      && F(k)
    \end{tikzcd}
  \end{equation}
  If it exists, the universal cone $\rho : \Delta_\cat{J}(L)\Rightarrow F$ is the \defn{(projective) limit} $\lim F$ of $F$.
  We refer to the vertex $L$ as the limit $\lim F$ of $F$ and the components $\rho_j : \lim F\to F(j)$ as the associated canonical projections.
\end{definition}

\begin{examples}
  Let $\cat{J}$ be a small category and let \P be a poset.
  \begin{enumerate}
    \item The limit of a functor $F : \cat{J}\to \P$ is the greatest lower bound or \defn{meet} $\Meet F = \lim F$.
    \item For any functor $F : \cat{J}\to \Set$ the elements of the projective limit
      \[ \proj F\simeq \Big\{\{\beta_j\}\in\prod_{j\in \cat{J}} F(j)\mid F\b{f}(\beta_{k}) = \beta_j\text{ for all } f\in\Hom_\cat{J}(k,j)\Big\}\]
      may be regarded as \emph{consistent} subsets of $\El(F)$ indexed by objects in $\cat{C}$~\cite{woolf08fundamental}.
    \end{enumerate}
\end{examples}

\begin{definition}[Colimit]
  A cocone $\iota : F\Rightarrow \Delta_\cat{J}(L)$ is universal if, for each cocone $\varphi : F\Rightarrow\Delta_\cat{J}(C)$ there is a unique arrow $u : L\to C$ with $u\circ\iota_j = \varphi_j$ for all $j\in\cat{J}$:
  \begin{equation}
    \begin{tikzcd}[column sep=huge]
      F(j)\arrow[dd,"F\b{f}"]\arrow[drr,bend left=10,"\varphi_j"]\arrow[dr,"\iota_j"]&&\\
      & L\arrow[r,dotted,"\exists ! u"description] & C\\
      F(k)\arrow[urr,bend right=10,"\varphi_k"']\arrow[ur,"\iota_k"'] &&
    \end{tikzcd}
  \end{equation}
  If it exists, the universal cocone $\iota : F\Rightarrow\Delta_\cat{J}(L)$ is the \defn{colimit} or \defn{inductive limit} $\colim F$ of $F$.
  We will refer to the vertex $L$ as the colimit $\colim F$ of $F$ and the components $\iota_j : F(j)\to\colim F$ as the canonical coprojections.
\end{definition}

\begin{examples}
  Let $\cat{J}$ be a small category and let \P be a poset.
  \begin{enumerate}
    \item The colimit of a functor $F : \cat{J}\to \P$ is the lowest upper bound or \defn{join} $\Join F = \colim F$.
    \item The inductive limit of a functor $F : \cat{J}\to \Set$ may be expressed as the quotient
      \[ \colim F \simeq \Big(\bigsqcup_{j\in \cat{J}} F(j)\Big)/\!\sim\]
      where $\sim$ is the equivalence relation generated by $\alpha_j\sim\beta_k$ if there exists some $f : j\to k$ such that $F\b{f}(\alpha) = \beta$.
  \end{enumerate}
\end{examples}

\begin{definition}[Filtrant]
  A category $\cat{I}$ is \defn{filtrant} if
  \begin{enumerate}
    \item $\cat{I}$ is non empty,
    \item for any $i$ and $j$ in $\cat{I}$ there exists $k\in \cat{I}$ and morphisms $i\to k$ and $j\to k$,
    \item for any parallel morphisms $f,g : i\rightrightarrows j$, there exists a morphism $h : j\to k$ such that $h\circ f= h\circ g$.
  \end{enumerate}
  In particular, a poset \P is filtrant if it is directed.
\end{definition}

\begin{proposition}[Kashiwara and Schapira~\cite{kashiwara05categories} Proposition 3.1.3]\label{prop:filtrant}
  Let $F : \cat{I}\to \Set$ be a functor with $\cat{I}$ small and filtrant.
  Then
  \[ \colim F\simeq \Big(\bigsqcup_{i\in\cat{I}} F(i)\Big)/\! \sim\]
  where $\alpha_i\sim\alpha'_j$ \emph{if and only if} there exist $f : i\to k$ and $g : j\to k$ such that $F\b{f}(\alpha)=F\b{g}(\alpha')$.
\end{proposition}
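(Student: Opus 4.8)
The plan is to reduce the statement to the general description of colimits in $\Set$ recorded earlier in the excerpt, namely $\colim F \simeq \big(\bigsqcup_{i} F(i)\big)/\!\approx$, where $\approx$ is the equivalence relation \emph{generated} by declaring $\alpha_i\approx\beta_j$ whenever some $f:i\to j$ has $F\b{f}(\alpha)=\beta$ (this is the relation called $\sim$ in that example, here renamed to avoid a clash). Writing $\sim$ for the relation named in the present statement — $\alpha_i\sim\alpha'_j$ iff there exist $f:i\to k$ and $g:j\to k$ with $F\b{f}(\alpha)=F\b{g}(\alpha')$ — it then suffices to prove that $\sim$ and $\approx$ coincide as relations on $\bigsqcup_i F(i)$, since substituting into the general formula immediately gives the asserted quotient.

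First I would check that $\sim$ is itself an equivalence relation; this is the only place where filtrancy does any work. Reflexivity uses $k=i$ and $f=g=\id_i$, and symmetry is immediate from exchanging the two legs. The crux is transitivity, which I expect to be the main (though modest) obstacle, as it is exactly the coherence that filtrancy is designed to supply. Suppose $\alpha_i\sim\alpha'_j$ via $f:i\to k$, $g:j\to k$ and $\alpha'_j\sim\alpha''_l$ via $g':j\to k'$, $h:l\to k'$. Filtrancy axiom~(ii) supplies an object $m$ with $u:k\to m$ and $v:k'\to m$, but the parallel pair $ug,\,vg':j\rightrightarrows m$ need not agree, so I would invoke axiom~(iii) to obtain $w:m\to n$ with $wug=wvg'$. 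Functoriality together with the two witnessing equalities then yields
\[ F\b{wuf}(\alpha)=F\b{wug}(\alpha')=F\b{wvg'}(\alpha')=F\b{wvh}(\alpha''), \]
so the legs $wuf:i\to n$ and $wvh:l\to n$ witness $\alpha_i\sim\alpha''_l$.

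With $\sim$ established as an equivalence relation, the two inclusions are routine. Every elementary generator of $\approx$ is an instance of $\sim$ (take the second leg to be an identity), and since $\approx$ is by definition the \emph{smallest} equivalence relation containing those generators, we get $\approx\,\subseteq\,\sim$. Conversely, if $\alpha_i\sim\alpha'_j$ with common image $\gamma=F\b{f}(\alpha)=F\b{g}(\alpha')\in F(k)$, then the single generators $f$ and $g$ give $\alpha_i\approx\gamma_k$ and $\alpha'_j\approx\gamma_k$, whence $\alpha_i\approx\alpha'_j$ by symmetry and transitivity of $\approx$; hence $\sim\,\subseteq\,\approx$. Therefore $\sim\,=\,\approx$, and the desired description of $\colim F$ follows.
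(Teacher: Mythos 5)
Your proof is correct. Note that the paper itself offers no proof of this proposition: it is imported verbatim from Kashiwara and Schapira (Proposition 3.1.3), so there is no in-paper argument to compare against. Your route — reduce to the paper's stated description of $\colim F$ in $\Set$ as the quotient by the \emph{generated} equivalence relation, then show the filtrant relation coincides with it — is the standard one, and the only genuinely non-trivial step, transitivity, is handled exactly right: axiom (ii) of filtrancy merges the two cocone vertices $k$ and $k'$ into $m$, and axiom (iii) is correctly invoked to equalize the resulting parallel pair $ug, vg' : j \rightrightarrows m$ before chasing the equalities through functoriality. The two inclusions are also complete: generators embed into $\sim$ by taking an identity as the second leg, and conversely a common image $\gamma_k$ splits a $\sim$-witness into two generators joined by symmetry and transitivity of $\approx$. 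One small observation worth recording: axiom (i) of filtrancy (non-emptiness) is never used, and indeed the statement holds vacuously for the empty category; only axioms (ii) and (iii) carry weight here, which your proof makes transparent.
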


\begin{theorem}[Mac Lane~\cite{mac13categories} Theorem IX.2.1]\label{thm:limit_interchange}
  If \cat{P} is a finite category and \cat{J} is small and filtered category,
  then for any bifunctor $F : P\times J\to \Set$ the following canonical arrow is a bijection:
  \[\kappa : \colim_{j\in\cat{J}}\lim_{p\in\cat{P}} F(p,j)\longisorightarrow \lim_{p\in\cat{P}}\colim_{j\in\cat{J}} F(p,j).\]
\end{theorem}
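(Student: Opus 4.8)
The statement is the classical fact that finite limits commute with filtered colimits in $\Set$, and the plan is to prove that $\kappa$ is a bijection directly from the explicit descriptions of both sides. First I would recall that, since $\cat{J}$ is filtered, Proposition~\ref{prop:filtrant} realizes each colimit $\colim_{j} F(p,j)$ as $\big(\bigsqcup_{j} F(p,j)\big)/\!\sim$, where $x\in F(p,j)$ and $y\in F(p,j')$ are identified precisely when there exist $g : j\to k$ and $g' : j'\to k$ with $F\b{g}(x) = F\b{g'}(y)$; and that a limit over the finite category $\cat{P}$ is, as in the Examples following the definition of limits, the set of consistent families $(x_p)_p$ with $x_p\in F(p,j)$ satisfying $F\b{\phi}(x_p) = x_{p'}$ (in the $\cat{P}$-direction) for every arrow $\phi : p\to p'$. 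Under these identifications $\kappa$ sends the class of a consistent family $(x_p)_p$ to the family of classes $\big([x_p]\big)_p$, and the task is to check this is well defined, injective, and surjective.

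For surjectivity I would begin with a consistent family $\big([x_p]\big)_p\in\lim_{p}\colim_{j} F(p,j)$ and pick a representative $x_p\in F(p,j_p)$ for each $p$. Because $\cat{P}$ has only finitely many objects, the common-upper-bound condition of filteredness furnishes a single $j_0\in\cat{J}$ receiving maps from every $j_p$, so I may assume all representatives live over one $j_0$. The consistency condition then asserts that for each of the finitely many arrows $\phi : p\to p'$ of $\cat{P}$ the elements $F\b{\phi}(x_p)$ and $x_{p'}$ become equal in $\colim_{j} F(p',j)$, and each such equality is witnessed by a map out of $j_0$. A further application of filteredness, legitimate because there are only finitely many such witnesses, produces one object $k$ through which they all factor; over $k$ the images of the $x_p$ form a genuine consistent family in $\lim_{p} F(p,k)$ whose class is a $\kappa$-preimage of $\big([x_p]\big)_p$.

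Injectivity follows the same template. Given two classes with equal $\kappa$-image, I would choose consistent-family representatives $(x_p)_p$ and $(y_p)_p$ and, using filteredness, push both to a common stage $j_0$. Equality of the images means $[x_p] = [y_p]$ in $\colim_{j} F(p,j)$ for each of the finitely many $p$, and each of these equalities is again witnessed by a map out of $j_0$; collecting the witnesses into a single object $k$ shows $x_p$ and $y_p$ agree over $k$ for every $p$, so $(x_p)_p$ and $(y_p)_p$ already represent the same element of $\colim_{j}\lim_{p} F(p,j)$.

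The only real subtlety, and the step I expect to be the crux, is the repeated finite bookkeeping in $\cat{J}$: one must first gather finitely many representatives to a common stage and then coequalize finitely many compatibility relations simultaneously. Both moves rely essentially on $\cat{P}$ being finite in \emph{both} objects and arrows, since a filtered $\cat{J}$ only guarantees upper bounds and coequalizers for finite configurations. I would make this dependence explicit, as it is exactly where the hypothesis that $\cat{P}$ is finite is consumed, and it is the sole obstruction to the argument — the well-definedness of $\kappa$ and all remaining verifications are routine diagram chases once the representatives have been transported to a common $k$.
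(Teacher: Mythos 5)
The paper gives no proof of this statement at all---it is imported verbatim as Mac Lane's Theorem~IX.2.1---so the only meaningful comparison is with the classical argument in the cited source, which is exactly what you reproduce. Your proof is correct: the explicit descriptions via Proposition~\ref{prop:filtrant}, the transport of the finitely many representatives to a common stage, and the elimination of the finitely many compatibility relations constitute the standard argument, and you rightly flag the two places where care is needed---note only that ``each such equality is witnessed by a map out of $j_0$'' itself requires combining the pair of maps supplied by Proposition~\ref{prop:filtrant} with the coequalizing axiom of filteredness, and that ``collecting the witnesses into a single object $k$'' is really a sequential composition of such maps in a finite chain rather than a single application of the upper-bound axiom, both of which are routine refinements consistent with the bookkeeping you already acknowledge.
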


\begin{definition}[Pullback]
  A \defn{pullback} or \defn{fibered product} of arrows $f : A\to C$ and $g : B\to C$ in a category \cat{C} is given by an object $A\times_C B$ and projection maps $\rho_A : A\times_C B\to A$ and $\rho_B : A\times_C B\to B$ satisfying the following universal property:
  \begin{quote}
    \emph{For all objects $X\in\cat{C}$ and arrows $p : X\to A$ and $q : X\to B$ such that $f\circ p = g\circ q$, there exists a unique arrow $u : X\to A\times_C B$ such that $p = \rho_A\circ u$ and $q = \rho_B\circ u$:}
    \begin{equation}
      \begin{tikzcd}[sep=huge]
        X\arrow[drr,bend left=10,"q"]\arrow[ddr,bend right=10,"p"]\arrow[dr,dotted,"\exists!u"description] &&\\
        & A\times_C B
          \arrow[r, dotted, "\rho_B"]
          \arrow[d, dotted, "\rho_A"]
        & B\arrow[d,"g"] \\
        & A\arrow[r,"f"] & C
      \end{tikzcd}
    \end{equation}
  \end{quote}
\end{definition}

\begin{example}
  The pullback of functions $f : A\to C$ and $g : B\to C$ in \Set is given by the set
  \[ A\times_C B = \{(a,b)\in A\times B\mid f(a) = g(b)\}.\]
\end{example}

\begin{definition}[Change of Base Functor]
  Suppose \cat{C} has pullbacks and let $f :A\to C$ be an arrow in \cat{C}.
  The \defn{change of base functor} associated with the object $(A,f)\in\cat{C}/C$ over $C$ is defined
  \begin{align*}
    f^\star : \cat{C}/C &\longrightarrow \cat{C}/A\\
    (B,g) &\longmapsto (A\times_{C} B, g^f)
  \end{align*}
  where $A\times_{C} B$ is the pullback of $f$ along $g : B\to C$ (see Appendix~\ref{sec:limits}).
  \begin{equation}
    \begin{tikzcd}[sep=huge]
      A\times_{C} B
          \arrow[r, dotted, "f^g"]
          \arrow[d, dotted, "g^f"description]
      & B \arrow[d,"g"]
      \\
      A   \arrow[r,"f"]
      & C.
    \end{tikzcd}
  \end{equation}
  The change of base functor is right adjoint to the functor $\Sigma_f : \cat{C}/A\to \cat{C}/C$ associated with $f$ that takes an object $(D,h)\in\cat{C}/A$ over $A$
  to the object over $C$ given by the composition $f\circ h : Z\to C$ (see Theorem I.9.4~\cite{maclane92sheaves}):
  \begin{align*}
    \Sigma_{f} : \cat{C}/A&\longrightarrow \cat{C}/C\\
      (D,h) &\longmapsto (Z, f\circ h).
  \end{align*}
  That is, we have the following adjunction:
  \[\adjoint{\cat{C}/A}{\cat{C}/C.}["\Sigma_f"]["f^\star"]\]
  The unit and counit of $\Sigma_f\dashv f^\star$ will be denoted
  \begin{align*}
    \mathbf{(Unit)}\hspace{5ex} \upsilon^f : \id_{\cat{C}/A}&\Longrightarrow f^\star\Sigma_f,\hspace{15ex}\\
    \mathbf{(Counit)}\hspace{5ex} \chi^f : \Sigma_f f^\star&\Longrightarrow \id_{\cat{C}/C}.\hspace{15ex}
  \end{align*}
\end{definition}

\begin{definition}[Kashiwara and Schapira~\cite{kashiwara05categories} Definition 2.2.6]\label{def:stable_base}
  We say that colimits in \cat{C} indexed by \cat{I} are \defn{stable by base change} if,
  for any morphism $f : Y\to Z$ of \cat{C} the change of base functor $f^\star : \cat{C}/Z\to \cat{C}/Y$
  commutes with colimits indexed by \cat{I}.
  Equivalently, for any inductive system $\{X_i\}_{i\in I}$ in \cat{C} and any pair of morphisms $Y\to Z$ and $\lim_{i\in \cat{I}} X_i\to Z$ in \cat{C},
  we have an isomorphism
  \begin{equation}\label{eq:stable_base}
    \colim_{i\in\cat{I}} (X_i\times_Z Y)\longisorightarrow\big( \lim_{i\in I} X_i \big)\times _Z Y.
  \end{equation}
  If \cat{C} admits small colimits and~\eqref{eq:stable_base} holds for any small category \cat{I},
  we shall say that small colimits in \cat{C} are \defn{stable by base change}.
\end{definition}

Importantly, the category \Set admits small colimits that are stable by base change.

\begin{definition}[Pushout]
  A \defn{pushout} or \defn{fibered coproduct} of arrows $f : A\to B$ and $g : A\to C$ in a category \cat{C} is given by an object $B\sqcup_A C$ and projection maps $\iota_B : B\to B\sqcup_A C$ and $\iota_C : C\to B\sqcup_A C$ satisfying the following universal property:
  \begin{quote}
    \emph{For all objects $X\in\cat{C}$ and arrows $p : B\to X$ and $q : C\to X$ such that $p\circ f = q\circ g$, there exists a unique arrow $u : B\sqcup_A C\to X$ such that $p = u\circ \iota_B$ and $q = u\circ \iota_C$:}
    \begin{equation}
      \begin{tikzcd}[sep=huge]
        A
          \arrow[r,"g"]\arrow[d,"f"]
        & C
          \arrow[d,dotted,"\iota_C"]
          \arrow[ddr,bend left=10,"q"] & \\
        B
          \arrow[r,dotted,"\iota_B"]
          \arrow[drr,bend right=10,"p"]
        & B\sqcup_A C
          \arrow[dr,dotted,"\exists!u"description]
        &\\
        && X.
      \end{tikzcd}
    \end{equation}
  \end{quote}
\end{definition}

\begin{example}
  The pushout of functions $f : A\to B$ and $g : A\to C$ in \Set is given by the set the quotient
  \[ B\sqcup_A C = \big(B\sqcup C)/\!\sim\]
  where $\sim$ is the equivalence relation generated by $(b,0)\sim (c,1)$ if there exists some $a\in A$ such that $f(a) = b$ and $g(a) = c$.
\end{example}

\begin{definition}[Adjoint Functors (Kashiwara and Schapira~\cite{kashiwara05categories} Definition 1.5.2)]
  Let $L : \cat{C}\to \cat{C}'$ and $R : \cat{C}'\to \cat{C}$ be functors.
  We write $L\dashv R$ and say that $L$ is \defn{left adjoint} to $R$ (equivalently, $R$ is \defn{right adjoint} to $L$)
  if there exists an isomorphism of bifunctors from $\op{\cat{C}}\times \cat{C}$ to $\Set$:
  \[ \Hom_{\cat{C}'}\big(L(-), -\big)\simeq \Hom_{\cat{C}}(-,R(-)).\]
  Let $X\in\cat{C}$. Applying the isomorphism above with $X$ and $L(X)$, we find that the isomorphism
  \[ \Hom_{\cat{C}'}\big(L(X), L(X))\simeq \Hom_{\cat{C}}\big(X, RL(X))\big),\]
  and the identity of $L(X)$ defines a morphism $X\to RL(X)$.
  Similarly, we construct $LR(Y)\to Y$, and these morphisms are functorial with respect to $X$ and $Y$.
  Hence, we have constructed natural transformations
  \begin{align*}
    \mathbf{(Unit)}\hspace{5ex } \eta &: \id_{\cat{C}} \Longrightarrow RL\hspace{15ex}\\
    \mathbf{(Counit)}\hspace{5ex} \epsilon &: LR \Longrightarrow \id_{\cat{C}'}\hspace{15ex}.
  \end{align*}
  Equivalently, $L\dashv R$ if there exist natural transformations $\eta : \id_{\cat{C}}\Rightarrow RL$ and $\epsilon : LR\Rightarrow\id_{\cat{C}'}$
  such that the following diagrams of natural transformations commute:
  \begin{equation}
    \begin{tikzcd}[sep=huge]
      L
        \arrow[dr, equal, "\id_L"']
        \arrow[r, Rightarrow,"L\eta"]
      & LRL
        \arrow[d, Rightarrow, "\epsilon L"]
      && R
        \arrow[d,Rightarrow,"\eta R"']
        \arrow[dr,equal,"\id_R"]
      &\\
      & L
      && RLR
        \arrow[r,Rightarrow,"R\epsilon"']
      & R.
    \end{tikzcd}
  \end{equation}
\end{definition}

\subsection{Cosheaves cont.}\label{sec:cosheaves2}

\begin{definition}[Cosheaf]\label{def:cosheaf}
  A precosheaf $F$ on $X$ is a \defn{cosheaf} if, for every open set $U\in\Op_X$ and every open cover $\U : I\to \Op_U\subseteq \Op_X$ of $U$,
  the set $F(U)$ is isomorphic to the coequalizer of the diagram
  \[ \bigsqcup_{i,j\in I} F\big(\U(i)\cap \U(j)\big) \rightrightarrows \bigsqcup_{i\in I} F\big(\U(i)\big).\]
  That is, $F(U)$ is isomorphic to the colimit of the family of maps
  \[ F\big(\U(i)\cap \U(j)\big) \rightrightarrows F\big(\U(i)\big)\sqcup F\big(\U(j)\big)\text{ for all } \U(i)\cap\U(j)\neq\emptyset.\]
\end{definition}

If $F$ is a precosheaf on $X$ and $\U : I\to \Op_U\subseteq \Op_X$ is a cover of $U\in\Op_X$ then the pushout of $F\b{\U(i)\cap\U(j)\subseteq \U(i)}$
and $F\b{\U(i)\cap\U(j)\subseteq \U(j)}$ (Diagram~\eqref{dgm:cosheaf_pushout}) is equivalent to the quotient $\big(F\big(\U(i)\big)\cap F\big(\U(j)\big)\big)/\!\sim$
where $\sim$ is the equivalence relation \emph{generated by} $(\delta,i)\sim (\zeta,j)$ if there exists some $\alpha\in F\big(\U(i)\cap \U(j)\big)$ that maps to $\delta\in F\big(\U(i)\big)$ and $\zeta\in F\U(j)$.
\begin{equation}\label{dgm:cosheaf_pushout}
  \begin{tikzcd}[sep=huge]
    F\big(\U(i)\cap\U(j)\big)\arrow[r]\arrow[d]
    &F\big(\U(j)\big)\arrow[d, dotted] \\
    F\big(\U(i)\big)\arrow[r, dotted]
    & F(U).
  \end{tikzcd}
\end{equation}

If $F$ is a cosheaf then $F(U)\simeq \big(F\big(\U(i)\big)\sqcup F\big(\U(j)\big)\big)/\!\sim$ implies that the elements $\beta\in F(U)$
may be identified with equivalence classes $\b{\beta_i}$ of elements $\beta_i\in F\big(\U(i)\big)$ and $\beta_j\in F\big(\U(j)\big)$ for which
\begin{enumerate}
  \item $F\b{\U(i)\subseteq U}(\beta_i) = F\b{\U(j)\subseteq U}(\beta_j) = \beta$ and
  \item there exists some $\alpha\in F\big(\U(i)\cap \U(j)\big)$ such that
    \begin{align*}
      &F\b{\U(i)\cap \U(j)\subseteq \U(i)}(\alpha) = \beta_i\\
      \text{and } &F\b{\U(i)\cap \U(j)\subseteq \U(j)}(\alpha) = \beta_j.
    \end{align*}
\end{enumerate}
Moreover, for every open set $U\in\Op_X$ and every cover $\U : I\to \Op_U\subseteq \Op_X$,
the universal arrow making Diagram~\eqref{dgm:cosheaf} commute for all $\sigma\subseteq\tau$ in $\N_\U$ is an isomorphism~\cite{curry14sheaves}:
\[ \h : \colim_{\sigma\in\N_\U}  F(\U_\sigma) \longisorightarrow F(U).\]
This property is often referred to as the \emph{cosheaf axiom} in the literature~\cite{curry14sheaves}.
\begin{equation}\label{dgm:cosheaf}
  \begin{tikzcd}[column sep=huge]
    F(\U_\tau)
      \arrow[dd,"F\b{\U_\tau\subseteq \U_\sigma}"]
      \arrow[drr,bend left=10,"F\b{\U_\tau\subseteq U}"description]
      \arrow[dr, tail, "\iota_\tau"]
    &&\\
    & \displaystyle\colim_{\sigma\in\N_\U} F(\U_\sigma)
      \arrow[r,dotted,"\exists ! \h "description]
    & F(U)\\
    F(\U_{\sigma})
      \arrow[urr,bend right=10,"F\b{\U_\sigma\subseteq U}"description]
      \arrow[ur, tail, "\iota_\sigma"'] &&
  \end{tikzcd}
\end{equation}

\begin{notations}
  Let $F$ be a precosheaf on $X$.
  \begin{enumerate}
    \item The restriction of $F$ to an open set $U\in\Op_X$ (regarded as a subspace of $X$) will be denoted
      \[ F_U\defined F|_{\Op_U} : \Op_U\longrightarrow \Set.\]
      Importantly, if $F$ is a cosheaf on $X$, then $F_U$ is a cosheaf on $U$ for all $U\in\Op_X$.
    \item The restriction of $F$ to the open neighborhoods of a point $x\in X$ will be denoted
      \[ F^x \defined F|_{\Nb_X(x)} : \Nb_X(x)\longrightarrow \Set.\]
  \end{enumerate}
\end{notations}

\section{The Display Space of a Cosheaf}\label{sec:display}

Let $X$ be a topological space and let $F$ be a precosheaf on $X$.
The category of elements $\El(F)$ is a poset in which $\beta'_{U'}\preceq \beta_U$ if $U'\subseteq U$ and $F\b{U'\subseteq U}(\beta') = \beta$.
As in~\cite{funk95display}, we endow $\El(F)$ with the \emph{co}specialization topology in order to obtain
a topological space with points $\beta_U$ for each $U\in\Op_X$ and $\beta\in F(U)$.

\begin{definition}[Total Locale~\cite{funk95display}]
  The \defn{total locale} of a precosheaf $F$ on $X$ is the topological space $\sE_X(F)$ with points given by pairs
  $\beta_U\defined (U,\beta)\in \El(F)$ for each open set $U\in\Op_X$ and each \emph{element} $\beta\in F(U)$,
  and open sets generated by principal down sets
  \[ \down{\beta_U} \defined \big\{ \beta'_{U'}\in \sE_X(F)\mid U'\subseteq U\text{ and } F\b{U'\subseteq U}(\beta') = \beta\big\}.\]
\end{definition}

\begin{remark}
  The term \emph{locale} refers to a generalization of topological spaces used in the original work~\cite{funk95display}.
  Much of our results are a direct application of this work in the special case of topological spaces.
  Extending the results of this paper to the localistic setting is the subject of future work.
  We direct the interested reader to Mac Lane and Moerdijk~\cite{maclane92sheaves} for a full treatment of localistic topoi.
\end{remark}

\begin{notation}
  Let $\id_X : U\mapsto\{\pt\}$ denote the terminal (pre)cosheaf on $X$ and let $\Op_X(-) : X\to \sE_X(\id_X)$
  denote the continuous function that takes each point $x\in X$ to the up-closed, down-directed subset $\down{\Op_X(x)}\in\sE_X(\id_X)$.
\end{notation}

The display space of a precosheaf $F$ on $X$ was originally defined~\cite{funk95display} as the pullback
of $\Op_X(-)$ along the continuous map induced by the unique natural transformation $F\Rightarrow\id_X$ in \Top:
\begin{equation}
  \begin{tikzcd}[sep=huge]
    \dis(F)
      \arrow[r, dotted, "\pi_F"]
      \arrow[d, dotted, "\gamma_F"]
      &\sE_X(F) \arrow[d, dotted, "\exists!"]\\
    X\arrow[r, "\Op_X(-)"]
    & \sE_X(\id_X).
  \end{tikzcd}
\end{equation}
Alternatively, we can define the display space as the (topological) coproduct of costalks as follows
(see also Brown et al.~\cite{brown21probabilistic} and Woolf~\cite{woolf08fundamental} Appendix B).

\begin{definition}[Costalk]
  For any $x\in X$ and any open neighnorhood $U\in\Op_X(x)$ let $\rho_U^x : \lim F^x\to F(U)$ be the canonical projection
  and let $p_U^x : \lim F^x\to \sE_X(F)$ be defined $\alpha\mapsto (U,\rho_U^x(\alpha))$.
  Let $\dis_x(F)$ denote the topological space given by topologizing the costalk $\lim F^x$
  with the initial topology associated with the projective system
  \[ \big\{p_U^x : \lim F^x \to \sE_X(F)\big\}_{U\in\Op_X(x)}.\]
\end{definition}

The points $\alpha\in\dis_x(F)$ may be regarded as subsets of $\sE_X(F)$
that are ``consistent'' on the open neighborhoods of $x$ in the sense that, for all open sets $U\supseteq U'\ni x$,
\[ \rho_U^x(\alpha) = F\b{U'\subseteq U}\circ \rho_{U'}^x(\alpha).\]
Recalling that the topology of the total locale $\sE_X(F)$ is generated by principal down sets $\down{\beta}_U$
for each $U\in\Op_X$ and $\beta\in F(U)$, it follows that the (initial) topology on $\dis_x(F)$ is generated by basic open sets
\[ \cev{\beta}_U^x \defined \big(p_U^x\big)^{-1}(\down{\beta}_U) = \big\{\alpha\in \lim F^x \mid \rho_U^x(\alpha) = \beta\big\}.\]

We will conclude this section with some useful results.

Let $X$ be a topological space and let $\P$ be a poset endowed with the specialization topology.

\begin{proposition}\label{prop:rho_bi}
  For any precosheaf $G$ on $\P$ and $p\in\P$,
  \[\dis_p(G)\simeq G(\up{p}).\]
\end{proposition}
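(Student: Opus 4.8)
The plan is to exploit the single feature that distinguishes the specialization topology: for every $p\in\P$ the principal up-set $\up{p}$ is itself open and is the \emph{smallest} open set containing $p$. First I would verify this minimality. Any open $U\ni p$ is a union of basic open sets $\up{q}$, so $p\in\up{q}$ for some $q$, i.e. $q\leq p$, whence $\up{p}\subseteq\up{q}\subseteq U$; since $\up{p}$ is open and contains $p$, it is the least element of the neighborhood poset $\Op_\P(p)$. This is the crux on which everything else rests, and it is exactly the place where the hypothesis that $\P$ carries the specialization topology is used.

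With that in hand I would compute the costalk $\lim G^p$ at the level of sets. Because $\up{p}\subseteq U$ for every $U\ni p$, the compatibility condition defining a point of the costalk forces $\alpha_U = G\b{\up{p}\subseteq U}(\alpha_{\up{p}})$ for all $U\ni p$, so a compatible family $\{\alpha_U\}$ is determined by its single component $\alpha_{\up{p}}\in G(\up{p})$; conversely, functoriality (FUN1, FUN2) shows that any $\beta\in G(\up{p})$ extends to a compatible family via $U\mapsto G\b{\up{p}\subseteq U}(\beta)$. Hence the canonical projection $\rho_{\up{p}}^p:\lim G^p\to G(\up{p})$ is a bijection. This is just the observation that the limit of the covariant functor $G^p$ over an index category possessing an initial object (here $\up{p}$, the least neighborhood) is its value at that object.

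The remaining step is to match the topologies. The costalk $\dis_p(G)$ carries the initial topology generated by the subbasic sets $\cev{\beta}_U^p=\{\alpha\in\lim G^p\mid \rho_U^p(\alpha)=\beta\}$. Transporting these along $\rho_{\up{p}}^p$ turns $\cev{\beta}_U^p$ into $\big(G\b{\up{p}\subseteq U}\big)^{-1}(\beta)\subseteq G(\up{p})$; specializing to $U=\up{p}$, where $G\b{\up{p}\subseteq\up{p}}=\id$, exhibits every singleton $\{\beta\}$ as open. Thus the initial topology is discrete, and $\rho_{\up{p}}^p$ is a homeomorphism onto $G(\up{p})$ viewed as a discrete space, which is the statement $\dis_p(G)\simeq G(\up{p})$.

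The argument is essentially routine; the only points that need genuine care are the variance bookkeeping—since the costalk is a \emph{limit} of the covariant precosheaf $G$, it is the \emph{minimal} neighborhood $\up{p}$, not a maximal one, that controls the compatible families—and the verification that the subbasic sets collapse to singletons, so that the induced topology on the costalk is discrete rather than merely coarser.
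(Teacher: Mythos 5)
Your proof is correct and follows essentially the same route as the paper's: both hinge on the fact that $\up{p}$ is the least element of the neighborhood poset $\Op_\P(p)$ in the specialization topology, so the costalk limit collapses to the value $G(\up{p})$ at that initial object (the paper phrases this as a surjectivity/injectivity check on $\rho_{\up{p}}^p$). Your explicit verification of minimality and the observation that the induced topology on the costalk is discrete are welcome additions the paper leaves implicit, but they do not change the substance of the argument.
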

\begin{proof}
  Let $p\in\P$.
  For all $\beta\in G(\up{p})$ there exists a consistent set of elements
  \[ \alpha = \{G\b{\up{p}\subseteq S}(\beta)\}_{S\ni p}\in \dis_p(G)\]
  such that $\rho_{\up{p}}^p(\alpha) = \beta$, so the canonical projection $\rho_{\up{p}}^p$ is surjective.
  It therefore suffices to show that $\rho_{\up{p}}^p : \dis_p(G) \twoheadrightarrow G(\up{p})$ is injective.

  Suppose $\rho_{\up{p}}^p(\alpha) = \rho_{\up{p}}^p(\alpha')$ for some $\alpha,\alpha'\in \dis_p(G)$.
  Because $\down{p}$ is initial in $\Nb_\P(p)$, $\rho_{\up{p}}^p(\alpha) = \rho_{\up{p}}^p(\alpha')$ implies that
  $\rho_S^p(\alpha) = G\b{\up{p}\subseteq D}\big(\rho_{\up{p}}^p(\alpha)\big) = G\b{\up{p}\subseteq S}\big(\rho_{\up{p}}^p(\alpha')\big) = \rho_S^p(\alpha')$
  for all $S\ni p$.
  Recalling that
  \[\dis_p(G) = \lim G^p \simeq \big\{ \{\alpha_S\}\in\prod_{S\ni p} G(S)\mid G\b{S'\subseteq S}(\alpha_{S'}) = \alpha_S\text{ for all } S'\subseteq S \big\},\]
  it follows that $\alpha = \alpha'$, so we may conclude that $\rho_{\up{p}}^p$ is injective and therefore a bijection as desired.
\end{proof}

\begin{remark}
  Let $G$ be a precosheaf on a poset $\P$ and recall the terminal precosheaf $\id_\P : U\mapsto \{\pt\}$.
  Then $\Op_\P(-) : \P\to \sE_\P(\id_\P)$ is a homeomorphism that takes each $p\in\P$ to the up-closed, down directed subset $\up{(\down{p})}\in \sE_\P(\id_\P)$.
  It follows that $\dis(G)\simeq \sE_\P(G)$.
\end{remark}

\begin{lemma}\label{lem:special_spatial}
  Every precosheaf on a poset $\P$ endowed with the specialization topology is spatial.
\end{lemma}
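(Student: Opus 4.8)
The plan is to exploit the defining feature of the specialization topology—that every point $p\in\P$ has a \emph{minimal} open neighborhood, namely the principal up set $\up{p}$—in order to reduce spatiality to the basic opens indexed by principal up sets and then check inhabitedness and connectedness of each. Since $\up{p}$ is the least element of $\Op_\P(p)$, hence initial there, Proposition~\ref{prop:rho_bi} identifies $\dis_p(F)\simeq F(\up{p})$ via $\rho_{\up{p}}^p$; concretely a costalk element $\alpha\in\dis_p(F)$ is the consistent family $\{F\b{\up{p}\subseteq V}(\gamma)\}_{V\ni p}$ determined by its value $\gamma=\rho_{\up{p}}^p(\alpha)\in F(\up{p})$. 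Because $\P$ carries the Alexandroff topology, the principal up sets form a basis of $X=\P$, and one checks readily that every $\cev{\beta}_U$ is the union of the sets $\cev{\gamma}_{\up{x}}$ over the points $(\gamma,x)$ it contains, so these principal basic opens form a basis of $\dis(F)$; I would verify the two spatiality conditions on this basis.

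For inhabitedness, fix $p\in\P$ and $\beta\in F(\up{p})$. Taking $x=p$ together with the costalk element $\alpha$ satisfying $\rho_{\up{p}}^p(\alpha)=\beta$ (which exists by the identification above, precisely because $\up{p}$ is the minimal neighborhood of $p$), we obtain $(\alpha,p)\in\cev{\beta}_{\up{p}}$. This distinguished point is exactly where minimality of $\up{p}$ is essential, and I expect no difficulty here.

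The main work is connectedness of $\cev{\beta}_{\up{p}}$, which I would establish by showing that the distinguished point $(\alpha,p)$ lies in the closure of \emph{every} point of $\cev{\beta}_{\up{p}}$. Represent an arbitrary point as $(\gamma,x)$ with $x\geq p$, $\gamma\in F(\up{x})$ and $F\b{\up{x}\subseteq\up{p}}(\gamma)=\beta$. Any basic open $\cev{\delta}_V$ containing $(\alpha,p)$ has $p\in V$ and $\delta=F\b{\up{p}\subseteq V}(\beta)$; since $V$ is an up set and $x\geq p$ we get $x\in V$, and functoriality along $\up{x}\subseteq\up{p}\subseteq V$ yields
\[ F\b{\up{x}\subseteq V}(\gamma)=F\b{\up{p}\subseteq V}\big(F\b{\up{x}\subseteq\up{p}}(\gamma)\big)=F\b{\up{p}\subseteq V}(\beta)=\delta,\]
so $(\gamma,x)\in\cev{\delta}_V$. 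Hence every open neighborhood of $(\alpha,p)$ contains $(\gamma,x)$, i.e. $(\alpha,p)\in\overline{\{(\gamma,x)\}}$. A space possessing a single point lying in the closure of each of its points is connected: writing $\cev{\beta}_{\up{p}}=A\sqcup B$ with $A,B$ open and disjoint and $(\alpha,p)\in A$, any $b\in B$ would force the open neighborhood $A$ of $(\alpha,p)$ to contain $b$, contradicting $A\cap B=\emptyset$; thus $B=\emptyset$.

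I expect the only subtle points to be the functoriality bookkeeping that gives $F\b{\up{x}\subseteq V}(\gamma)=\delta$ and the passage from a ``generic'' point (one in every closure) to connectedness; the reduction to principal up sets and the inhabitedness claim are immediate once Proposition~\ref{prop:rho_bi} and the minimality of $\up{p}$ are in hand.
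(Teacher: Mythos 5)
Your proof is correct, and its second half takes a genuinely different (and tighter) route than the paper's. The overlap is in the first half: like the paper, you obtain non-emptiness from Proposition~\ref{prop:rho_bi}, which supplies the distinguished point $(\alpha,p)\in\cev{\beta}_{\up{p}}$ with $\rho_{\up{p}}^p(\alpha)=\beta$. The divergence is in connectedness. The paper argues by contradiction, supposing $\cev{\beta}_{\up{p}}=\cev{\beta}'_{\up{p'}}\cup\cev{\beta}''_{\up{p''}}$ is a disjoint union of two \emph{basic} opens and deriving a contradiction by essentially the same functoriality identities you use; strictly speaking that only excludes disconnections whose two pieces are single basic opens, whereas a general disconnection is by arbitrary unions of basic opens. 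Your argument instead shows that any basic open $\cev{\delta}_V$ containing $(\alpha,p)$ has $\delta=F\b{\up{p}\subseteq V}(\beta)$ and hence contains every point $(\gamma,x)$ of $\cev{\beta}_{\up{p}}$; equivalently, $\cev{\beta}_{\up{p}}$ is the \emph{minimal} open neighborhood of its distinguished point in $\dis(F)$, which rules out arbitrary open disconnections in one stroke. So your proof is more elementary in structure (no contradiction, no case analysis on the shape of the pieces) and simultaneously closes the small gap in the paper's version. One caveat applies equally to both proofs: the paper's definition of spatial quantifies over all basic opens $\cev{\beta}_U$, while both arguments treat only $U=\up{p}$. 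Your reduction (``the principal sets $\cev{\gamma}_{\up{x}}$ form a basis, so it suffices to check them'') at least makes this step explicit, but it is legitimate only if spatiality is read as requiring a basis of non-empty connected open sets: for a mere precosheaf, $\cev{\beta}_U$ with $U$ a non-principal up-set can genuinely be disconnected (take $\P$ a two-point antichain and $F$ sending each singleton and the whole poset to a one-element set; then $\cev{\beta}_{\P}$ is a two-point discrete space), so the basis reading---which the paper's own proof tacitly adopts as well---is the one under which the lemma holds.
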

\begin{proof}
  Let $G$ be a precosheaf on $\P$ and let $p\in\P$ and $\beta\in G(\up{p})$.
  By Proposition~\ref{prop:rho_bi}, $\cev{\beta}_{\up{p}}\neq\emptyset$ for all $p\in\P$ and $\beta\in G(\up{p})$,
  so it suffices to show that $\cev{\beta}_{\up{p}}$ is connected.

  Suppose for the sake of contradiction that $\cev{\beta}_{\up{p}} = \cev{\beta}'_{\up{p'}}\cup \cev{\beta}''_{\up{p''}}$
  and $\cev{\beta}'_{\up{p'}}\cap \cev{\beta}''_{\up{p''}} = \emptyset$ for some $\beta'\in G(\up{p'})$ and $\beta''\in G(\up{p''})$.
  By Proposition~\ref{prop:rho_bi}, there exists a unique $\alpha\in\dis_p(G)$ 
  such that $\rho_{\up{p}}^p(\alpha) = \beta$, so $(\alpha,p)\in\cev{\beta}_{\up{p}}$.
  So either $(\alpha,p)\in\cev{\beta}'_{\up{p'}}$ or $(\alpha,p)\in \cev{\beta}''_{\up{p''}}$
  under the assumption that $\cev{\beta}_{\up{p}} = \cev{\beta}'_{\up{p'}}\cup \cev{\beta}''_{\up{p''}}$.

  Assume w.l.o.g. that that $(\alpha,p)\in \cev{\beta}'_{\up{p'}}$.
  Then $p'\leq p$ and $\rho_{\up{p'}}^p(\alpha) = \beta'$ implies that
  \begin{align*}
    G\b{\up{p}\subseteq \up{p'}}(\beta) &= G\b{\up{p}\subseteq \up{p'}}\circ \rho_{\up{p}}^{p}(\alpha)
      = \rho_{\up{p'}}^p(\alpha)
      = \beta'.
  \end{align*}
  If $\cev{\beta}''_{\up{p''}}\neq\emptyset$ then Proposition~\ref{prop:rho_bi} implies that there exists some
  $\alpha''\in \dis_{p''}(G)$ 
  such that $\rho_{\up{p''}}^{p''}(\alpha'') = \beta''$,
  so $(\alpha'',p'')\in\cev{\beta}''_{\up{p''}}$.
  Because $\cev{\beta}_{\up{p}} = \cev{\beta}'_{\up{p'}}\cup \cev{\beta}''_{\up{p''}}$, $\cev{\beta}''_{\up{p''}}$ is a subset of $\cev{\beta}_{\up{p}}$.
  So $(\alpha'',p'')\in \cev{\beta}_{\up{p}}$ implies that $p\leq p''$ and $\rho_{\up{p}}^{p''}(\alpha'') = \beta$, thus
  \[ \beta
    = \rho_{\up{p}}^{p''}(\alpha'')
    = G\b{\up{p''}\subseteq \up{p}}\circ \rho_{\up{p''}}^{p''}(\alpha'')
    = G\b{\up{p''}\subseteq \up{p}}(\beta'').\]

  Because we have already shown that $G\b{\up{p}\subseteq \up{p'}}(\beta) = \beta'$ we have
  \[ G\b{\up{p''}\subseteq \up{p'}}(\beta'')
    = G\b{\up{p}\subseteq \up{p'}}\circ G\b{\up{p''}\subseteq \up{p}}(\beta'')
    = G\b{\up{p}\subseteq \up{p'}}(\beta) = \beta'.\]
  So $\rho_{\up{p'}}^{p''}(\alpha'') = G\b{\up{p''}\subseteq \up{p'}}\circ \rho_{\up{p''}}^{p''}(\alpha'') = \beta'$
  implies that $(\alpha'',\beta'')\in \cev{\beta}'_{\up{p'}}$: a contradiction, as we have assumed that
  $\cev{\beta}'_{\up{p'}}\cap \cev{\beta}''_{\up{p''}} = \emptyset$.
  So $\cev{\beta}''_{\up{p''}} = \emptyset$, and we may therefore conclude that $\cev{\beta}_{\up{p}}$ is connected.
\end{proof}

The open sets of the pullback $X\times_\P\dis(G)$ of a continuous function $h : X\to \P$ along $\gamma_G$ are generated by pullbacks $U\times_\P\cev{\beta}_{S}$
of the restrictions $h|_U$ and $\gamma_G|_{\cev{\beta}_{S}}$ for $U\in\Op_X$ and $S\in\Op_\P$.
If $U'\times_\P \beta'_{h_!(U')}\subseteq U\times_\P \beta_{h_!(U)}$ then,
letting $\eta:\id_{\Op_X} \Rightarrow h^{-1}h_!$ denote the unit of $h_!\dashv h^{-1}$, we have that
$\eta(U')\subseteq \eta(U)$ and $h^\dagger(G)\b{U'\subseteq U}(\beta') = \beta$.

\begin{lemma}\label{lem:poset_pullback}
  Let $G$ be a precosheaf on $\P$ and let $\beta\in F(S)$ and $\beta'\in F(S')$ for open sets $S,S'\in\Op_\P$.
  If $\cev{\beta}'_{S'}\subseteq \cev{\beta}_S$ then $S'\subseteq S$ and $G\b{S'\subseteq S}(\beta') = \beta$.
\end{lemma}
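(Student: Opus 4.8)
The plan is to translate the set-containment hypothesis into the language of costalks and then read off both conclusions from the explicit description of the basic open sets of $\dis(G)$. First I would use Proposition~\ref{prop:rho_bi} to parametrize the display space. Since $\P$ carries the specialization topology, $\up{p}$ is the minimal open neighborhood of $p$, so the projection $\rho_{\up{p}}^p$ identifies the costalk $\dis_p(G)$ with $G(\up{p})$; a point of $\dis(G)$ lying over $p$ is therefore a pair $(p,\gamma)$ with $\gamma\in G(\up{p})$. Under this identification the consistency of costalk elements gives $\rho_S^p(\alpha)=G\b{\up{p}\subseteq S}(\gamma)$ for any open $S\ni p$, so that
\[ (p,\gamma)\in\cev{\beta}_S \quad\Longleftrightarrow\quad p\in S \ \text{ and } \ G\b{\up{p}\subseteq S}(\gamma)=\beta, \]
and I would record the analogous characterization for $\cev{\beta}'_{S'}$.

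Next I would extract the pointwise consequences of the containment. By Lemma~\ref{lem:special_spatial} the precosheaf $G$ is spatial, so $\cev{\beta}'_{S'}$ is nonempty; choosing any witness $(p',\gamma')\in\cev{\beta}'_{S'}$ gives $p'\in S'$ and $G\b{\up{p'}\subseteq S'}(\gamma')=\beta'$. The hypothesis $\cev{\beta}'_{S'}\subseteq\cev{\beta}_S$ places this point in $\cev{\beta}_S$, which by the characterization above yields $p'\in S$ together with $G\b{\up{p'}\subseteq S}(\gamma')=\beta$. Because $S$ is open in the specialization topology it is an up-set, so $p'\in S$ forces $\up{p'}\subseteq S$.

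To obtain the full inclusion $S'\subseteq S$ I would argue that every $p'\in S'$ is the image under $\gamma_G$ of a point of $\cev{\beta}'_{S'}$; combined with the up-set argument above this gives $\up{p'}\subseteq S$ for all such $p'$, whence $S'=\bigcup_{p'\in S'}\up{p'}\subseteq S$. This exhaustion is the heart of the proof: when $S'=\up{p'}$ is a principal up-set it is immediate, since Proposition~\ref{prop:rho_bi} supplies the canonical ``bottom'' point $(p',\beta')$ of $\cev{\beta}'_{S'}$; in general one uses that $G$, being spatial, is a cosheaf, so that $\beta'$ is determined on the cover of $S'$ by the principal up-sets $\{\up{q}\}_{q\in S'}$ and each such $q$ carries a compatible witness. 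I expect this step to be the main obstacle, as it is precisely where the local structure of the specialization topology (minimal neighborhoods $\up{p'}$) and the cosheaf axiom must be brought to bear.

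Finally, granting $S'\subseteq S$, the remaining identity $G\b{S'\subseteq S}(\beta')=\beta$ follows by functoriality from any single witness: since $\up{p'}\subseteq S'\subseteq S$ and $G\b{\up{p'}\subseteq S'}(\gamma')=\beta'$, condition FUN2 gives
\[ G\b{S'\subseteq S}(\beta') = G\b{S'\subseteq S}\,G\b{\up{p'}\subseteq S'}(\gamma') = G\b{\up{p'}\subseteq S}(\gamma') = \beta, \]
which closes the argument.
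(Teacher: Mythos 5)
Your proposal follows the same skeleton as the paper's proof: identify points of $\dis(G)$ over $p$ with elements of $G(\up{p})$ via Proposition~\ref{prop:rho_bi}, feed a point of $\cev{\beta'}_{S'}$ into the hypothesis to land in $\cev{\beta}_S$, and extract $G\b{S'\subseteq S}(\beta')=\beta$ from costalk consistency and functoriality. Your steps (1), (2) and (4) are exactly the paper's (very terse) argument, and your final functoriality computation is correct once $S'\subseteq S$ is available. The divergence is your step (3), the exhaustion claim that every $p'\in S'$ lies under some point of $\cev{\beta'}_{S'}$. You are right that this is the crux; notably, the paper never argues it: its proof quantifies over pairs $(\alpha,p)$ with $\alpha$ in the fiber $\cev{\beta'}_{S'}^{\,p}$, which is vacuous whenever that fiber is empty, and silently concludes $p\in S$ for \emph{all} $p\in S'$ anyway. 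So you have located a genuine hole that the paper shares.

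Unfortunately your proposed patch does not close it, because the exhaustion claim is false. By Proposition~\ref{prop:rho_bi} the fiber of $\cev{\beta'}_{S'}$ over $q\in S'$ is the preimage of $\beta'$ under $G\b{\up{q}\subseteq S'}$, and the cosheaf axiom only presents $G(S')$ as a colimit --- a quotient of $\bigsqcup_{q\in S'}G(\up{q})$ --- so it yields a preimage of $\beta'$ over \emph{some} $q\in S'$, not over every one. Concretely, let $\P=\{b,c\}$ be the two-element discrete poset (its specialization topology is discrete) and let $G=\pi_{\P}$ be its connected components cosheaf, which is spatial by Lemma~\ref{lem:special_spatial}. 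Take $S'=\{b,c\}$ with $\beta'=\{b\}\in\pi_{\P}(S')$, and $S=\{b\}$ with $\beta=\{b\}\in\pi_{\P}(S)$. The display space has exactly one point over $b$ and one over $c$; the point over $c$ projects to the component $\{c\}\neq\beta'$, so $\cev{\beta'}_{S'}=\cev{\beta}_{S}$ is the singleton over $b$. The hypothesis of the lemma holds, yet $S'\not\subseteq S$, so no argument --- yours or the paper's --- can recover the first conclusion from the stated hypotheses. The lemma needs an extra assumption (e.g.\ that every fiber of $\cev{\beta'}_{S'}$ is nonempty, or that the transition maps of $G$ are surjective), or else $S'\subseteq S$ should be taken as a hypothesis and only the identity $G\b{S'\subseteq S}(\beta')=\beta$ concluded; the latter is all that is used in Theorem~\ref{thm:poset_pullback}, where the inclusion $h_!(V)\subseteq h_!(U)$ is already known from $V\subseteq U$ and monotonicity of $h_!$, and your single-witness argument then finishes the job.
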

\begin{proof}
  For all $p\in S'$ and $\alpha\in\cev{\beta'}_{S'}^p\subseteq \dis_p(G)$ such that $\rho_S'^p(\alpha) = \beta'$
  we have that $\rho_S^p(\alpha) = \beta$, so $(\alpha,p)\in\cev{\beta}_S$ implies $p\in S$, so
  \[ \rho_{S}^p(\alpha) = G\b{S'\subseteq S}\circ \rho_{S'}^p(\alpha)\implies G\b{S'\subseteq S}(\beta') = \beta.\]
\end{proof}

\subsection{The Spatial Inverse Image Functor}\label{sec:inverse}

Let $h : X\to N$ be a continuous function from $X$ to a locally connected topological space.
Although the left Kan extension of a precosheaf along $h^{-1}$ is left adjoint to the direct image,
the image of a cosheaf on $X$ under this right adjoint is not a cosheaf in general.
If $h$ is \emph{essential}, precomposition with the left adjoint $h_!\dashv h^{-1}$ is a right adjoint $h^\dagger$ to $h_*$.
The goal of this section is to show that $h^*$ is equivalent to $h^\dagger$ when $h$ is a surjective essential map from a locally connected space to a poset endowed with the \emph{co}specialization topology.

The following result by Funk~\cite{funk95display} addresses the pullback stability of the display space functor.

\begin{theorem}[Funk~\cite{funk95display} Theorem 1.4]\label{thm:funk_pullback}
  Let $g : X\twoheadrightarrow Z$ be a surjective continuous function and let $G$ be a precosheaf on $Z$.
  If $g$ is essential then there is a canonical map $\pi_G^g : \dis(g^\dagger(G))\to \dis(G)$ which makes the following diagram a pullback in \Top:
  \begin{equation}
    \begin{tikzcd}[sep=huge]
      \dis(g^\dagger(G))\arrow[r, dotted,"\pi_G^g"]\arrow[d, dotted, "\gamma_{g^\dagger(G)}"]
        &\dis(G) \arrow[d, "\gamma_G"]\\
      X\arrow[r,"g"]
      & Z.
    \end{tikzcd}
  \end{equation}
\end{theorem}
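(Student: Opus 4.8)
The plan is to identify $\dis(g^\dagger(G))$ with the set-theoretic pullback $X\times_Z\dis(G)$ (carrying the pullback topology) via a canonical comparison map, and to show this map is a homeomorphism; since a commuting square in $\Top$ is cartesian exactly when the induced map to the pullback is a homeomorphism, that suffices. First I would unwind the definitions. Because $g^\dagger$ is precomposition with the left adjoint $g_!\dashv g^{-1}$, we have $g^\dagger(G)(U)=G(g_!(U))$ for every $U\in\Op_X$, so the costalk of $g^\dagger(G)$ at $x$ is $\dis_x(g^\dagger(G))=\lim_{U\ni x}G(g_!(U))$, while $\dis_{g(x)}(G)=\lim_{V\ni g(x)}G(V)$. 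The fibre of $X\times_Z\dis(G)$ over $x$ is precisely $\dis_{g(x)}(G)$, so the whole statement reduces to a fibrewise comparison of these two limits plus a check that the two display-space topologies agree.

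The technical heart is the following costalk isomorphism. The unit of $g_!\dashv g^{-1}$ gives $U\subseteq g^{-1}(g_!(U))$, so $x\in U$ forces $g(x)\in g_!(U)$; hence $g_!$ restricts to a functor $\Op_X(x)\to\Op_Z(g(x))$. I claim this functor is initial (cofinal for limits), so precomposition preserves the costalk limit. Indeed, for fixed $V\in\Op_Z(g(x))$ the comma fibre $\{U\in\Op_X(x)\mid g_!(U)\subseteq V\}$ is nonempty with a greatest element: continuity of $g$ makes $g^{-1}(V)$ an open neighbourhood of $x$, the counit gives $g_!(g^{-1}(V))\subseteq V$, and for any $U$ in the fibre the unit yields $U\subseteq g^{-1}(g_!(U))\subseteq g^{-1}(V)$; a poset with a top element is connected. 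Consequently $\lim_{U\ni x}G(g_!(U))\cong\lim_{V\ni g(x)}G(V)$ canonically, the isomorphism identifying $(\alpha_U)_{U\ni x}$ with $(\delta_V)_{V\ni g(x)}$ via $\alpha_U=\delta_{g_!(U)}$, the reconstruction of a general component being $\delta_V=G\b{g_!(g^{-1}(V))\subseteq V}(\alpha_{g^{-1}(V)})$. I would use this to define the canonical map $\pi_G^g(\alpha,x)\defined(\delta,g(x))$ and the comparison map $\Theta(\alpha,x)\defined\big(x,(\delta,g(x))\big)$ into $X\times_Z\dis(G)$.

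With $\Theta$ in hand, commutativity of the square is immediate on points, since both $\gamma_G\circ\pi_G^g$ and $g\circ\gamma_{g^\dagger(G)}$ send $(\alpha,x)$ to $g(x)$, and $\Theta$ is a bijection because, fibrewise over $X$, it is the costalk isomorphism above (which is onto each fibre $\dis_{g(x)}(G)$). It then remains to match topologies, which I expect to be the main obstacle, since a continuous bijection need not be a homeomorphism. The key computation uses $\alpha_U=\delta_{g_!(U)}$: for a basic open $\cev{\beta}_U$ of $\dis(g^\dagger(G))$ with $\beta\in g^\dagger(G)(U)=G(g_!(U))$ one gets
\[ \Theta\big(\cev{\beta}_U\big)=\gamma_{g^\dagger(G)}^{-1}(U)\cap(\pi_G^g)^{-1}\big(\cev{\beta}_{g_!(U)}\big)=U\times_Z\cev{\beta}_{g_!(U)}, \]
an open set of the pullback.

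This single identity settles both directions. For continuity, the pullback topology is generated by $\mathrm{pr}_X^{-1}(U)$ and $\mathrm{pr}_{\dis(G)}^{-1}(\cev{\beta}_V)$; the first pulls back under $\Theta$ to $\gamma_{g^\dagger(G)}^{-1}(U)$, and a general $V$ in the second reduces to $g^{-1}(V)$ by the initiality relation, exhibiting $(\pi_G^g)^{-1}(\cev{\beta}_V)$ as the union of the basic opens $\cev{\beta''}_{g^{-1}(V)}$ over $\beta''\in G(g_!(g^{-1}(V)))$ with $G\b{g_!(g^{-1}(V))\subseteq V}(\beta'')=\beta$, hence open. For openness, the displayed identity shows each basic open $\cev{\beta}_U$ has open image, so $\Theta$ is open. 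A continuous open bijection is a homeomorphism, so the square is a pullback. Essentiality of $g$ is exactly what powers the initiality step and thus the entire argument; surjectivity, though part of Funk's standing hypotheses, is not needed for this verification in $\Top$, since $\Theta$ already surjects onto the pullback fibrewise.
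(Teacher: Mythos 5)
The paper does not actually prove this statement: it is imported verbatim as Funk's Theorem~1.4, whose original proof lives in the localic setting (frames and display locales) and is invoked here as a black box. Your proposal is, as far as I can check against the paper's definitions, a correct and self-contained point-set proof of the special case the paper needs. The key step --- that the essentiality of $g$ makes the restriction $g_!\colon \Op_X(x)\to \Op_Z(g(x))$ an initial functor (each comma fibre $\{U\ni x \mid g_!(U)\subseteq V\}$ has top element $g^{-1}(V)$ by the unit/counit of $g_!\dashv g^{-1}$), hence induces the costalk isomorphism $\dis_x(g^\dagger(G))\cong \dis_{g(x)}(G)$ --- is sound, and the identity $\Theta\big(\cev{\beta}_U\big)= U\times_Z \cev{\beta}_{g_!(U)}$ does simultaneously deliver openness and (together with the reduction of a general subbasic $\cev{\beta}_V$ to $g^{-1}(V)$) continuity, so the comparison map is a homeomorphism and the square is cartesian. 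Two minor remarks. First, your middle expression $\gamma_{g^\dagger(G)}^{-1}(U)\cap(\pi_G^g)^{-1}\big(\cev{\beta}_{g_!(U)}\big)$ is a subset of $\dis(g^\dagger(G))$, not of the pullback, so strictly the identity is that this set equals $\cev{\beta}_U$ and its image under the bijection $\Theta$ equals $U\times_Z\cev{\beta}_{g_!(U)}$; this is a notational slip, not a gap. Second, your observation that surjectivity is not needed for the pullback statement in \Top is consistent with your argument (the comparison is fibrewise over $X$, and no step invokes surjectivity); surjectivity is a standing hypothesis in Funk's treatment and is used elsewhere in this paper (e.g.\ Corollary~\ref{cor:adjoint_equiv} and Theorem~\ref{thm:poset_pullback}), so flagging it as unused here is a genuinely sharper reading. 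What the two routes buy: Funk's argument covers arbitrary locales and integrates with the topos-theoretic machinery the rest of his paper depends on, while yours is elementary, verifiable entirely inside this paper's definitions of costalks and basic opens $\cev{\beta}_U$, and makes explicit exactly where essentiality enters.
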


Corollary~\ref{cor:adjoint_equiv} follows directly from the fact that $g^\star\gamma(G)$ is a pullback (Diagram~\ref{dgm:adjoint_equiv}).

\begin{equation}\label{dgm:adjoint_equiv}
  \begin{tikzcd}[sep=huge]
    X\times_Z \dis(G)
      \arrow[r, dotted]
      \arrow[d, dotted,"\gamma^g_{G}"]
      &\dis(G) \arrow[d,"\gamma_G"]\\
    X\arrow[r,"g"]
    & Z.
  \end{tikzcd}
\end{equation}

\begin{corollary}\label{cor:adjoint_equiv}
  If $g : X\twoheadrightarrow Z$ is surjective and essential then $g^\star\gamma\simeq \gamma g^\dagger$.
\end{corollary}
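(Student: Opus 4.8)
The plan is to recognize both $g^\star\gamma(G)$ and $\gamma g^\dagger(G)$ as pullbacks of one and the same cospan and then invoke the uniqueness of pullbacks; the entire analytic content is already packaged in Funk's Theorem~\ref{thm:funk_pullback}, so what remains is formal.

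First I would unwind the two sides. By the definition of the change of base functor, and as recorded in Diagram~\ref{dgm:adjoint_equiv}, the object $g^\star\gamma(G)$ is exactly $(X\times_Z\dis(G),\gamma_G^g)$, obtained as the pullback of the display projection $\gamma_G:\dis(G)\to Z$ along $g:X\to Z$; its structure map to $X$ is the projection $\gamma_G^g$, and it comes equipped with a second projection to $\dis(G)$. On the other side, $\gamma g^\dagger(G)=(\dis(g^\dagger(G)),\gamma_{g^\dagger(G)})$ is the display space of the precosheaf $g^\dagger(G)$ on $X$, equipped with its canonical projection $\gamma_{g^\dagger(G)}$ to $X$.

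Next I would apply Funk's result. Since $g$ is surjective and essential, Theorem~\ref{thm:funk_pullback} furnishes a canonical map $\pi_G^g:\dis(g^\dagger(G))\to\dis(G)$ such that the square with left leg $\gamma_{g^\dagger(G)}$ (to $X$) and top leg $\pi_G^g$ (to $\dis(G)$) is a pullback of $\gamma_G$ along $g$ in $\Top$. Thus $\dis(g^\dagger(G))$ and $X\times_Z\dis(G)$ are both pullbacks of the single cospan $X\xrightarrow{\,g\,}Z\xleftarrow{\gamma_G}\dis(G)$. The universal property of pullbacks then yields a unique isomorphism $\theta_G:\dis(g^\dagger(G))\longisorightarrow X\times_Z\dis(G)$ commuting with both projections; in particular it commutes with the two legs to $X$, so $\theta_G$ is an isomorphism in $\Top/X$ between $\gamma g^\dagger(G)$ and $g^\star\gamma(G)$.

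Finally I would promote this pointwise isomorphism to a natural isomorphism of functors $\PCSh(Z)\to\Top/X$. For a morphism $\phi:G\to G'$ in $\PCSh(Z)$, the display-space maps and Funk's canonical maps $\pi_G^g$ are natural in $G$, so the two induced comparison cones into the cospan for $G'$ coincide; the uniqueness clause of the pullback universal property then forces the naturality square for $\theta$ to commute, giving $g^\star\gamma\simeq\gamma g^\dagger$ as claimed. I expect the only point requiring genuine care to be this last step: everything reduces to checking that the comparison isomorphism supplied by the universal property is compatible with morphisms of precosheaves, which is exactly the standard functoriality-of-pullbacks argument once the naturality of $\pi^g_{(-)}$ from Theorem~\ref{thm:funk_pullback} is in hand.
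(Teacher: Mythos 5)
Your proposal is correct and is exactly the paper's argument: the paper likewise observes that $g^\star\gamma(G)$ is by definition the pullback of $\gamma_G$ along $g$ (Diagram~\ref{dgm:adjoint_equiv}), that Theorem~\ref{thm:funk_pullback} exhibits $\dis(g^\dagger(G))$ as a pullback of the same cospan, and concludes by uniqueness of pullbacks. If anything, you are more careful than the paper, which omits the naturality-in-$G$ check you spell out in your final paragraph.
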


Note that the following result is the first time we have required that the space $X$ be locally connected.

\begin{theorem}\label{thm:poset_pullback}
  Let $h : X\twoheadrightarrow \P$ be a surjective essential function from a \emph{locally connected}
  space $X$ to a poset $\P$ endowed with the specialization topology.
  Then $h^\dagger(G)$ is a spatial cosheaf on $X$ for any precosheaf $G$ on $\P$.
\end{theorem}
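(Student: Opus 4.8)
The plan is to prove that $h^\dagger(G)=G\circ h_!$ is \emph{spatial}; since every spatial precosheaf is a cosheaf, this immediately yields the statement. Concretely, I must verify that each basic open set $\cev{\beta}_U\subseteq\dis(h^\dagger(G))$, for $U\in\Op_X$ and $\beta\in h^\dagger(G)(U)=G(h_!(U))$, is non-empty and connected. Because $X$ is locally connected it admits a basis of connected open sets, so it suffices to treat those $U$ that are connected (this is where the hypothesis on $X$ is finally used). The first ingredient is Lemma~\ref{lem:special_spatial}: since $\P$ carries the specialization topology, $G$ is automatically spatial, so every basic open $\cev{\beta}_S\subseteq\dis(G)$ is non-empty and connected and $(\dis(G),\gamma_G)$ is locally connected over $\P$.

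Next I would transport the problem to a pullback. Since $h$ is surjective and essential, Funk's Theorem~\ref{thm:funk_pullback} (equivalently Corollary~\ref{cor:adjoint_equiv}) identifies the display space as the fibre product
\[ \dis(h^\dagger(G))\;\cong\;X\times_\P\dis(G) \]
over $X$. Recalling that essentialness gives $h_!(U)=\up{h(U)}$, and using the description of the generating opens of $X\times_\P\dis(G)$ recorded just before Lemma~\ref{lem:poset_pullback}, the basic open $\cev{\beta}_U$ corresponds under this identification to the pullback $U\times_\P\cev{\beta}_{h_!(U)}$, where $\cev{\beta}_{h_!(U)}\subseteq\dis(G)$ is the (non-empty, connected) basic open attached to the open set $h_!(U)\in\Op_\P$ and the element $\beta\in G(h_!(U))$.

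Non-emptiness is then cheap. The principal up-sets $\{\up{h(x)}\}_{x\in U}$ cover $h_!(U)=\up{h(U)}$, so by the cosheaf axiom for $G$ the element $\beta\in G(h_!(U))$ is the image of some $\gamma\in G(\up{h(x_0)})$ for a single $x_0\in U$. Under the fibrewise identification $\dis_{x_0}(h^\dagger(G))\cong\dis_{h(x_0)}(G)\cong G(\up{h(x_0)})$ furnished by the pullback and Proposition~\ref{prop:rho_bi}, this $\gamma$ produces a point $(\gamma,x_0)\in\cev{\beta}_U$, so $\rho_U^{x_0}(\gamma)=\beta$ as needed.

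The main obstacle is connectedness of $\cev{\beta}_U\cong U\times_\P\cev{\beta}_{h_!(U)}$ for connected $U$. The available inputs are that $U$ is connected, that $\cev{\beta}_{h_!(U)}$ is connected (spatiality of $G$), and that the restricted display map $\gamma_G\colon\cev{\beta}_{h_!(U)}\to h_!(U)$ together with $h|_U\colon U\to h_!(U)$ assembles the fibre product. I would argue that the projection $\cev{\beta}_U\to U$ is surjective and that, because $\gamma_G$ restricted to the connected sheet $\cev{\beta}_{h_!(U)}$ is (by the proof of Lemma~\ref{lem:special_spatial}) connected over the up-set it covers, no clopen splitting of $U\times_\P\cev{\beta}_{h_!(U)}$ can exist: Lemma~\ref{lem:poset_pullback} supplies the bookkeeping needed to show that any such splitting would descend either to a splitting of the connected space $\cev{\beta}_{h_!(U)}$ or to a disconnection of the connected base $U$, both impossible. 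Making this descent precise — controlling how the single connected sheet $\cev{\beta}_{h_!(U)}$ sits over the continuous image $h(U)\subseteq h_!(U)$, which may fail to be injective — is the delicate point, and it is precisely the step that breaks down (and must break down) when $U$ is allowed to be disconnected, so the reduction to connected $U$ via local connectedness of $X$ is indispensable.
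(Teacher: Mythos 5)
Your scaffolding matches the paper's proof: reduce to spatiality of $h^\dagger(G)$, identify $\dis(h^\dagger(G))$ with the pullback $X\times_\P\dis(G)$ via Corollary~\ref{cor:adjoint_equiv}, get non-emptiness of basic opens from Proposition~\ref{prop:rho_bi}, and use local connectedness of $X$ to reduce to connected $U$. Your non-emptiness argument is in fact a needed elaboration of the paper's one-line claim: since the map $G(\up{h(x)})\to G(h_!(U))$ need not hit $\beta$ for an arbitrary $x\in U$, one does need the cosheaf axiom (via Lemma~\ref{lem:special_spatial}) to find the $x_0$ you exhibit, so that part is fine.

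The genuine gap is the connectedness step, which is the heart of the theorem, and you concede it yourself (``making this descent precise \dots\ is the delicate point''). Moreover, the strategy you sketch cannot be completed as stated: connectedness of $U$ and of $\cev{\beta}_{h_!(U)}$ alone never forces the fibre product $U\times_\P\cev{\beta}_{h_!(U)}$ to be connected --- pullbacks of connected spaces over a common base are disconnected in general --- so no purely topological ``descent of clopen splittings'' argument can succeed; the algebraic structure of $G$ must enter. What the paper actually does is the following. Suppose the basic open $\beta^U_{h_!(U)}$ is covered by two basic opens, $\beta^U_{h_!(U)} = \delta^V_{h_!(V)}\cup\zeta^W_{h_!(W)}$. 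Then $U=V\cup W$, so connectedness of $U$ yields a point $x\in V\cap W$; and Lemma~\ref{lem:poset_pullback} gives $G\b{h_!(V)\subseteq h_!(U)}(\delta)=\beta=G\b{h_!(W)\subseteq h_!(U)}(\zeta)$. Now the key step: since $G$ is a spatial cosheaf by Lemma~\ref{lem:special_spatial} and $h_!(U)=h_!(V)\cup h_!(W)$, the set $G(h_!(U))$ is the pushout of $G\b{h_!(V)\cap h_!(W)\subseteq h_!(V)}$ and $G\b{h_!(V)\cap h_!(W)\subseteq h_!(W)}$, so $\delta$ and $\zeta$ having the common image $\beta$ forces a common ancestor $\alpha\in G\big(h_!(V)\cap h_!(W)\big)$ with $G\b{h_!(V)\cap h_!(W)\subseteq h_!(V)}(\alpha)=\delta$ and $G\b{h_!(V)\cap h_!(W)\subseteq h_!(W)}(\alpha)=\zeta$. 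The point $\big(x,(\alpha,h(x))\big)$ then lies in $\delta^V_{h_!(V)}\cap\zeta^W_{h_!(W)}$, so the two pieces cannot be disjoint. This manufacture of a point in the overlap via the cosheaf pushout is precisely the idea missing from your proposal; without it, the reduction to connected $U$ and the spatiality of $G$ never combine into a proof.
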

\begin{proof}
  Because $h$ is surjective and essential Corollary~\ref{cor:adjoint_equiv} implies $\gamma h^\dagger \simeq h^\star\gamma = \big(X\times_\P\dis(G),\gamma_G^h\big)$
  (Diagram~\eqref{dgm:poset_pullback}), so it suffices to show that the pullback $X\times_\P\dis(G)$ of $h$ along $\gamma_G$ is locally connected.
  \begin{equation}\label{dgm:poset_pullback}
    \begin{tikzcd}[sep=huge]
      X\times_\P\dis(G)
        \arrow[r,dotted, "h^{\gamma_G}"]
        \arrow[d,dotted, "\gamma_G^h"]
      & \dis(G)
        \arrow[d,"\gamma_G"]
      \\
      X
        \arrow[r,"h"]
      & \P.
    \end{tikzcd}
  \end{equation}

  The pullback $X\times_\P \dis(G)$ has a basis of open sets $\beta_{h_!(U)}^U\defined U\times_\P \beta_{\eta_!(U)}$ for $U\in\Op_X$ $p\in\P$ and $\beta\in G(\up{p})$.
  By Proposition~\ref{prop:rho_bi}, $\beta_{h_!(U)}^U\neq \emptyset$ for all nonempty $U\in\Op_X$.
  Moreover, because $X$ is locally connected, it has a basis of connected open sets, so it suffices to show that $\beta_{h_!(U)}^U$ is connected for all connected $U\in\Op_X$.

  Let $U\in \Op_X$ be a connected basic open set of $X$ and suppose $\beta_{h_!(U)}^U = \delta_{h_!(V)}^V\cup \zeta_{h_!(W)}^W$
  for some $V,W\in\Op_X$, $\beta\in Gh_!(U)$, $\delta\in Gh_!(V)$, and $\zeta\in Gh_!(W)$.
  Then $V = V\cup W$ so, because $U$ is connected, $V\cap W\neq \emptyset$, so there exists a point $x\in V\cap W$.
  Moreover, by Lemma~\ref{lem:poset_pullback},
  $\beta_{h_!(U)}^U = \delta_{h_!(V)}^V\cup \zeta_{h_!(W)}^W$ implies that
  $G\b{h_!(V)\subseteq h_!(U)}(\delta) = \beta$ and
  $G\b{h_!(W)\subseteq h_!(U)}(\zeta) = \beta$.

  By Lemma~\ref{lem:special_spatial}, $G$ is a spatial cosheaf, so $h_!(U) = h_!(V)\cup h_!(W)$ implies $Gh_!(U)$ is the pushout of
  $G\b{h_!(V)\cap h_!(W)\subseteq h_!(V)}$ and $G\b{h_!(V)\cap h_!(W)\subseteq h_!(W)}$.
  It follows that there exists some $\alpha\in G\big(h_!(V)\cap h_!(W)\big)$ such that
  \begin{align*}
    G\b{h_!(V)\cap h_!(W)\subseteq h_!(V)}(\alpha) &= \delta\\
    \text{and } G\b{h_!(V)\cap h_!(W)\subseteq h_!(W)}(\alpha) &= \zeta.
  \end{align*}
  That is, there exists a point $\big(x,\big(\alpha,h(x)\big)\big)\in \delta_{h_!(V)}^V\cap \zeta_{h_!(W)}^W$,
  so we may conclude that $\beta_{h_!(U)}^U$ is connected as desired.
\end{proof}

\begin{corollary}\label{cor:poset_pullback}
  If $X$ is locally connected and $h : X\twoheadrightarrow \P$ is surjective and essential then $h^*\simeq h^\dagger$.
\end{corollary}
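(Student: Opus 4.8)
The plan is to unwind the definition $h^* = \lambda h^\star \gamma$ and observe that each of its three constituents has already been analyzed, so that the corollary reduces to assembling the preceding results. Here $X$ plays the role of the locally connected space over the base $\P$, so $h^* \colon \PCSh(\P) \to \CShsp(X)$ while $h^\dagger \colon \PCSh(\P) \to \PCSh(X)$ sends $G \mapsto G h_!$. First I would check that $h^*$ is even well-defined on all of $\PCSh(\P)$: by Lemma~\ref{lem:special_spatial} every precosheaf $G$ on the poset $\P$ (with the specialization topology) is spatial, so its display space $\dis(G)$ is locally connected and $\gamma(G) = (\dis(G),\gamma_G)$ is a genuine object of $\Toplc/\P$ on which the change of base functor $h^\star$ can act.

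Next I would invoke Corollary~\ref{cor:adjoint_equiv}. Since $h \colon X \twoheadrightarrow \P$ is surjective and essential by hypothesis, that corollary supplies a natural isomorphism $h^\star \gamma \simeq \gamma h^\dagger$; concretely, the pullback $X \times_\P \dis(G)$ computing $h^\star \gamma(G)$ is identified with the display space $\dis(h^\dagger(G))$ through Theorem~\ref{thm:funk_pullback}. Postcomposing with the Reeb functor $\lambda$ then yields
\[ h^* = \lambda h^\star \gamma \simeq \lambda \gamma h^\dagger. \]

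It remains to collapse $\lambda \gamma h^\dagger$ to $h^\dagger$, which is where the local connectivity of $X$ enters. By Theorem~\ref{thm:poset_pullback}, because $X$ is locally connected and $h$ is surjective and essential, $h^\dagger(G)$ is a spatial cosheaf on $X$ for every precosheaf $G$ on $\P$. Proposition~\ref{prop:spatial_iso} then says the counit of $\lambda \dashv \gamma$ is a natural isomorphism at each spatial cosheaf, so $\lambda \gamma h^\dagger \simeq h^\dagger$. Chaining this with the isomorphism of the previous paragraph gives $h^* \simeq h^\dagger$, as desired.

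Since the substantive content — the spatiality of $h^\dagger(G)$ and the pullback-stability of display spaces — is already discharged in Theorem~\ref{thm:poset_pullback} and Corollary~\ref{cor:adjoint_equiv}, the only step demanding genuine care here is the well-definedness of $h^*$ on an arbitrary precosheaf, which is exactly what Lemma~\ref{lem:special_spatial} guarantees. Once that is in hand the corollary is a formal composition of natural isomorphisms, and I do not anticipate a real obstacle; if anything, the one thing to keep straight is the direction of the functors, so I would state the domains and codomains explicitly ($h^* , h^\dagger \colon \PCSh(\P) \to \CShsp(X)$) to avoid a spurious variance error.
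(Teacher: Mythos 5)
Your proof is correct and takes essentially the same route as the paper's: Corollary~\ref{cor:adjoint_equiv} gives $h^* = \lambda h^\star\gamma \simeq \lambda\gamma h^\dagger$, Theorem~\ref{thm:poset_pullback} gives spatiality of $h^\dagger(G)$, and the counit of $\lambda\dashv\gamma$ collapses $\lambda\gamma h^\dagger$ to $h^\dagger$. Your extra steps (the well-definedness check via Lemma~\ref{lem:special_spatial} and the explicit appeal to Proposition~\ref{prop:spatial_iso}) only make explicit what the paper's proof leaves implicit.
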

\begin{proof}
  Because $h$ is surjective and essential, $h^\star \gamma \simeq \gamma h^\dagger$ by Corollary~\ref{cor:adjoint_equiv},
  so $h^* = \lambda h^\star\gamma \simeq \lambda\gamma h^\dagger$.
  Because $\P$ is a poset endowed with the specialization topology,
  Theorem~\ref{thm:poset_pullback} implies that $h^\dagger(G)$ is spatial for any precosheaf $G$ on $\P$,
  so the counit yields a natural isomorphism
  \[ h^*(G) \simeq\lambda\gamma h^\dagger(G)\Longrightarrow h^\dagger(G).\]
\end{proof}

\subsection{Mapper Reeb Graphs}
Let $\U : I\to \Op_X$ be a locally finite good open cover of a topological space $X$.

\begin{definition}[Mapper Graph]
  For any precosheaf $F$ on $X$ let
  \[ \dis^\U(F)\defined X\times_{\N_\U} \dis(\eta_*(F)) = \big\{\big(x,(\alpha, \sigma)\big)\in X\times \dis(\eta_*(F))\mid \eta(x) = \sigma\big\}\]
  denote the pullback of $\gamma_{\eta_*(F)}$ along $\eta$ and let $\gamma_F^\U : \dis^\U(F) \to X$ denote
  the projection $(\alpha,x)\mapsto x$ of $\dis^\U(F)$ onto $X$:
  \begin{equation}
    \begin{tikzcd}[sep=huge]
      \dis^\U(F)
        \arrow[r, dotted]
        \arrow[d, dotted, "\gamma_F^\U"]
      & \dis(\eta_*(F))\arrow[d, "\gamma_{\eta_*(F)}"]
      \\
      X\arrow[r,"\eta"]
      & \N_\U
    \end{tikzcd}
  \end{equation}
  The \defn{$\U$-mapper graph functor} takes cosheaves on $X$ to the pullback of $\gamma\eta_*$ along $\eta$
  \begin{align*}
    \gamma^\U\defined \eta^\star\gamma\eta_* : \PCSh(X) &\longrightarrow \Toplc/X\\
      F &\longmapsto \big( \dis^\U(F),\gamma_F^\U)
  \end{align*}
\end{definition}

\begin{remark}
  The pixelization $F^\U$ of a precosheaf is the Reeb cosheaf of the mapper graph
  \[ R_{\gamma^\U(F)} = \lambda\gamma^\U(F) = \eta^*\eta_*(F).\]
\end{remark}

Let $\sU : \op{\P} \to \pOp_X$ be a strict refinement of locally finite good open covers of $X$.
For any precosheaf $F$ on $X$ let
\[ \dis^\sU(F)
  \defined X\times_{\lim\N_\sU} \dis(\hat{\eta}_*(F))
  = \big\{\big(x,(\alpha, \sigma)\big)\in X\times \dis(\hat{\eta}_*(F))\mid \hat{\eta}(x) = \sigma\big\}\]
denote the pullback of $\gamma_{\hat{\eta}_*(F)}$ along $\hat{\eta}$ and let $\gamma_F^\sU\defined \gamma_{\hat{\eta}_*(F)}^\eta$
denote the continuous projection $(\alpha,x)\mapsto x$ of $\dis^\sU(F)$ onto $X$
\begin{equation}\label{dgm:pullback}
  \begin{tikzcd}[sep=huge]
    \dis^\sU(F)
      \arrow[r, dotted]
      \arrow[d, dotted, "\gamma_F^\sU"]
    & \dis(\hat{\eta}_*(F))\arrow[d, "\gamma_{\hat{\eta}_*(F)}"]
    \\
    X\arrow[r,"\hat{\eta}"]
    & \lim\N_\sU
  \end{tikzcd}
\end{equation}
The \defn{$\sU$-multiscale mapper graph functor} takes cosheaves on $X$ to the pullback of $\gamma\hat{\eta}_*$ along $\hat{\eta}$.
\begin{align*}
  \hat{\gamma}^\sU\defined \hat{\eta}^\star\gamma\hat{\eta}_* : \CSh(X) &\longrightarrow \Toplc/X\\
    F &\longmapsto \big( \dis^\sU(F),\gamma_F^\sU)
\end{align*}
Because $\hat{\gamma}^\sU$ is a composition of right adjoints,
Theorem~\ref{thm:multi_graph} follows from the fact that right adjoints preserve limits.

\begin{theorem}\label{thm:multi_graph}
  Let $X$ be a locally connected topological space.
  Then for any strict refinement $\sU : \op{\P}\to\pOp_X$ of locally finite good open covers,
  \[\hat{\gamma}^\sU \simeq \lim_{p\in \P}\gamma^{\sU(p)}.\]
\end{theorem}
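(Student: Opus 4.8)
The plan is to avoid any space-level recomputation and instead reduce the statement to the precosheaf-level decomposition already obtained for Theorem~\ref{thm:multi_iso}, transporting it across the universal map $\hat{\eta}$. In line with the remark preceding the theorem, all three functors composing $\hat{\gamma}^\sU=\hat{\eta}^\star\gamma\hat{\eta}_*$ are right adjoints: the direct image $\hat{\eta}_*$ is right adjoint to left Kan extension along $\hat{\eta}^{-1}$, the display space functor $\gamma$ is right adjoint to $\lambda$ by Theorem~\ref{thm:dis_adjoint}, and the change of base $\hat{\eta}^\star$ is right adjoint to $\Sigma_{\hat{\eta}}$. The cleanest route, however, is first to swap the outer pullback past the display space. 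Since $X$ is locally connected and, by Lemma~\ref{lem:eta_multi_cts}, $\hat{\eta}:X\to\lim\N_\sU$ is surjective and essential into a poset carrying the specialization topology, Corollary~\ref{cor:adjoint_equiv} applies and gives $\hat{\eta}^\star\gamma\simeq\gamma\hat{\eta}^\dagger$; hence
\[ \hat{\gamma}^\sU=\hat{\eta}^\star\gamma\hat{\eta}_*\simeq\gamma\,\hat{\eta}^\dagger\hat{\eta}_*=\gamma\,\pro{\sM_\sU}. \]
The same corollary, applied to each $\eta_p$ (surjective by Convention~\ref{con:surjective}, essential by Proposition~\ref{prop:eta_open}), gives $\gamma^{\sU(p)}=\eta_p^\star\gamma\eta_*^p\simeq\gamma\,\eta_p^\dagger\eta_*^p$.

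Next I would commute $\gamma$ past the limit. Because $\gamma$ is a right adjoint it preserves all limits, and because each $\eta_p^\dagger\eta_*^p(F)$ is a spatial cosheaf on $X$ (Theorem~\ref{thm:poset_pullback}) the whole diagram lives in $\CShsp(X)$, so
\[ \lim_{p\in\P}\gamma^{\sU(p)}\simeq\lim_{p\in\P}\gamma\,\eta_p^\dagger\eta_*^p\simeq\gamma\Big(\lim_{p\in\P}\eta_p^\dagger\eta_*^p\Big). \]
The entire theorem therefore collapses to the single precosheaf identity $\hat{\eta}^\dagger\hat{\eta}_*\simeq\lim_{p}\eta_p^\dagger\eta_*^p$: substituting it back yields $\lim_p\gamma^{\sU(p)}\simeq\gamma\,\hat{\eta}^\dagger\hat{\eta}_*\simeq\hat{\gamma}^\sU$, which is the claim.

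That remaining identity is exactly the decomposition established in the course of proving Theorem~\ref{thm:multi_iso}: using the limit cone $\{\rho_p:\lim\N_\sU\to\N_\sU(p)\}$ together with the factorization $\eta_p=\rho_p\hat{\eta}$ one has $\eta_*^p=(\rho_p)_*\hat{\eta}_*$, and $\hat{\eta}^\dagger\hat{\eta}_*$ is then assembled as the limit of the right adjoints $\eta_p^\dagger\eta_*^p$. I would simply invoke it here, so that the only genuinely new content of the present theorem is the two adjoint transports across $\hat{\eta}$ and the $\eta_p$ above, both supplied by Corollary~\ref{cor:adjoint_equiv}.

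I expect the main obstacle to be precisely this last identification, since it is where the hypotheses do real work. Unwinding it, $\hat{\eta}^\dagger\hat{\eta}_*(F)(U)=F\big(\Meet\sU_U\big)=F\big(\bigcap_{p}\sU_U^p\big)$ by Lemma~\ref{lem:eta_multi_cts}, whereas $\big(\lim_p\eta_p^\dagger\eta_*^p(F)\big)(U)=\lim_p F(\sU_U^p)$; matching the value of $F$ on the intersection $\bigcap_p\sU_U^p$ with the limit of its values $\lim_p F(\sU_U^p)$ is the step that forces $\P$ to be finite, via finite limits commuting with the filtered colimits that compute cosheaf sections (Theorem~\ref{thm:limit_interchange}). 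Because this delicate point has already been isolated and secured for Theorem~\ref{thm:multi_iso}, the proof of the present theorem stays short and formal, turning entirely on the two applications of Corollary~\ref{cor:adjoint_equiv} and on $\gamma$ preserving limits.
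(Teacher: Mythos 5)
Your proposal takes a genuinely different route from the paper's, and as a proof of the theorem \emph{as stated} it has a scope gap. The paper's entire argument is the sentence immediately preceding the theorem: $\hat{\gamma}^\sU=\hat{\eta}^\star\gamma\hat{\eta}_*$ is a composition of right adjoints ($\hat{\eta}_*$ right adjoint to left Kan extension along $\hat{\eta}^{-1}$, $\gamma$ to $\lambda$ by Theorem~\ref{thm:dis_adjoint}, and $\hat{\eta}^\star$ to $\Sigma_{\hat{\eta}}$), so the decomposition $\hat{\gamma}^\sU\simeq\lim_{p}\gamma^{\sU(p)}$ is claimed to follow from preservation of limits by right adjoints --- the same purely formal move that opens the proof of Theorem~\ref{thm:multi_iso}. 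In particular the paper never transports $\gamma$ across $\hat{\eta}$ via Corollary~\ref{cor:adjoint_equiv}, never uses that $\hat{\eta}$ is surjective or essential, and --- crucially --- never invokes finiteness of $\P$, consistent with the fact that Theorem~\ref{thm:multi_graph}, unlike Theorem~\ref{thm:multi_iso}, carries no hypothesis that $\P$ be finite.

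That is exactly where your argument falls short of the statement: every load-bearing step imports finiteness. Your two applications of Corollary~\ref{cor:adjoint_equiv} need $\hat{\eta}$ surjective and essential; essentialness rests on Lemma~\ref{lem:eta_multi_cts}, whose proof requires $\P$ finite to interchange the union over $S$ with the intersection over $\P$ (and surjectivity of $\hat{\eta}$ onto $\lim\N_\sU$ is asserted rather than proved in the paper --- a consistent family of simplices need not be realized by a point of $X$). The identity $\hat{\eta}^\dagger\hat{\eta}_*\simeq\lim_p\eta_p^\dagger\eta_*^p$ that you borrow is established in Theorem~\ref{thm:multi_iso} only for finite $\P$, and you yourself note that matching $F\big(\bigcap_p\sU^p_U\big)$ with $\lim_p F\big(\sU^p_U\big)$ forces finiteness. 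So your proof yields Theorem~\ref{thm:multi_graph} only under an additional hypothesis the statement does not make, and it inverts the paper's logic: the paper derives the precosheaf-level and space-level statements in parallel from the same adjointness principle, whereas you derive the space-level one from the precosheaf-level one, thereby inheriting its restriction. A further subtlety worth repairing even in the finite case: for $\gamma$ to preserve $\lim_p\eta_p^\dagger\eta_*^p$ as the right adjoint in $\lambda\dashv\gamma$, that limit must be computed in $\CShsp(X)$, but your final paragraph computes it pointwise in $\PCSh(X)$ as $U\mapsto\lim_p F\big(\sU^p_U\big)$; unlike colimits, limits of (spatial) cosheaves are not pointwise in general, so these two objects must be identified before the substitution into $\gamma$ is legitimate.
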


\section{Reeb Filtrations and Future Work}\label{sec:reeb_future}

Let $(L,f)$ be a locally connected space over $X$ and let $h : X\to N$ be a continuous function to a locally connected space $N$:
\begin{equation}
  \begin{tikzcd}
    L \arrow[rr,dotted]
      \arrow[dr,"f"']
    && N.\\
    & X\arrow[ur,"h"'] &
  \end{tikzcd}
\end{equation}
The map $(L,h)\mapsto (L,h\circ f)$ defines a left adjoint $\Sigma_h : \Toplc/X\to \Toplc/N$ to the change of base functor (see Appendix~\ref{sec:cats2}):
\[\adjoint{\Toplc/X}{\Toplc/N.}["\Sigma_h"]["h^\star"]\]
The composition of $\Sigma_h$ with the Reeb functor is equal to the direct image of the Reeb functor along $h$ Reeb cosheaf of the composition $h\circ f$:
\[ \lambda\Sigma_h(L,f) = \lambda(L,h\circ f) = R_{h\circ f} = \pi_0(h\circ f)^{-1} = (\pi_0 f^{-1})h^{-1} = h_*\lambda(L,f).\]

\begin{notation}
  let $h : X\to N$ be a continuous function from $X$ to a locally connected space $N$.
  The direct image of the Reeb functor along $h$ will be denoted
  \[\lambda_h\defined h_*\lambda = \lambda\Sigma_h : \Toplc/X\longrightarrow \CShsp(N)\]
  and the pullback of $\gamma$ along $h$ will be denoted
  \[ \gamma^h\defined h^\star\gamma : \CShsp(N)\longrightarrow \Toplc/X\]

  Composition of $\lambda\dashv\gamma$ with $\Sigma_h\dashv h^\star$ yields an adjunction $\lambda_h\dashv \gamma^h$:
  \begin{equation}
    \begin{tikzcd}
        \Toplc/X
                \arrow[r, shift left=0.62ex, phantom, ""{name=x,above}]
                \arrow[r, shift left=0.85ex, "\Sigma_h"]
      & \Toplc/Z\arrow[l, shift left=0.42ex, phantom, ""{name=y,below}]
                \arrow[r, shift left=0.62ex, phantom, ""{name=xx,above}]
                \arrow[r, shift left=0.85ex, "\lambda"]
                \arrow[l, shift left=0.85ex, "h^\star"]
      & \CSh(N)
                \arrow[l, shift left=0.42ex, phantom, ""{name=yy,below}]
                \arrow[l, shift left=0.85ex,"\gamma"]
        \arrow[from=x, to=y, symbol={\scriptstyle\dashv}]
        \arrow[from=xx, to=yy, symbol={\scriptstyle\dashv}]
      \end{tikzcd}
      \yields
      \adjoint{\Toplc/X}{\CSh(N).}["\lambda_h"]["\gamma^h"]
  \end{equation}
  Recalling the unit $\upsilon^h$ and counit $\chi^h$ of $\Sigma_h\dashv h^\star$, the unit and counit of $\lambda_h\dashv\gamma^h$ are defined
  \begin{align*}
    \mathbf{(Unit)}&\hspace{5ex} \Gamma^h\defined h^\star\Gamma\Sigma_h\bdot \upsilon^h : \id_{\Toplc/X}\Longrightarrow \gamma^h\lambda_h\hspace{15ex}&\\
    \mathbf{(Counit)}&\hspace{5ex} \Lambda^h\defined \Lambda\bdot \lambda\chi^h\gamma : \lambda_h\gamma^h\Longrightarrow\id_{\CSh(N)}.  \hspace{15ex}&
  \end{align*}
  Moreover, if $X$ is locally connected and $h$ is essential then Corollary~\ref{cor:adjoint_equiv} implies that $\Gamma^h$ and $\Lambda^h$ can be equivalently expressed in terms of the unit $\theta^h$ and counit $\epsilon^h$ of $h_*\dashv h^\dagger$:
  \begin{align*}
    \textbf{(Unit)}\hspace{5ex} \Gamma^g &\simeq  \gamma\theta^g\lambda\bdot \gamma \hspace{15ex}\\
    \textbf{(Counit)}\hspace{5ex} \Lambda^g &\simeq \e^g\bdot g_*\Lambda g^\dagger.\hspace{15ex}
  \end{align*}
\end{notation}

\begin{example}[Spatial Pixelization]
  The \emph{(spatial) pixelization}~\cite{botnan20relative} of the Reeb cosheaf $R_f$ by $h$ is given by the spatial inverse image of the direct image along $h$:
  \[ h^*h_*(R_f) = \lambda\gamma^h\lambda_h(L, f).\]
  The unit of $\lambda_h\dashv\gamma^h$ at $R_f$ yields a natural transformation from the Reeb cosheaf of $(L,f)$ to its pixelization by $h$:
  \[ \lambda\Gamma^h_f : R_f\Rightarrow h^*h_*(R_f).\]
\end{example}

\begin{remark}
  The unit of $\lambda_\eta\dashv\gamma^\eta$ yields a canonical map from $(L,f)$ to its $\U$-mapper Reeb graph:
  \[ \Gamma^\U_f :  (L,f)\longuto\gamma^\U(R_f).\]
  Composition with the Reeb functor yields a canonical natural transformation that takes the Reeb cosheaf to its pixelization by $\U$:
  \[ \lambda\Gamma^\U_f : R_f\Longrightarrow R_f^\U.\]
  Theorem~\ref{thm:mapper} therefore implies that composition with the counit $\Lambda_f^\U : R_f^\U\Rightarrow \M_\U$
  yields a canonical natural transformation that takes the Reeb cosheaf of a locally connected space $(L,f)$ over $X$ to its $\U$-mapper Reeb cosheaf:
  \[ \Lambda_f^\U\bdot \lambda\Gamma^\U_f : R_f\Longrightarrow \M_\U(R_f).\]
\end{remark}

\subsection{Reeb Filtrations}\label{sec:reeb_filt}

Let $X$ be a topological space.
The Reeb cosheaf of a locally connected space $(L,f)$ over $X$ associates each open set of $X$ with the connected components of its pre-image.
The display space functor integrates this information as the Reeb graph $\gamma(R_f) = \big(\dis(R_f),\gamma_f\big)$.
The goal of this section is to extend this construction to filtrations of spaces over $X$.

Let $\varphi : \P\to \Toplc/X$ be a filtration of locally connected topological spaces over $X$
and suppose $\L : \P\to\Toplc$ is a filtration of locally conencted spaces such that $\colim \varphi = \big(\colim\L,\check{\varphi}\big)$
where $\check{\varphi} : \colim\L\to X$ is the universal cocone for which Diagram~\eqref{dgm:colimit} commutes for all $p\leq q$.
\begin{equation}\label{dgm:colimit}
  \begin{tikzcd}[column sep=huge]
    \L(p)
      \arrow[dd,"\L\b{p\leq q}"description]
      \arrow[drr,bend left=10,"\varphi(p)"description]
      \arrow[dr, dotted, "\iota^p"description]
    &&\\
    & \colim\L
      \arrow[r,dotted,"\exists !\check{\varphi}"description]
    & X\\
    \L(q)
      \arrow[urr,bend right=10,"\varphi(q)"description]
      \arrow[ur, dotted, "\iota^q"description]
    &&
  \end{tikzcd}
\end{equation}

\begin{notation}
  Let $\ind{R_\varphi} \defined R_{\hat{\varphi}}$ denote the Reeb cosheaf of the universal arrow $\check{\varphi} : \colim \L\to X$.
\end{notation}

The (horizontal) composition of $\varphi$ with the Reeb functor yields a filtration of Reeb cosheaves
\begin{align*}
  R_\varphi\defined \lambda \varphi : \P&\longrightarrow \CShsp(X)\\
    p &\longmapsto \pi_0\varphi(p)^{-1}.
\end{align*}
Because left adjoints preserve colimits the colimit of $R_\varphi = \lambda\varphi$ is equivalent to the Reeb cosheaf of $\colim \varphi = (\colim \L,\check{\varphi})$:
\[ \colim R_\varphi = \colim\lambda\varphi = \lambda\big(\colim\varphi\big) = \ind{R_{\varphi}}.\]

\begin{proposition}\label{prop:reeb_commute}
  Let $h : X\to N$ be a continuous function from $X$ to a locally connected space $N$.
  If $\P$ is directed then
  \[ \gamma^h\lambda_h\big(\colim\varphi\big)\simeq \colim \gamma^h\lambda_h\varphi.\]
\end{proposition}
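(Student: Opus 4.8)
The plan is to exploit the factorization of the adjunction $\lambda_h\dashv\gamma^h$ into $\Sigma_h\dashv h^\star$ and $\lambda\dashv\gamma$, and to reduce every functor in sight to data on $\Set$ and on posets, where filtered colimits are well behaved. Since $\P$ is directed it is filtered, so $\colim\varphi$ is a filtered colimit. First I would dispatch the left-hand factor: $\lambda_h=\lambda\Sigma_h$ is a composite of left adjoints, hence preserves all colimits, giving $\lambda_h(\colim\varphi)\simeq\colim\lambda_h\varphi$ (this is the computation $\colim\lambda\varphi=\lambda(\colim\varphi)$ recorded above, transported along $h$). Thus it suffices to prove that $\gamma^h$ preserves the \emph{filtered} colimit of the spatial cosheaves $\lambda_h\varphi(p)$ on $N$; each of these is a Reeb cosheaf, hence spatial with locally connected display space by Proposition~\ref{prop:reeb_spatial}, and $\colim\lambda_h\varphi\simeq\lambda_h(\colim\varphi)$ is the Reeb cosheaf $R_{h\circ\check{\varphi}}$, which remains in $\CShsp(N)$.

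The crucial simplification is to rewrite $\gamma^h$ as a single pullback. Recalling that the display space is itself a pullback, $\gamma(G)=N\times_{\sE_N(\id_N)}\sE_N(G)$ over $N$, the pasting lemma applied to
\[
\begin{tikzcd}[sep=large]
X\times_N\dis(G)\arrow[r]\arrow[d] & \dis(G)\arrow[r]\arrow[d] & \sE_N(G)\arrow[d]\\
X\arrow[r,"h"] & N\arrow[r,"\Op_N(-)"] & \sE_N(\id_N)
\end{tikzcd}
\]
shows that $\gamma^h(G)=h^\star\gamma(G)=X\times_{\sE_N(\id_N)}\sE_N(G)$, the base change of $\sE_N(G)\to\sE_N(\id_N)$ along $\Op_N(-)\circ h$. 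Hence $\gamma^h$ factors as the total-locale functor $\sE_N(-)$ followed by a single base change, and it is enough to show that each of these two steps preserves the filtered colimit.

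For the total-locale functor, colimits of precosheaves are computed pointwise in $\Set$, so $\El(\colim_p G_p)$ agrees with $\colim_p\El(G_p)$ as posets — the standard fact that the Grothendieck construction commutes with filtered colimits — and the cospecialization topology, being generated by the principal down sets $\down{\beta_U}$, is transported isomorphically; thus $\sE_N$ preserves filtered colimits. For the base change, on underlying sets the pullback of a filtered colimit is the filtered colimit of pullbacks, because finite limits commute with filtered colimits in $\Set$ (Theorem~\ref{thm:limit_interchange}), i.e.\ small colimits in $\Set$ are stable by base change (Definition~\ref{def:stable_base}); what remains is to check that the colimit topology on $\colim_p\gamma^h\lambda_h\varphi(p)$ coincides with the subspace topology it inherits as $\gamma^h\lambda_h(\colim\varphi)$. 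Assembling these two steps with the left-adjoint reduction then yields $\gamma^h\lambda_h(\colim\varphi)\simeq\colim\gamma^h\lambda_h\varphi$.

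The hard part is precisely this last topological compatibility in the base-change step. In $\Top$ the functor $h^\star$ is only a right adjoint, so it does not preserve colimits in general, and the obstacle is that pulling back need not commute with the final (colimit) topology. The resolution I would pursue is to use that the codomain $\sE_N(\id_N)$ and the total locales $\sE_N(G_p)$ carry cospecialization (Alexandroff) topologies determined by their underlying posets, so that the relevant base change is controlled entirely by the specialization order; together with the local connectedness of $N$ (and of $X$), which keeps the display spaces locally connected and the fibers of $\sE_N(G)\to\sE_N(\id_N)$ discrete over each open set, this forces the colimit and subspace topologies to agree on the generating down sets. Verifying this matching of topologies — rather than any categorical bookkeeping — is where the real work lies.
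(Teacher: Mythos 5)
Your opening move coincides with the paper's: $\lambda_h$ is a (composite of) left adjoint(s), so it preserves the colimit, reducing the problem to showing $\gamma^h$ preserves the directed colimit of the Reeb cosheaves $\lambda_h\varphi(p)$. From there, however, your proof is not complete: you explicitly defer the decisive step (``the colimit topology \ldots coincides with the subspace topology \ldots is where the real work lies'') and offer only a heuristic for closing it. That deferred step is the entire content of the proposition beyond formal adjoint bookkeeping, so as written this is a genuine gap, not a finished argument.

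Moreover, the heuristic you sketch is unlikely to close it. Your pasting-lemma rewriting treats $\gamma^h(G)$ as the \Top-level pullback $X\times_{\sE_N(\id_N)}\sE_N(G)$, with points being pairs $(x,\beta_U)$ having matching images. But the defining pullback of the display space is a \emph{locale} pullback: the fiber of $\dis(G)\to N$ over a point $n$ is the costalk $\dis_n(G)=\lim_{S\ni n}G(S)$, a limit over the whole neighborhood poset $\Op_N(n)$, not a single element $\beta_U$. Consequently the question is whether these (a priori infinite) limits interchange with the directed colimit, and your observation that the fibers of $\sE_N(G)\to\sE_N(\id_N)$ are discrete and that the total locales are Alexandroff does not touch this; it controls only the fiber direction of a map you never actually need to pull back pointwise. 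The paper closes exactly this gap by a different decomposition: it factors $\gamma^h=h^\star\gamma$, applies Proposition~\ref{prop:display_commute} to commute $\gamma$ past the colimit --- whose proof requires $N$ \emph{locally finite}, so that each costalk is a \emph{finite} limit and Theorem~\ref{thm:limit_interchange} (directed colimits commute with finite limits in \Set) applies --- and then handles $h^\star$ by stability of colimits under base change in \Set. Note that this finiteness hypothesis, which your argument never supplies a substitute for, is precisely what makes the interchange legitimate (indeed the proposition as stated assumes only local connectedness, while the paper's own proof invokes local finiteness --- the very condition your approach would also need to confront).
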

\begin{proof}
  Because left adjoints preserve colimits and $\lambda_h\dashv\gamma^h$ we have
  \[ \gamma^h\lambda_h\big(\colim\varphi\big) \simeq \gamma^h\big(\colim\lambda_h\varphi\big).\]
  Because $N$ is locally finite and $\P$ is directed Proposition~\ref{prop:display_commute} implies
  \[ \gamma^h\big(\colim\lambda_h\varphi\big) = h^\star \gamma\big(\colim\lambda_h\varphi\big) \simeq h^\star\big(\colim\gamma\lambda_h\varphi\big).\]
  Similarly, because $h^\star$ is defined as the pullback of $h$ along $\gamma$, and because colimits are stable by base change in \Set (see Definition~\ref{def:stable_base}),
  we obtain the desired result from the following interchange of limits:
  \[ \gamma^h\lambda_h\big(\colim\varphi\big)\simeq h^\star\big(\colim\gamma\lambda_h\varphi\big)\simeq \colim h^\star\gamma\lambda_h\varphi = \colim \gamma^h\lambda_h\varphi.\]
\end{proof}

Let $\U : I\to\Op_X$ be a locally finite good open cover of $X$.
Corollary~\ref{cor:reeb_commute} follows directly from Proposition~\ref{prop:reeb_commute}
and the observation that $\gamma^\U R_\varphi = \eta^\star\gamma \eta_*\lambda \varphi = \gamma^\eta\lambda_\eta\varphi$.

\begin{corollary}\label{cor:reeb_commute}
  If $\P$ is directed then $\gamma^\U\big(\ind{R_\varphi}\big)\simeq \colim \gamma^\U R_\varphi$.
\end{corollary}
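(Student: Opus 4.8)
The plan is to specialize Proposition~\ref{prop:reeb_commute} to the canonical map $h = \eta : X\to \N_\U$ and then translate the resulting isomorphism back into the mapper notation, so that the corollary becomes a formal substitution as the preceding remark asserts. First I would record the two unwindings of definitions that make this substitution explicit. On the one hand, using $\gamma^\eta = \eta^\star\gamma$ and $\lambda_\eta = \eta_*\lambda$ together with $R_\varphi = \lambda\varphi$ and $\gamma^\U = \eta^\star\gamma\eta_*$, we have
\[ \gamma^\U R_\varphi = \eta^\star\gamma\eta_*\lambda\varphi = \gamma^\eta\lambda_\eta\varphi, \]
which is exactly the observation stated before the corollary; taking colimits gives $\colim\gamma^\U R_\varphi = \colim\gamma^\eta\lambda_\eta\varphi$. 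On the other hand, since $\ind{R_\varphi} = \colim R_\varphi = \lambda\big(\colim\varphi\big)$ (left adjoints preserve colimits), the same unwinding yields
\[ \gamma^\U\big(\ind{R_\varphi}\big) = \eta^\star\gamma\eta_*\lambda\big(\colim\varphi\big) = \gamma^\eta\lambda_\eta\big(\colim\varphi\big). \]

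With both sides rewritten in terms of $\gamma^\eta$ and $\lambda_\eta$, I would then invoke Proposition~\ref{prop:reeb_commute} with $h = \eta$ and $N = \N_\U$. This requires verifying its hypotheses: $\P$ is directed by assumption, and $\N_\U$ is locally connected because it carries the specialization (Alexandroff) topology, in which each point $\sigma$ has a least open neighborhood $\up{\sigma}$, so the principal up-sets furnish a basis of connected open sets. Under these hypotheses the proposition delivers
\[ \gamma^\eta\lambda_\eta\big(\colim\varphi\big)\simeq \colim\gamma^\eta\lambda_\eta\varphi, \]
and chaining this with the two displayed identities produces the desired isomorphism $\gamma^\U\big(\ind{R_\varphi}\big)\simeq \colim\gamma^\U R_\varphi$.

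Since Proposition~\ref{prop:reeb_commute} already carries the analytic weight, namely the interchange of the colimit with the pullback $\eta^\star$ justified by stability of colimits under base change in \Set, there is no genuine analytic obstacle left at this stage. The only points needing care are bookkeeping: confirming that $\ind{R_\varphi}$ is \emph{literally} $\lambda\big(\colim\varphi\big)$ and not merely isomorphic to it (which holds by the definition $\ind{R_\varphi} = R_{\check{\varphi}}$ and $\lambda(\colim\L,\check{\varphi}) = R_{\check{\varphi}}$), and checking that the locally finite good open cover hypothesis on $\U$ guarantees $\N_\U$ is a legitimate locally connected target, so that $\lambda_\eta = \eta_*\lambda$ indeed lands in $\CShsp(\N_\U)$ and Proposition~\ref{prop:reeb_commute} applies verbatim. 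Once these identifications are in place the corollary follows immediately.
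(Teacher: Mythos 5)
Your proposal is correct and follows exactly the paper's route: the paper proves the corollary by citing Proposition~\ref{prop:reeb_commute} with $h=\eta$ together with the same observation $\gamma^\U R_\varphi = \eta^\star\gamma\eta_*\lambda\varphi = \gamma^\eta\lambda_\eta\varphi$, and your unwinding of $\ind{R_\varphi} = \lambda(\colim\varphi)$ and verification that $\N_\U$ is locally connected simply make explicit the bookkeeping the paper leaves implicit.
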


\subsection{Future Work}\label{sec:future}

Let $\P$ be a finite poset and let $\Q$ be a directed poset, and let $X$ be a locally connected topological space.
Let $\sU : \op{\P}\to\pOp_X$ be a refinement of locally finite good open covers of $X$
and let $\varphi : \Q\to \Toplc/X$ be a filtration of locally connected topological spaces over $X$.
The composition of $R_\varphi = \lambda\varphi$ with the $\sU$-mapper Reeb functor yields a functor
\[ \gamma^\sU R_\varphi : \Q\longrightarrow \Fun{\op{\P},\Toplc/X}.\]
Let $p\in \P$.
Because $\Q$ is directed Corollary~\ref{cor:reeb_commute} implies that
\[ \colim_{q\in Q} \gamma^{\sU(p)} R_{\varphi(q)} \simeq \gamma^{\sU(p)} R_{\check{\varphi}}.\]
Because $\lambda$ is a left adjoint it commutes with colimits so
\[ \colim_{q\in Q} R_{\varphi(q)}^{\sU(p)} \simeq \gamma^{\sU(p)} \ind{R_\varphi}.\]

Because $\P$ is finite and $X$ is locally connected Theorem~\ref{thm:multi_graph} implies
\[ \lim_{p\in \P}\colim_{q\in \Q} \gamma^{\sU(p)} R_{\varphi(q)} \simeq \lim_{p\in \P} \gamma^{\sU(p)} R_{\check{\varphi}}\simeq \hat{\gamma}^\sU \ind{R_\varphi}.\]
Theorem~\ref{thm:mapper} therefore implies that
\[ \lim_{p\in \P}\colim_{q\in \Q} R_{\varphi(q)}^{\sU(q)}
  \simeq \lim_{p\in \P} \big(\ind{R_{\varphi}}\big)^{\sU(q)}
  \simeq \pro{\sM_{\sU}}(\ind{R_\varphi}).\]
That is, the cosections of $\lim\colim R_\varphi^\sU$ can be computed for $U\in\Op_X$ as
\[ \pro{\sM_{\sU}}\big(\ind{R_\varphi}\big)(U) = \bigcup_{q\in Q} \pi_0 \varphi_q^{-1}\Big(\Meet \sU_U\Big)\]

\begin{remark}
  Suppose $\P = \Q$ is a finite directed poset.
  By exponential adjunction, the functor $\gamma^\sU R_\varphi$ is equivalent to a
  $(\op{\P}\times\P)$-indexed filtration of locally connected spaces over $X$
  \[ \gamma^\sU_\varphi : \op{\P}\times\P\longrightarrow \Toplc/X.\]
  Composition with the Reeb functor yields a filtration of spatial cosheaves
  \[  R^\sU_\varphi \defined \lambda\gamma^\sU_\varphi: \op{\P}\times\P\longrightarrow \CShsp(X).\]
  Exploration of the (co)end $\int R^\sU_\varphi$ as a generalized category of elements
  (also known as the Grothendieck construction) is the subject of future work.
  We also invite the reader to consider the following diagram:
  \begin{equation}
    \begin{tikzcd}[column sep=huge]
      \L(p)
        \arrow[dd,"\L\b{p\leq q}"description]
        \arrow[drr,bend left=10,"\varphi_p"description]
        \arrow[dr, dotted, "\iota^p"description]
        \arrow[rrrr, bend left=10, dotted]
      &&
      & & \N_\sU(p)\\
      & \colim\L
        \arrow[r,dotted,"\exists !\check{\varphi}"description]
      & X
        \arrow[r, dotted, "\exists !\hat{\eta}"description]
        \arrow[urr, bend left=15, "\eta_p"description]
        \arrow[drr, bend right=15, "\eta_q"description]
      & \lim \N_\sU
        \arrow[ur,dotted,"\rho_p"description]
        \arrow[dr,dotted,"\rho_q"description]\\
      \L(q)
        \arrow[urr,bend right=10,"\varphi_q"description]
        \arrow[ur,dotted,"\iota^q"description]
        \arrow[rrrr, bend right=10, dotted]
      &&
      & & \N_\sU(q)\arrow[uu, "\N_\sU\b{q\geq p}"description]
    \end{tikzcd}
  \end{equation}
\end{remark}

\section{Omitted Proofs}\label{sec:proofs}

\begin{proof}[Proof of Proposition~\ref{prop:eta_open}]
  Let $\sigma\in\N_\U$.
  We will begin by showing that $\eta^{-1}(\up{\sigma}) = \U_\sigma$.

  If $x\in\U_\sigma$ then $x\in \U(i)$ for all $i\in\sigma$,
  which implies that $\eta(x) \supseteq \sigma$ so $x\in \eta^{-1}(\up{\sigma})$.
  Conversely, if $x\in \eta^{-1}(\up{\sigma})$ then $\eta(x)\in \up{\sigma}$ implies $x\in \U_x\subseteq \U_\sigma$,
  so $\eta^{-1}(\up{\sigma}) = \U_\sigma$ for all $\sigma\in \N_\U$.
  Because $\U$ is a locally finite good open cover, $\eta^{-1}(\up{\sigma}) = \U_\sigma = \bigcap_{i\in\sigma} \U(i)$
  is an intersection of \emph{finitely} many (basic) open sets, so $\eta : X\to \N_\U$ is continuous with
  \begin{align*}
    \eta^{-1} : \Op_{\N_\U} &\longrightarrow \Op_X\\
      S&\longmapsto \U_S = \bigcup_{\sigma\in S} \U_\sigma.
  \end{align*}
  It remains to show that $\eta$ is essential.

  Let $\eta_! : \Op_X\to \Op_{\N_\U}$ be the monotone function defined for $U\in\Op_X$ as
  \[ \eta_!(U) \defined \bigcup_{x\in U}\up{\eta(x)}.\]
  In order to show that $\eta$ is essential it suffices to show that, for all $S\in\Op_{\N_\U}$ and $U\in \Op_X$
  $\eta_!(U)\subseteq S$ if and only if $U\subseteq \eta^{-1}(S)$.

  Suppose $\eta_!(U)\subseteq S$.
  Because $\eta^{-1}$ is monotone, $\eta^{-1}\eta_!(U)\subseteq \eta^{-1}(S)$.
  Therefore, because $x\in \U_x$ for all $x\in U$, it follows that
  \[ U\subseteq \bigcup_{x\in U}\U_x = \U_U = \eta^{-1}\eta_!(U) \subseteq \eta^{-1}((S).\]
  Conversely, suppose $U\subseteq \eta^{-1}(S)$.
  Once again, because $\eta_!$ is monotone,
  \[ \eta_!(U)\subseteq \eta_!\eta^{-1}(S) = \eta_!(\U_S) = \bigcup_{x\in \U_S} \up{\eta(x)}.\]
  If $\eta(x)\in \eta_!(\U_S)$ then there exists some $\sigma\in S$ such that $\eta(x)\in\U_\sigma$,
  which implies $\eta(x)\supseteq \sigma$.
  Because $S$ is up-closed, it follows that $\up{\eta(x)}\subseteq S$, thus
  \[ \eta_!(U) \subseteq \bigcup_{x\in\U_S} \up{\eta(x)} = \bigcup_{\sigma\in S}\up{\sigma} \subseteq S.\]
\end{proof}

\begin{proof}[Proof of Proposition~\ref{prop:display_nerve_iso}]
  By Proposition~\ref{prop:rho_bi},
  \[ \dis(R_{\eta\circ f}) = \bigsqcup_{\sigma\in \N_\U}\lim_{S\ni \sigma} R_f(\U_S) \simeq \bigsqcup_{\sigma\in \N_\U} R_f(\U_\sigma),\]
  so it suffices to show that $\N_\U^f$ is isomorphic to the coproduct.

  For any $B\subseteq \bigsqcup R_f\U$, let $B_0 \defined \big\{\beta\in R_f\big(\U(i)\big)\mid (\beta,i)\in B\big\}$ and $B_1 \defined  \big\{ i \in I\mid (\beta,i)\in B \big\}$.
  Let $\phi : \N_\U^f\to \bigsqcup_{\sigma\in \N_\U} R_f(\U_\sigma)$ be defined $\phi(B) = \Big(\Meet B_0, B_1\Big)$
  and let $\psi : \bigsqcup_{\sigma\in \N_\U} R_f(\U_\sigma)\to \N_\U^f$ be defined
  \[ \psi(\alpha,\sigma) = \bigsqcup_{i\in \sigma}R_f\b{\U_\sigma\subseteq \U(i)}(\alpha).\]
  Then
  \[ \phi\circ\psi(\alpha,\sigma) = \phi\big(\bigsqcup_{i\in \sigma}R_f\b{\U_\sigma\subseteq \U(i)}(\alpha)\big) = \big(\bigcap_{i\in \sigma}R_f\b{\U_\sigma\subseteq \U(i)}(\alpha),\sigma \big) = (\alpha,\sigma).\]
  It remains to show that
  \[ B = \bigsqcup_{i\in B_1} R_f\b{ \U_{B_1}\subseteq \U(i)}\big( \Meet B_0\big).\]
  Because $B\in\N_\U^f$, $\Meet B_0\neq\emptyset$ implies $\Join B_0$ is connected and,
  for all $(\beta_j,j),(\beta_k,k)\in B$,
  \[ R_f\big[\U(j)\subseteq \bigcup_{i\in B_1}\U(i)\big](\beta_j) = R_f\big[\U(k)\subseteq\bigcup_{i\in B_1}\U(i)\big](\beta_k).\]
  Because $R_f$ is a cosheaf, it follows that there exists some $\alpha\in R_f(\U_{B_1})$ such that
  \begin{align*}
    &R_f\b{\U_{B_1}\subseteq \U(j)}(\alpha) = \beta_j\\
    \text{and } &R_f\b{\U_{B_2}\subseteq \U(j)}(\alpha) = \beta_k.
  \end{align*}
  Because $\alpha \supseteq \Meet B_0$ we have that $\alpha = \Meet B_0$, so $\bigsqcup_{i\in B_1}R_f\b{\U_{B_1}\subseteq \U(i)}(\Meet B_0) = B$ as desired.
\end{proof}

\begin{proposition}\label{prop:display_commute}
  Let $N$ be a locally finite topological space and let $\F : \P\to \PCSh(N)$ be a filtration of cosheaves on $N$.
  If $\P$ is directed then
  \[ \gamma\big(\colim \F\big)\simeq \colim\gamma \F.\]
\end{proposition}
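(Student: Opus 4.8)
The plan is to reduce the statement to a fiberwise (costalk-by-costalk) comparison, and then to a single topological check. First I would use that the forgetful functor $\Top/N\to\Top$ creates colimits, so that $\colim\gamma\F$ has underlying space $\colim_{p}\dis\big(\F(p)\big)$ computed in $\Top$, and that $\gamma$ carries the colimit cocone $\F\Rightarrow\Delta_\P(\colim\F)$ in $\PCSh(N)$ to a cocone in $\Top/N$. Its universal property produces a canonical continuous map over $N$,
\[ \Phi : \colim_{p}\gamma\big(\F(p)\big)\longrightarrow \gamma\big(\colim\F\big), \]
and the whole proposition amounts to showing that $\Phi$ is a homeomorphism respecting the projections to $N$ (the latter is automatic, since every map in sight preserves the second coordinate).

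For bijectivity I would compute costalks. Since $N$ is locally finite, every $x\in N$ has a minimal open neighborhood $U_x\defined\bigcap_{U\in\Op_N(x)}U\in\Op_N(x)$, and the structure maps out of $F(U_x)$ identify the costalk $\dis_x(F)=\lim_{U\ni x}F(U)\cong F(U_x)$ (as a discrete set). Because colimits in $\PCSh(N)=\Fun{\Op_N,\Set}$ are computed pointwise and coproducts commute with colimits, both $\dis(\colim\F)$ and $\colim_p\dis(\F(p))$ have underlying set $\coprod_{x\in N}\colim_{p}\F(p)(U_x)$, and under this identification $\Phi$ is the identity. This is exactly Theorem~\ref{thm:limit_interchange} in action: the finite limit $\lim_{U\in\Op_N(x)}$ (finite because $N$ is locally finite) commutes with the filtered colimit $\colim_{p\in\P}$ (filtered because $\P$ is directed, cf.\ Proposition~\ref{prop:filtrant}). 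Both hypotheses on $N$ and $\P$ enter precisely here.

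The hard part will be matching the two topologies, i.e.\ showing $\Phi$ is open; I would do this by proving that the display topology on $\dis(\colim\F)$ is the final topology for the family $\psi_p:\dis(\F(p))\to\dis(\colim\F)$ obtained by applying $\gamma$ to the $p$-th coprojection. The crucial local feature, valid over a locally finite base, is that each point $(\gamma,x)$ of any display space has a \emph{minimal} basic open neighborhood, namely $\cev{\gamma}_{U_x}=\{(\delta,y)\mid y\in U_x,\ F\b{U_y\subseteq U_x}(\delta)=\gamma\}$ (here $y\in U_x$ forces $U_y\subseteq U_x$): indeed $\cev{\gamma}_{U_x}\subseteq\cev{\beta}_U$ whenever $(\gamma,x)\in\cev{\beta}_U$, so every open set containing $(\gamma,x)$ contains $\cev{\gamma}_{U_x}$. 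Hence a set is display-open if and only if it contains the minimal basic neighborhood of each of its points. Now suppose $V\subseteq\dis(\colim\F)$ has $\psi_p^{-1}(V)$ open for all $p$, and take $(\gamma,x)\in V$ and an arbitrary $(\delta,y)\in\cev{\gamma}_{U_x}$, so $(\colim\F)\b{U_y\subseteq U_x}(\delta)=\gamma$. Representing $\gamma$ by some $\gamma_p\in\F(p)(U_x)$ and $\delta$ by some $\delta_r\in\F(r)(U_y)$, I would use directedness and the explicit form of filtered colimits of sets (Proposition~\ref{prop:filtrant}) to pass to a common stage $s\ge p,r$ with representatives $\gamma_s\in\F(s)(U_x)$ and $\delta_s\in\F(s)(U_y)$ satisfying $\F(s)\b{U_y\subseteq U_x}(\delta_s)=\gamma_s$. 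Then $(\gamma_s,x)\in\psi_s^{-1}(V)$, and since $\psi_s^{-1}(V)$ is open it must contain the minimal basic neighborhood $\cev{\gamma_s}_{U_x}$ of $(\gamma_s,x)$, which contains $(\delta_s,y)$; applying $\psi_s$ gives $(\delta,y)\in V$. Thus $\cev{\gamma}_{U_x}\subseteq V$, so $V$ is display-open and $\Phi$ is a homeomorphism.

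The step I expect to be delicate is this last one: verifying that directedness genuinely lets one witness membership in $V$ at a \emph{single} finite stage $s$, since a priori $\gamma$ and $\delta$ are represented at different stages and the compatibility $(\colim\F)\b{U_y\subseteq U_x}(\delta)=\gamma$ only holds after passing to a common refinement. I would also note that the cosheaf hypothesis on $\F$ is not actually used in the homeomorphism argument above; it serves only to guarantee that the filtered colimit $\colim\F$ is again a (spatial) cosheaf, so that the comparison takes place inside $\CShsp(N)$.
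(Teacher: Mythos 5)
Your proposal is correct, and its set-theoretic core coincides exactly with the paper's proof: the paper also decomposes $\dis(\colim\F)$ as a coproduct of costalks and applies the interchange of directed colimits with finite limits (Theorem~\ref{thm:limit_interchange}), with local finiteness of $N$ making each $\Op_N(n)$ finite and directedness of $\P$ making the colimit filtered. Where you go beyond the paper is the topological half: the paper's proof stops at the chain of bijections $\dis_n(\colim\F) = \lim_{S\ni n}\colim_p \F(p) \cong \colim_p \lim_{S\ni n}\F(p)$, which are isomorphisms of \emph{sets}, and simply asserts the homeomorphism; it never compares the initial/display topology on $\dis(\colim\F)$ with the final (colimit) topology on $\colim_p\dis(\F(p))$. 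Your argument via minimal open neighborhoods $U_x$ (which exist exactly because $N$ is locally finite), minimal basic neighborhoods $\cev{\gamma}_{U_x}$, and the passage to a common stage $s$ using the explicit description of filtered colimits (Proposition~\ref{prop:filtrant}) supplies precisely this missing verification, and the "delicate step" you flag is handled correctly: compatibility in the colimit is witnessed at a finite stage because $\P$ is directed. Your closing observation that the cosheaf hypothesis is never used in the homeomorphism argument is also consistent with the paper, whose proof likewise uses only that the $\F(p)$ are precosheaves. In short: same route, but your write-up closes a genuine gap in the paper's own argument.
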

\begin{proof}
  Because $\gamma$ is defined as the projection of the display space onto $X$
  it suffices to show that the display space of $\colim \F$ is homeomorphic to the colimit $\colim \dis \F$.

  Recall that the display space of $\colim \F$ is defined
  \[ \dis\big(\colim \F\big) = \coprod_{n\in N} \lim_{S\ni n} \colim_{p\in\P} \F(p).\]
  Because colimits commute with coproducts it suffices to show that $\dis_n\big(\colim \F\big)\cong\colim \dis_n \F$.

  Let $n\in N$.
  Because $N$ is locally finite, $\Op_N(n)$ is finite.
  Therefore, because $\P$ is directed, we obtain the desired result from
  the fact that directed colimits commute with finite limits
  (see Kashiwara and Schapira~\cite{kashiwara05categories} Chapter 3):
  \[ \dis_n\big(\colim \F\big)
    = \lim_{S\ni n}\Big(\colim_{p\in\P} \F(p)\Big)
    \cong \colim_{p\in\P} \Big(\lim_{S\ni n} \F(p)\Big)
    = \colim \dis_n \F.\]
\end{proof}

Let $\sU : \op{\P}\to\mathbf{pOp}_X$ be a refinement of locally finite good open covers of a topological space $X$.
Let $\lim \N_\sU$ be the topological space endowed with the initial topology associated with the system of canonical projections
\[ \big\{ \rho_t : \lim\N_\sU\to \N_\sU(t)\big\}_{t\in\P}.\]
There is a natural partial order on $\lim\N_\sU$ in which $\sigma\subseteq\tau$ if $\sigma^t\subseteq \tau^t$ for all $t\in\P$.
Lemma~\ref{lem:limit_basis} implies that the initial topology associated with the system of canonical projections
is equivalent to the specialization topology on $\lim\N_\sU$ regarded as a poset.

\begin{lemma}\label{lem:limit_basis}
  The collection of principal up sets $\{\up{\sigma}\mid \sigma\in\lim\N_\sU\}$ forms a basis for $\Op_{\lim\N_\sU}$.
\end{lemma}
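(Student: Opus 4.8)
The plan is to verify directly that $\{\up{\sigma}\mid\sigma\in\lim\N_\sU\}$ is a basis for the initial topology, which requires two things: that each principal up set is open, and that every initial-open set is a union of such up sets. Throughout I would work from the explicit subbasis of the initial topology, namely the sets $\rho_t^{-1}(\up{\alpha})$ for $t\in\P$ and $\alpha\in\N_\sU(t)$, together with the two reformulations $\tau\in\up{\sigma}\iff \tau^t\supseteq\sigma^t$ for all $t\in\P$ and $\rho_t^{-1}(\up{\alpha})=\{\tau\in\lim\N_\sU\mid \tau^t\supseteq\alpha\}$, both immediate from the definition of the order $\sigma\subseteq\tau\iff \sigma^t\subseteq\tau^t$.

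For the generating direction I would argue that every basic open set of the initial topology is already a union of up sets. Fix a finite family $\{\alpha_t\}_{t\in F}$ with $F\subseteq\P$ finite and $\alpha_t\in\N_\sU(t)$, and set $B=\bigcap_{t\in F}\rho_t^{-1}(\up{\alpha_t})$. The key observation is that $B$ is itself up-closed: if $\tau\in B$ and $\upsilon\supseteq\tau$ then $\upsilon^t\supseteq\tau^t\supseteq\alpha_t$ for every $t\in F$, so $\upsilon\in B$; hence $\up{\tau}\subseteq B$ for each $\tau\in B$, and therefore $B=\bigcup_{\tau\in B}\up{\tau}$. Since every initial-open set is a union of such $B$, every initial-open set is a union of principal up sets.

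The substantive step, and the one I expect to be the main obstacle, is showing that each $\up{\sigma}$ is itself open in the initial topology. Here I would first record the identity $\up{\sigma}=\bigcap_{t\in\P}\rho_t^{-1}(\up{\sigma^t})$, which follows from the reformulation above. The point is then to see that this \emph{a priori} infinite intersection is controlled by finitely many coordinates. For $p\leq q$ the transition map $\N_\sU\b{q\geq p}$ is monotone and consistency gives $\tau^p=\N_\sU\b{q\geq p}(\tau^q)$, so $\tau^q\supseteq\sigma^q$ forces $\tau^p=\N_\sU\b{q\geq p}(\tau^q)\supseteq\N_\sU\b{q\geq p}(\sigma^q)=\sigma^p$; that is, the constraint at a higher index implies the constraint at every lower index. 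Consequently the intersection defining $\up{\sigma}$ is determined by a cofinal set of indices, and when $\P$ is finite---as is assumed throughout Section~\ref{sec:multi}---it collapses to a finite intersection $\bigcap_{t\in\P}\rho_t^{-1}(\up{\sigma^t})$ of subbasic open sets, which is therefore open. This is precisely where finiteness of $\P$ is used: for a general infinite $\P$ the family $\{\rho_t^{-1}(\up{\sigma^t})\}$ forms a decreasing cofiltered system whose intersection need not be open, so the identification of the initial topology with the specialization topology genuinely relies on the finiteness hypothesis in force for the multiscale mapper construction.
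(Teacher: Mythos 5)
Your proof is correct, and it is in fact more complete than the paper's own. The two arguments split the work differently. For the ``every initial-open set is a union of up sets'' direction, the paper argues subbasic set by subbasic set: for each $\tau\in\rho_t^{-1}(\up{\sigma^t})$ it explicitly constructs a consistent family $\sigma'\in\lim\N_\sU$ with ${\sigma'}^t=\sigma^t$ and $\sigma'\subseteq\tau$, pushing $\sigma^t$ down along the transition maps for $p\leq t$ and cutting $\tau^p$ down otherwise, so that $\rho_t^{-1}(\up{\sigma^t})=\bigcup_{\sigma'\in S}\up{\sigma'}$ with $S=\{\sigma'\mid{\sigma'}^t=\sigma^t\}$. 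Your observation that every basic open $B=\bigcap_{t\in F}\rho_t^{-1}(\up{\alpha_t})$ is up-closed, hence equal to $\bigcup_{\tau\in B}\up{\tau}$, reaches the same conclusion more cheaply: it handles finite intersections in one stroke and sidesteps the construction of $\sigma'$, whose consistency (and whose very definition at indices incomparable to $t$) the paper asserts rather than verifies. The more significant difference is your second step. The paper's proof stops after the union direction and never shows that the principal up sets are themselves open in the initial topology --- yet this is precisely what is needed for the surrounding claim that the initial and specialization topologies coincide, and it is the only place the size of $\P$ matters. Your argument via monotonicity of $\N_\sU\b{q\geq p}$ and consistency, collapsing $\up{\sigma}=\bigcap_{t\in\P}\rho_t^{-1}(\up{\sigma^t})$ to a finite intersection when $\P$ is finite, supplies this missing half, and your closing caveat is accurate: for infinite $\P$ (say $\P=\NN$ with strictly refining covers) a principal up set is generally not initial-open, so the specialization topology is strictly finer and the identification fails. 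Two small notes: the lemma and the appendix preamble never actually state that $\P$ is finite --- you are importing a hypothesis that the paper only invokes in Theorem~\ref{thm:multi_iso} and in the proof of Lemma~\ref{lem:eta_multi_cts}, so your proof makes explicit an assumption the paper leaves implicit; and once $\P$ is finite the cofinality remark is redundant, since $\bigcap_{t\in\P}$ is a finite intersection outright.
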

\begin{proof}
  Because $\Op_{\N_\sU(t)}$ is generated by the collection $\{\up{\sigma^t}\in\Op_{\N_\sU(t)}\mid \sigma^t\in\N_\sU(t)\}$ for all $t\in\P$,
  the initial topology on $\lim\N_\sU$ is generated by open sets of the form $\rho_t^{-1}(\up{\sigma^t}) = \{\tau\in\lim\N_\sU\mid \sigma^t\subseteq \tau^t\}$.
  We will show that, for all $t\in\P$ and $\sigma^t\in \N_\sU(t)$, there exists a set $S\subseteq \lim\N_\sU$ such that
  \[ \rho_t^{-1}(\up{\sigma^t}) = \bigcup_{\sigma'\in S} \up{\sigma'}.\]

  Let $t\in\P$ and $\sigma^t\in\N_\sU(t)$, and let $S = \{\sigma'\in\lim\N_\sU\mid{\sigma'}^t = \sigma^t\}$.
  If $\tau\in\bigcup_{\sigma'\in S}\up{\sigma'}$ then ${\sigma'}^p\subseteq \tau^p$ for all $p\in\P$.
  In particular, $\sigma^t = {\sigma'}^t \subseteq \tau^t$, so $\tau\in\rho_t^{-1}(\up{\sigma^t})$.
  Conversely, if $\tau\in\rho_t^{-1}(\up{\sigma^t})$ then let $\sigma'$ be defined for $p\in\P$ as
  \[ {\sigma'}^p = \begin{cases} \N_\sU\b{p\leq t}(\sigma^t)&\text{ if } p\leq t\\ \{i\in\tau^p\mid \supp{\sU}\b{t\leq p}(i)\in\sigma^t\} &\text{ otherwise.}\end{cases}\]
  Because $\N_\sU\b{p\leq q}({\sigma'}^q) = {\sigma'}^p$ for all $p\leq q$ in \P we have that $\sigma'\in\lim\N_\sU$.
  For all $p\leq t$, $\tau\in\rho_t^{-1}(\up{\sigma^t})$ implies ${\sigma'}^t = \sigma^t\subseteq \tau^t$, so
  \[ {\sigma'}^p = \N_\sU\b{p\leq t}(\sigma^t)\subseteq \N_\sU\b{p\leq t}(\tau^t) = \tau^p\]
  by the functoriality of $\N_\sU$.
  Otherwise, if $p\not\leq t$ then ${\sigma'}^p\subseteq \tau^p$ by definition, so we may therefore conclude that $\sigma'\in S$,
  thus $\tau\in \bigcup_{\sigma'\in S}\up{\sigma'}$ as desired.
\end{proof}

\end{document}